\documentclass[11pt]{article}

\usepackage[letterpaper, hmargin=1in, top=1in, bottom=1in,
footskip=0.6in]{geometry}
\usepackage[T1]{fontenc}

\usepackage{titlesec}
\titleformat{\subsection}[runin]{\large\bfseries}{\thesubsection.}{.5em}{}[.]
\titlespacing{\subsection}{0pt}{2ex plus .1ex minus .2ex}{.8em}
\titleformat{\subsubsection}[runin]{\normalfont\bfseries}{\thesubsubsection.}{.3em}{}[.]
\titlespacing{\subsubsection}{0pt}{2ex plus .1ex minus .2ex}{.5em}
\titleformat{\paragraph}[runin]{\normalfont\itshape}{\theparagraph.}{.3em}{}[.]
\titlespacing{\paragraph}{0pt}{1ex plus .1ex minus .2ex}{.5em}

\usepackage{amsmath}
\usepackage{amssymb}
\usepackage{amsfonts}
\usepackage{latexsym}
\usepackage{amsthm}
\usepackage{amsxtra}
\usepackage{amscd}
\usepackage{bbm}
\usepackage{bm}
\usepackage{tensor}
\usepackage{microtype}

\usepackage{graphicx,xcolor}
\definecolor{darkred}{rgb}{0.9,0,0.3}
\definecolor{darkblue}{rgb}{0,0.3,0.9}

\definecolor{vdarkred}{rgb}{0.7,0,0.2}
\definecolor{vdarkblue}{rgb}{0,0.2,0.7}

\usepackage[nottoc,notlof,notlot]{tocbibind}
\usepackage{cite}
\usepackage{tikz}
\usetikzlibrary{arrows.meta,positioning,calc}
\usepackage[pdftex,colorlinks,linkcolor=vdarkblue,urlcolor=black,
citecolor=vdarkred]{hyperref}
\usepackage{cleveref}
\hypersetup{
	pdftitle={Free energy and spectral edge of the SYK model},
	pdfauthor={Yukun He}
}

\allowdisplaybreaks
\numberwithin{equation}{section}
\numberwithin{figure}{section}
\titleformat*{\section}{\Large\bfseries}

\renewcommand{\leq}{\leqslant}
\renewcommand{\le}{\leqslant}
\renewcommand{\geq}{\geqslant}

\renewcommand{\epsilon}{\varepsilon}

\theoremstyle{plain}
\newtheorem{theorem}{Theorem}[section]
\newtheorem*{theorem*}{Theorem}
\newtheorem{lemma}[theorem]{Lemma}
\newtheorem*{lemma*}{Lemma}
\newtheorem{corollary}[theorem]{Corollary}
\newtheorem*{corollary*}{Corollary}
\newtheorem{proposition}[theorem]{Proposition}
\newtheorem*{proposition*}{Proposition}

\newtheorem*{conjecture*}{Conjecture}

\theoremstyle{definition}

\newtheorem*{definition*}{Definition}

\newtheorem*{example*}{Example}
\newtheorem{remark}[theorem]{Remark}
\newtheorem*{remark*}{Remark}

\newtheorem*{assumption*}{Assumption}

\newcommand{\bb}{\mathbb}

\newcommand{\E}{\mathbb{E}}
\newcommand{\R}{\mathbb{R}}

\newcommand{\e}{\mathrm{e}}

\newcommand{\ii}{\mathrm{i}}
\newcommand{\dd}{\mathrm{d}}

\newcommand*{\deq}{\mathrel{\vcenter{\baselineskip0.65ex \lineskiplimit0pt
			\hbox{.}\hbox{.}}}=}
\newcommand*{\eqd}{=\mathrel{\vcenter{\baselineskip0.65ex \lineskiplimit0pt
			\hbox{.}\hbox{.}}}}

\newcommand{\qq}[1]{[\![{#1}]\!]}

\newcommand{\p}[1]{({#1})}
\newcommand{\pb}[1]{\bigl({#1}\bigr)}
\newcommand{\pB}[1]{\Bigl({#1}\Bigr)}

\newcommand{\q}[1]{[{#1}]}

\newcommand{\hB}[1]{\Bigl\{{#1}\Bigr\}}

\newcommand{\abs}[1]{\lvert #1 \rvert}
\newcommand{\absb}[1]{\bigl\lvert #1 \bigr\rvert}
\newcommand{\absB}[1]{\Bigl\lvert #1 \Bigr\rvert}

\newcommand{\norm}[1]{\lVert #1 \rVert}

\DeclareMathOperator{\tr}{Tr}

\DeclareMathOperator{\re}{Re}
\DeclareMathOperator{\im}{Im}

\DeclareMathOperator{\ad}{ad}
\DeclareMathOperator{\Ad}{Ad}
\DeclareMathOperator{\sgn}{sgn}

\makeatletter
\newcommand*{\rom}[1]{\expandafter\@slowromancap\romannumeral #1@}
\makeatother

\newcommand{\Pp}{\mathbb P}

\newcommand{\trn}{\operatorname{tr}}

\definecolor{prooflinkblue}{RGB}{0,70,190}
\definecolor{prooflayerblue}{RGB}{100,162,235}
\definecolor{prooflayerpurple}{RGB}{164,96,244}
\definecolor{prooflayergold}{RGB}{192,153,59}
\font\proofdiagramfont=ecrm1095 at 8.2pt
\newcommand{\proofref}[2]{#1 \textcolor{prooflinkblue}{\hyperref[#2]{\ref*{#2}}}}
\tikzset{
	proofnode/.style={
		draw=black,
		rounded corners=1.92pt,
		line width=.736pt,
		align=center,
		minimum width=2.48cm,
		minimum height=.55cm,
		inner xsep=3.2pt,
		inner ysep=2.4pt,
		font=\proofdiagramfont,
		execute at begin node={\hyphenpenalty=10000\relax}
	},
	proofbase/.style={proofnode,fill=white},
	proofbridge/.style={proofnode,fill=prooflayerblue},
	proofassembly/.style={proofnode,fill=prooflayerpurple},
	prooffinal/.style={proofnode,fill=prooflayergold},
	proofarrow/.style={
		-{Stealth[length=1.62mm,width=1.08mm]},
		line width=.42mm,
		line cap=round,
		line join=round
	}
}

\title{\bf \Large Free energy and spectral edge of the SYK model \vspace{0.5em}}

\author{Yukun He\footnote{Shanghai Center for Mathematical Sciences and School of Mathematical Sciences, Fudan University. Email: \href{mailto:heyukun@fudan.edu.cn}{heyukun@fudan.edu.cn}.}\vspace{1em}}

\begin{document}
\maketitle

\begin{abstract}
We introduce a microscopic framework that gives the first rigorous derivation of the Schwinger--Dyson thermodynamics of the Sachdev--Ye--Kitaev model.  For fixed, even interaction order $q\geq 4$, we prove that the annealed
and quenched normalized pressures converge to the Schwinger--Dyson
pressure $p_{{\rm SD},q}(\beta)$, at any fixed inverse temperature $\beta>0$.
The proof is derived from the finite-$N$ Gibbs state, and it contains
three new ingredients: a single-site cavity expansion that keeps the bulk Gibbs state intact,
a finite-dimensional locality estimate yielding label-uniform conditional factorization of the
Euclidean cavity fields, and an exact Majorana-bath representation of the leading
diagrams.  We further determine the asymptotic location of the largest eigenvalue, which answers a question posed by
Feng--Tian--Wei. Consequently, the sample free-energy density converges almost surely to the zero-temperature value along every sequence $\beta\equiv \beta_N\to\infty$.
\end{abstract}

\section{Introduction}

Let $N\geq q\geq2$ be even integers, and we abbreviate
$\qq{N}=\{1,2,\ldots,N\}$.  In this paper, we consider the Hamiltonian
\begin{equation} \label{1.1}
	H
	=\left(\frac{(q-1)!}{2^qN^{q-1}}\right)^{1/2}
	\sum_{A\subset\qq{N},\,|A|=q}J_A\Psi_A\,,
\end{equation}
where the $J_A$ are i.i.d.\ standard Gaussian random variables, and
$\Psi_A\in\bb C^{2^{N/2}\times2^{N/2}}$ are the standard Hermitian Majorana
monomials.  Their precise normalization is fixed at the beginning of Section
\ref{sec3}.  In particular, for $q$-sets $A,B\subset\qq{N}$,
\[
	\Psi_A^2=\mathbbm1\,,\quad
	\Psi_A\Psi_B=(-1)^{|A\cap B|}\Psi_B\Psi_A\,.
\]
This is the (thermodynamically normalized) Sachdev--Ye--Kitaev (SYK) model
\cite{SY93,K15}, a quantum many-body system that has become one of the
most studied models in theoretical physics over the past decade.
Representative theoretical results include the global limiting spectrum and
order bounds on the spectral edge, a central limit theorem for linear
statistics, and concentration of the empirical spectral measure
\cite{FengTianWei1,FengTianWei2,FengTianWei3}, as well as finite-$N$
spectral-moment expansions \cite{GJV18,JV18}, spectral and thermodynamic
analyses \cite{GGV16,GGV17,PR16}, operator-growth bounds
\cite{Lucas20,ChenLucas21}, and a high-temperature free-energy limit
\cite{GPSZ26}.  Much of this interest stems from its surprising connections
among condensed matter physics, quantum chaos, and quantum gravity
\cite{MS16,CAHPSSSST17,CGPS22,BAK16,KS18}.

Our first goal is to determine the SYK free energy at every fixed positive
temperature.  Equivalently, we determine the normalized log partition
function, a fundamental thermodynamic quantity.  The normalized pressure encodes the equilibrium
Gibbs state, its $\beta$-derivative gives the internal-energy density, and its
zero-temperature slope controls the extremal spectrum.  The
Schwinger--Dyson formalism predicts this quantity at large $N$, but a
microscopic derivation without a temperature restriction was previously
unavailable. 

We write $\tr$ for the ordinary matrix trace and
$\trn=2^{-N/2}\tr$ for the normalized trace. Our primary thermodynamic observables are the annealed and quenched free-energy densities
\[
\begin{aligned}
	f_{\rm ann}(\beta)
	\deq-(N\beta)^{-1}\log\E\tr\e^{\beta H}\,, \quad 
	f_{\rm que}(\beta)
	\deq-(N\beta)^{-1}\E\log\tr\e^{\beta H}\,.
\end{aligned}
\]
At each fixed inverse temperature, they are equivalent to the normalized pressures
\begin{equation*}
	p_{\rm ann}(\beta)\deq N^{-1}\log \E\trn \e^{\beta H}\,,
	\quad
	p_{\rm que}(\beta)\deq N^{-1}\E\log\trn \e^{\beta H}\,,
\end{equation*} 
which are the natural quantity for the microscopic calculation and the Schwinger–Dyson formulation.

For every fixed even $q\geq4$ and $\beta>0$, Section \ref{sec2} constructs
the unique normalized positive-Lehmann kernel $G_{\beta,q}$ satisfying
\begin{equation*}
	G_{\beta,q}=\mathcal D(\beta^2G_{\beta,q}^{q-1})\,.
\end{equation*}
Here $\mathcal D$ is the Dyson map on positive-Lehmann kernels.  The
Schwinger--Dyson pressure is
\begin{equation*}
	p_{{\rm SD},q}(\beta)
	=\log Z_{\rm Pf}(\beta^2G_{\beta,q}^{q-1})
	-\frac{q-1}{2q}\beta^2
	\int_0^1G_{\beta,q}(u)^q\,\dd u\,,
\end{equation*}
where $Z_{\rm Pf}$ is the convergent Majorana/Pfaffian partition-function
ratio defined in \eqref{2.17}.

We may now state our main theorem.

\begin{theorem}[All-temperature free-energy limit]
	\label{mainthm2}
	Fix an even integer $q\geq4$.  For every fixed $\beta>0$,
	\begin{equation*}
		\lim_{N\to\infty}p_{\rm ann}(\beta)
		=\lim_{N\to\infty}p_{\rm que}(\beta)
		=p_{{\rm SD},q}(\beta)\,.
	\end{equation*}
Equivalently, the corresponding annealed and quenched free-energy densities
both converge to
\[
-\beta^{-1}\left(p_{{\rm SD},q}(\beta)+\frac12\log2\right)\,.
\]
\end{theorem}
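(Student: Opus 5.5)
We prove the annealed statement first, then transfer it to the quenched pressure through concentration. The annealed part proceeds by differentiating in $\beta$.

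\textbf{Step 1: reduce to the internal energy.} Since $p_{\rm ann}(0)=p_{\rm que}(0)=p_{{\rm SD},q}(0)=0$, it suffices to show that $\partial_\beta p_{\rm ann}(\beta)$ converges, locally uniformly in $\beta$, to $\partial_\beta p_{{\rm SD},q}(\beta)$. By stationarity of the Schwinger--Dyson functional at $G_{\beta,q}$, this derivative should equal $\frac{\beta}{q}\int_0^1 G_{\beta,q}(u)^q\,\dd u$. On the microscopic side, Gaussian integration by parts in the $J_A$ (equivalently, Duhamel's formula) gives
\[
\partial_\beta p_{\rm ann}=\frac{1}{N}\frac{\E\trn(H\e^{\beta H})}{\E\trn\e^{\beta H}}
=\frac{\beta(q-1)!}{2^qN^{q}}\sum_{|A|=q}\int_0^1 \langle \Psi_A(u)\Psi_A(0)\rangle_{\rm ann}\,\dd u,
\]
where $\langle\cdot\rangle_{\rm ann}$ is the annealed Gibbs state and $\Psi_A(u)$ denotes imaginary-time evolution. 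So the key claim is a factorization property: the $q$-point Euclidean correlator $\langle\Psi_A(u)\Psi_A(0)\rangle$ should factor, uniformly in labels, into $\prod_{i\in A}G_N(u)$ (up to the sign convention), where $G_N(u)=\langle\psi_i(u)\psi_i(0)\rangle$. In addition, $G_N$ should converge to $G_{\beta,q}$.

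\textbf{Step 2: single-site cavity.} To obtain both facts, we isolate one Majorana $\psi_1$ and write $H=H^{(1)}+\psi_1 h_1$. Here $h_1$ collects the $(q-1)$-body terms containing index $1$, and $H^{(1)}$ is the bulk Hamiltonian on the remaining sites. Expanding $\e^{\beta H}$ in a Dyson series in $\psi_1h_1$ leaves the bulk Gibbs state of $H^{(1)}$ intact. We then integrate the couplings $J_A$ with $1\in A$ exactly, using Wick's theorem. The cavity field $h_1(u)$ is a sum of $\binom{N-1}{q-1}$ weakly correlated terms. We would show that, conditionally on the bulk, its imaginary-time covariance equals $\beta^2 G_N(u)^{q-1}+o(1)$; this is exactly the locality/factorization estimate. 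Once this holds, the leading diagrams are those of a single Majorana coupled to a Gaussian bath with kernel $\beta^2G^{q-1}$. This bath representation resums to the Pfaffian ratio $Z_{\rm Pf}(\beta^2G^{q-1})$, and the one-site Green's function becomes $\mathcal D(\beta^2G_N^{q-1})+o(1)$.

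\textbf{Step 3: closing the fixed point.} The sequence $G_N$ is a family of normalized positive-Lehmann kernels, so it is precompact. By Step 2, every limit point solves $G=\mathcal D(\beta^2G^{q-1})$, and uniqueness from Section~\ref{sec2} forces $G_N\to G_{\beta,q}$. Substituting the factorization into Step 1 and integrating in $\beta$ then gives $p_{\rm ann}\to p_{{\rm SD},q}$.

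\textbf{Step 4: the quenched pressure.} The map $J\mapsto N^{-1}\log\trn\e^{\beta H}$ is Lipschitz in $J$ with constant $O(\beta N^{-1/2}\cdot N^{-(q-1)/2}\binom{N}{q}^{1/2})=O(\beta N^{-1/2})$. Gaussian concentration therefore gives fluctuations of size $O(N^{-1/2})$. This implies $p_{\rm que}\le p_{\rm ann}$ and $p_{\rm ann}-p_{\rm que}\to0$: the concentration bound gives $\log\E\e^{N(X-\E X)}\le C\beta^2$, so the gap is $O(1/N)$.

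\textbf{Main obstacle.} The hardest step is the uniform conditional factorization in Step 2 at low temperature. High-temperature cluster expansions fail there, so we would instead use finite-dimensional locality of the Gibbs state. This must control the connected bulk correlations among the $q-1$ indices of $h_1$ by $O(N^{-1})$ uniformly in $u\in[0,1]$ and in the labels, using only a priori bounds such as $\|H\|=O(\sqrt N)$ with high probability and positivity of the Lehmann representation.
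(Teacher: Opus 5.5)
Your architecture matches the paper's: pressure derivative, single-site cavity that keeps the bulk Gibbs state intact, Wick pairing of the cavity couplings, Majorana-bath resummation, uniqueness, and concentration. However, two steps fail as stated.

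\textbf{Steps 2--3.} You assert that the cavity-field covariance is $\beta^2G_N(u)^{q-1}+o(1)$ for a deterministic kernel $G_N$. That is, you assume connected correlations of distinct bulk sites vanish. This is a self-averaging statement for the site-averaged one-site kernel under the annealed cavity state $Z_B^{-1}\E_B\tr(\e^B\,\cdot\,)$. Neither norm bounds nor Lehmann positivity gives it. What finite-dimensional locality actually yields is asymptotic \emph{commutation} of distinct-site cocycles (Lemma~\ref{theorem 3.11}, Proposition~\ref{prop3.13}). Together with exchangeability, this gives only a de Finetti-type limit
\[
	\Phi_{\mathbf0}\Bigl(\prod_a V_{i_a}(s_a,t_a)\Bigr)\longrightarrow\int\prod_a 2G\bigl(\ii(s_a-t_a)\bigr)\,\Pi(\dd G)
\]
with a law $\Pi$ that need not be a point mass (Proposition~\ref{prop5.4}). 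Your decorrelation claim is equivalent to $\Pi$ being degenerate. Letting $G_N$ depend on the bulk realization does not help, since the annealed average again produces a mixture. With a mixture, the cavity series converges to $\int Z_{\rm Pf}(\Sigma_G)\,(2\mathcal D(\Sigma_G)(u))^{\eta}(2G(u))^r\,\Pi(\dd G)$, and your Step 3 breaks down. Comparing the isolated site with one bulk site gives only $\widetilde\E_\beta(G^\sharp-G)(u)=0$ under the $Z_{\rm Pf}$-tilted law, which does not make each $G$ a fixed point. The paper adds a second permutation comparison, $f_{S^\sharp,\beta}=f_{S,\beta}$, which yields $\widetilde\E_\beta\bigl((G^\sharp-G)G^h\bigr)=0$. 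Then Lemma~\ref{lemma 2.4}, applied to $(G^\sharp,G_\beta)$, gives $\widetilde\E_\beta\int_0^1(G-G_\beta)(G^h-G_\beta^h)\,\dd u\le0$, hence $G=G_\beta$ for $\Pi$-a.e.\ $G$ (Proposition~\ref{prop5 collapse}). The deterministic factorization you assume is an output of this argument, not an input.

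\textbf{Step 4.} With your constant $L=O(\beta N^{-1/2})$ for $X=N^{-1}\log\trn\e^{\beta H}$, Gaussian concentration gives $\log\E\e^{N(X-\E X)}\le N^2L^2/2=O(\beta^2N)$. This yields only $p_{\rm ann}-p_{\rm que}=O(\beta^2)$, not $O(1/N)$. Fluctuations of order $N^{-1/2}$ are compatible with an $O(1)$ annealed--quenched gap, as in the SK model at low temperature. The missing input is Lemma~\ref{lemma 2.7}. For each $(q-1)$-set $T$, the unitaries $\Psi_{T\cup\{x\}}$, $x\notin T$, pairwise anticommute, so $\sum_{x}|\langle\Psi_{T\cup\{x\}}\rangle|^2\le1$ and $q\sum_{|A|=q}|\langle\Psi_A\rangle|^2\le\binom N{q-1}$. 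This gains a factor $N$ over the trivial count $\binom Nq$, giving $\norm{\nabla\log\trn\e^{\beta H}}^2=O(\beta^2)$ and a gap $O(\beta^2/N)$.
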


\begin{remark}
	Gamarnik–Pernice–Schmidhuber–Zlokapa \cite{GPSZ26} recently proved the existence
	of the free-energy limit for sufficiently small fixed $\beta$ and represented it by a convergent combinatorial
	expansion, but left open the analytic identification of this limit with the Schwinger--Dyson pressure.
	Their proof expands the trace exponential and encodes repeated interaction labels by a sparse
	hypergraph and a pairing diagram. In their range, the tilted expansion is controlled by small,
	essentially tree-like components. Their argument does not directly extend to
	arbitrary fixed $\beta$: its expansion order is proportional to $N$, and outside the high-temperature range large
	components and cycles cannot be discarded by the same estimates.
	
	The absence of a high-temperature restriction in Theorem \ref{mainthm2}
	comes from the closing argument: permutation invariance and
	anti-monotonicity of the Dyson map identify every subsequential one-site
	kernel with $G_{\beta,q}$, without any smallness condition on $\beta$.
\end{remark}

\noindent\textbf{Outline of the framework.}\ Theorem \ref{mainthm2} is deduced from a microscopic finite-dimensional calculation of the
finite-$N$ Gibbs state, whose principal steps are outlined below.

For the calculation below, it is convenient to put
\begin{equation*}
	a=\prod_{j=1}^{q-1}\left(1-\frac jN\right)\,.
\end{equation*}
The proof is organized around an exact finite-$N$ pressure derivative.
For $S\subset \qq{N}$, $|S|=q$,  let $\alpha_S(Y)=\Psi_SY\Psi_S$ and put
\[
f_{S,\beta}(u)
=\frac{\E\tr\pb{\e^{(1-u)\beta H}
	\e^{u\alpha_S(\beta H)}}}
{\E\tr\e^{\beta H}}\,,
\quad
K_{q,\beta}(u)
=\frac1{2^q\binom Nq}\sum_{|S|=q}f_{S,\beta}(u)\,.
\]
Gaussian integration by parts gives
\begin{equation*}
	p_{\rm ann}'(\beta)=\frac{\beta a}{q}
	\int_0^1K_{q,\beta}(u)\,\dd u\,.
\end{equation*}
On the deterministic side, the variational formula for
$p_{{\rm SD},q}$ gives
\begin{equation*}
	p_{{\rm SD},q}'(\beta)=\frac\beta q
	\int_0^1G_{\beta,q}(u)^q\,\dd u\,.
\end{equation*}
Since $a\to1$, the main step is the $q$-site factorization
\begin{equation}
	K_{q,\beta}(u)\longrightarrow G_{\beta,q}(u)^q\,.
	\label{1.8}
\end{equation}

To prove \eqref{1.8}, we isolate the physical site $N$ and write
$\beta H=B+X_0$, where $B$ is the sum of the interactions avoiding $N$, and
$X_0$ is the sum of those containing $N$.  These two parts involve disjoint
families of Gaussian couplings and are therefore independent.  We expand only
$X_0$, keeping the bulk Gibbs state generated by $B$ intact.  The details of
this expansion are given in Section \ref{sec3.2} below.

Gaussian averaging pairs the occurrences of $X_0$, and each pair chooses the
$q-1$ other sites in one interaction.  If two choices overlap, or if one
contains a fixed site already present in the observable, the number of such
choices is smaller by a factor of $N$, so these terms are negligible.  For the
remaining disjoint choices, bringing the two members of each pair together
reduces their Majorana monomials to the identity; the reordering leaves only
fermionic signs and correlations at distinct bulk sites.  The locality
estimates of Sections \ref{sec6} and \ref{sec4} show that these correlations
become products once one conditions on a one-site correlation function $G$.
This allows us to derive factorization from the finite-$N$ Gibbs state.

The fermionic signs in the leading series are exactly those of a quadratic
Majorana bath, so the series can be summed explicitly.  Conditional on $G$,
the bulk site is described by $G$ and the isolated site by
$\mathcal D(\beta^2G^{q-1})$.  Permutation invariance compares the two sites,
first alone and then after adjoining the same $q-1$ bulk sites.  Together with
the anti-monotonicity of $\mathcal D$, these comparisons give
\[
	G=\mathcal D(\beta^2G^{q-1})=G_{\beta,q}\,.
\]
Consequently $f_{S,\beta}(u)\to(2G_{\beta,q}(u))^q$ for a fixed $q$-set $S$,
and permutation invariance gives \eqref{1.8}.

\newpage
\begin{figure}[htbp]
	\centering
	\begin{tikzpicture}[x=.9cm,y=.54cm]
		\node[prooffinal] (thm11) at (0,20.8)
		{\proofref{Theorem}{mainthm2}};
		
		\node[proofassembly] (prop31) at (-3.7,18)
		{\proofref{Proposition}{prop6.4}};
		\node[proofassembly] (thm34) at (0,18)
		{\proofref{Theorem}{cor6.3}};
		\node[proofassembly] (prop33) at (3.7,18)
		{\proofref{Proposition}{prop6.6}};
		
		\node[proofbase] (lem24) at (-3.7,15.2)
		{\proofref{Lemma}{lemma 2.4}};
		\node[proofassembly] (prop35) at (0,15.2)
		{\proofref{Proposition}{prop5 collapse}};
		\node[proofbase] (lem36) at (3.7,15.2)
		{\proofref{Lemma}{lemma 6.1}};
		
		\node[proofassembly] (prop53) at (0,12.4)
		{\proofref{Proposition}{prop5.6}};
		\node[proofbridge] (prop52) at (0,9.6)
		{\proofref{Proposition}{prop5.4}};
		\node[proofbase] (lem51) at (3.7,9.6)
		{\proofref{Lemma}{lemma 5.5}};
		
		\node[proofbridge] (prop42) at (-3.7,4)
		{\proofref{Proposition}{prop3.5}};
		\node[proofbridge] (prop44) at (0,6.8)
		{\proofref{Proposition}{prop3.13}};
		\node[proofbridge] (prop45) at (3.7,4)
		{\proofref{Proposition}{prop3.15}};
		\node[proofbridge] (lem43) at (0,4)
		{\proofref{Lemma}{theorem 3.11}};
		\node[proofbase] (lem41) at (0,1.2)
		{\proofref{Lemma}{lemma 3.2}};
		
		\draw[proofarrow] (prop31.north)
			.. controls +(0,.8) and +(-.65,0) .. (-2.15,19.65)
			.. controls +(.65,0) and +(-.5,-.75)
			.. ($(thm11.south)+(-.45,0)$);
		\draw[proofarrow] (thm34.north) -- (thm11.south);
		\draw[proofarrow] (prop33.north)
			.. controls +(0,.8) and +(.65,0) .. (2.15,19.65)
			.. controls +(-.65,0) and +(.5,-.75)
			.. ($(thm11.south)+(.45,0)$);
		
		\draw[proofarrow] (prop35.north) -- (thm34.south);
		\draw[proofarrow] (lem24.east) -- (prop35.west);
		\draw[proofarrow] (prop53.north) -- (prop35.south);
		\draw[proofarrow] (lem36.west) -- (prop35.east);
		
		\draw[proofarrow] (prop52.north) -- (prop53.south);
		\draw[proofarrow] (lem51.west) -- (prop52.east);
		
		\draw[proofarrow] ($(lem41.north)+(-.45,0)$)
			.. controls +(0,.8) and +(.65,0) .. (-2,2.85)
			.. controls +(-.65,0) and +(.5,-.75) .. (prop42.south);
		\draw[proofarrow] (lem41.north) -- (lem43.south);
		\draw[proofarrow] ($(lem41.north)+(.45,0)$)
			.. controls +(0,.8) and +(-.65,0) .. (2,2.85)
			.. controls +(.65,0) and +(-.5,-.75) .. (prop45.south);
		\draw[proofarrow] (lem43.north) -- (prop44.south);
		\draw[proofarrow] (prop42.north)
			.. controls +(0,2) and +(-1.1,-.8)
			.. ($(prop52.south)+(-.6,0)$);
		\draw[proofarrow] (prop44.north) -- (prop52.south);
		\draw[proofarrow] (prop45.north)
			.. controls +(0,2) and +(1.1,-.8)
			.. ($(prop52.south)+(.6,0)$);
	\end{tikzpicture}
	\caption{Principal proof structure of Theorem \ref{mainthm2}. The colors are inspired by the
		weapon rarity colors in Elden Ring Nightreign.}
	\label{fig:proof-main-edge}
\end{figure}

Our second goal is to determine the asymptotic location of the largest
eigenvalue.  For numerical convenience, set
\begin{equation}
	\mathcal H
	\deq\left(\frac{q2^q}{aN}\right)^{1/2}\hspace{-0.1cm}H
	=\binom Nq^{-1/2}
	\sum_{A\subset\qq{N},\,|A|=q}J_A\Psi_A\,,
	\label{1.9}
\end{equation}
and write $\lambda_1=\lambda_{\max}(\mathcal H)$. 

For $q=2$, the model is exactly solvable, and it is known that
$\lambda_1=4\sqrt{2N}/(3\pi)+o(\sqrt N)$
\cite{FengTianWei1,MS16}.  For fixed even $q\geq4$, no explicit
diagonalization is available.  The Hamiltonian acts on a space of
dimension $2^{N/2}$, but involves only $\binom Nq$ independent random
coefficients, placing the high moments needed to detect the spectral edge
beyond the reach of classical moment methods.  Previously, the rough upper
bound
\[
	\E\lambda_1\leq\sqrt{N\log2}
\]
was obtained in \cite{FengTianWei1}.  Anschuetz--Chen--Kiani--King proved a
high-probability lower bound of order $\sqrt N/q$ and a quantitative
annealed--quenched free-energy comparison, without identifying the common
limiting pressure \cite{ACKK25}.  Moreover, \cite{MS16} gave a formal
path-integral description of the low-energy spectral edge.  The fluctuation
theory and physical significance of the spectral edge were further studied
in \cite{AltlandEtAl24}.  Variational and algorithmic approaches to the
ground-state problem for SYK and related interacting fermionic Hamiltonians
appear in \cite{HTS21,HO22,HSHT23}.  These partial results did not determine
the leading asymptotic location of $\lambda_1$ at fixed even $q\geq4$.

The free-energy theorem has a spectral consequence after a separate
zero-temperature analysis.  Let $g_{0,q}$ denote the unique normalized
positive-Laplace zero-temperature Schwinger--Dyson solution constructed in
Section \ref{sec7}, and define
\begin{equation}
	e_{{\rm SD},q}
	\deq\frac2q\int_0^\infty g_{0,q}(t)^q\,\dd t\,,
	\qquad
	\kappa_{{\rm SD},q}
	\deq\sqrt{q2^q}\,e_{{\rm SD},q}
	=\frac{2^{q/2+1}}{\sqrt q}
	\int_0^\infty g_{0,q}(t)^q\,\dd t\,.
	\label{1.10}
\end{equation}
We may now state our second main result.

\begin{theorem}[Spectral edge]
	\label{mainthm1}
	Fix even $q\geq4$.  Then
	\begin{equation} \label{jjj}
		\lim_{N\to\infty}\frac{\lambda_1}{\sqrt N}
		=\kappa_{{\rm SD},q}
		\quad\mbox{almost surely}\,.
	\end{equation}
	The resulting fixed-$q$ constants satisfy
	\begin{equation} \label{kkk}
		\sqrt{\frac{2q}{q+1}}
		\leq q\kappa_{{\rm SD},q}\leq\sqrt2\,,\qquad
		\lim_{\substack{q\to\infty}}
		q\kappa_{{\rm SD},q}=\sqrt2\,.
	\end{equation}
\end{theorem}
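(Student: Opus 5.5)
The natural route to \eqref{jjj} is through Theorem \ref{mainthm2}, with the zero-temperature limit taken separately. I would sandwich the largest eigenvalue between bounds coming from the pressure. Write $\lambda_{\max}(H)=\sqrt{aN/(q2^q)}\,\lambda_1$. Since the trace is normalized and $\trn\mathbbm 1=1$, for every $\beta>0$
\[
\frac{\beta\lambda_{\max}(H)}{N}-\frac{\log 2}{2}\;\le\;\frac1N\log\trn\e^{\beta H}\;\le\;\frac{\beta\lambda_{\max}(H)}{N}\,.
\]
The map $H\mapsto\lambda_{\max}$ is Lipschitz in the Gaussian vector $(J_A)$ with constant of order $N^{-1/2}\cdot\sqrt{N}^{0}$ after the normalization in \eqref{1.1}. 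Gaussian concentration and Borel--Cantelli then give
\[
\lambda_{\max}(H)/N-\E\lambda_{\max}(H)/N\to0
\]
almost surely. Hence it suffices to show
\[
\lim_N\E\lambda_{\max}(H)/N=\lim_{\beta\to\infty}\beta^{-1}p_{{\rm SD},q}(\beta)\eqd e_\infty
\]
and then to identify $e_\infty=e_{{\rm SD},q}/\sqrt{q2^q}\cdot$(normalization). Dividing the sandwich by $\beta$ and applying Theorem \ref{mainthm2} at fixed $\beta$ gives
\[
\beta^{-1}p_{{\rm SD},q}(\beta)\le\liminf_N\E\lambda_{\max}/N\le\limsup_N\E\lambda_{\max}/N\le\beta^{-1}p_{{\rm SD},q}(\beta)+\frac{\log2}{2\beta}\,.
\]
Letting $\beta\to\infty$ closes the sandwich, provided $\beta^{-1}p_{{\rm SD},q}(\beta)$ converges. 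Its convergence follows from convexity of $p_{{\rm SD},q}$, which holds as a limit of convex functions, together with the uniform bound above.

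The substantive step is the deterministic identification
\[
\lim_{\beta\to\infty}\beta^{-1}p_{{\rm SD},q}(\beta)=\lim_{\beta\to\infty}p_{{\rm SD},q}'(\beta)=\lim_{\beta\to\infty}\frac{\beta}{q}\int_0^1G_{\beta,q}(u)^q\,\dd u\,,
\]
which I would match with $e_{{\rm SD},q}$ times the normalization constant. The plan is to rescale Euclidean time, setting $g_\beta(t)=\beta^{2/q}$-type rescaled $G_{\beta,q}(t/\beta)$ for $t\in[0,\beta]$ in the natural units. This converts the fixed point $G=\mathcal D(\beta^2G^{q-1})$ into the zero-temperature Schwinger--Dyson equation on $[0,\infty)$. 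I would then show that:
\begin{itemize}
\item the positive-Lehmann/Laplace structure gives monotone, uniformly integrable bounds on $g_\beta$ in $\beta$, so that compactness yields subsequential limits;
\item every limit is a normalized positive-Laplace solution, hence equals $g_{0,q}$ by the uniqueness in Section \ref{sec7}, using the same anti-monotonicity of $\mathcal D$ that is used at positive temperature;
\item dominated convergence applies to $\beta\int_0^1G^q$, which becomes $\int_0^\infty g^q$. This needs a uniform tail bound $g_\beta(t)^q\lesssim t^{-1-\delta}$.
\end{itemize}
The tail control is the step I expect to be the main obstacle: the conformal decay $g\sim t^{-2/q}$ gives $g^q\sim t^{-2}$, which is integrable but only marginally uniform. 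I would first try to obtain it from the spectral representation of the Lehmann kernel combined with a Ward-type sum rule, and only if that fails fall back on direct estimates of the fixed-point equation. Tracking the constants through \eqref{1.9} then gives \eqref{jjj}.

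For \eqref{kkk} I would argue at the level of $\lambda_1$. The upper bound $q\kappa\le\sqrt2$ would come from the annealed bound $\log\E\trn\e^{\beta H}\le\beta^2\E\trn H^2/2$ type Gaussian estimates applied to the free energy, optimized over $\beta$. The lower bound $\sqrt{2q/(q+1)}$ would come from a variational (Gibbs) trial inside the Schwinger--Dyson functional, for instance an ansatz $g(t)=(1+ct)^{-2/q}$ substituted into the zero-temperature energy. The limit $q\kappa\to\sqrt2$ then follows by squeezing.
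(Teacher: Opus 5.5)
Your reduction of \eqref{jjj} follows the paper: the entropy sandwich, Theorem \ref{mainthm2} at fixed $\beta$, then $\beta\to\infty$, plus Gaussian concentration and Borel--Cantelli. Two simplifications are available. Even the trivial Lipschitz constant $1$ for $\lambda_1$ suffices at scale $\varepsilon\sqrt N$. And the sandwich alone forces $\lim_\beta p_{{\rm SD},q}(\beta)/\beta$ to exist, so convexity is not needed.

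The genuine gap is \eqref{kkk}. Since $\E\trn H^2=aN/(q2^q)$, the annealed Gaussian bound gives $p_{\rm ann}(\beta)\leq a\beta^2/(2q2^q)$. Optimizing $e_N\leq p_{\rm que}(\beta)/\beta+\log 2/(2\beta)$ over $\beta$ then yields only $\kappa_{{\rm SD},q}\leq\sqrt{\log2}$, i.e.\ $q\kappa_{{\rm SD},q}\leq q\sqrt{\log2}$. This is the Feng--Tian--Wei bound and misses the factor $1/q$ entirely; for $q=4$ it gives about $3.33$ instead of $\sqrt2$. It is equivalent to using only $G_\beta\leq1/2$ in \eqref{6.25}, and no bound of that kind can see the $1/q$ scaling. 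On the other side, a trial such as $(1+ct)^{-2/q}$ in \eqref{6.21} would at best give some other lower bound. Nothing in it would produce exactly $\sqrt{2q/(q+1)}$.

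The paper proves both bounds from an exact identity. Set $\phi_q(x)=2g_{0,q}(x/b_q)$ with $b_q=\sqrt{q/2^{q-1}}$, and $I_q=\int_0^\infty\phi_q^q$, so that $q\kappa_{{\rm SD},q}=\sqrt2\,I_q$. The Dyson equation becomes the real-space equation \eqref{9.42}. Letting $x\downarrow0$ gives $-q\phi_q'(0)=2I_q$. Differentiating, multiplying by $\phi_q'$ and integrating gives $I_q^2=1-\frac q2\Delta_q$, where $\Delta_q=\iint d_\phi(x)d_\phi(y)u_\phi(x+y)\,\dd x\,\dd y\geq0$ with $d_\phi=-\phi_q'$ and $u_\phi=\phi_q^{q-1}$. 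The Laplace representation rewrites $\Delta_q=\E\,r_1r_2/((1-r_1)(1-r_2))$ for the exchangeable ratios $r_i=\mathsf Z_i/\sum_{j\leq q+1}\mathsf Z_j$. Then $r_i+r_j\leq1$ gives $\Delta_q\leq2/(q(q+1))$, hence $q/(q+1)\leq I_q^2\leq1$.

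For \eqref{jjj}, the real work is the identification $\lim_\beta p_{{\rm SD},q}(\beta)/\beta=e_{{\rm SD},q}$, and your plan leaves its key estimate open.
\begin{itemize}
\item The rescaling is simply $g_\beta(t)=G_\beta(t/\beta)$, with no $\beta^{2/q}$ factor. Here $G_\beta(0+)=1/2$, and the rescaled Lehmann measures have the $\beta$-independent second moment $2^{2-q}$, which is what gives tightness. Then $p_{{\rm SD},q}'(\beta)=\frac2q\int_0^{\beta/2}g_\beta^q\,\dd t$.
\item The tail is not marginal: $g_\beta\leq C_qt^{-2/q}$ gives $g_\beta^q\leq C_qt^{-2}$. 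The only issue is uniformity in $\beta$, and it is needed twice: for $\int_0^{\beta/2}g_\beta^q$, and to pass the sine transform $S_\beta$ of $g_\beta^h$ to the limit in the Dyson equation.
\item Proposition \ref{tail} gets this directly. Choose a Matsubara frequency $v$ with $vt\in[1,1+\pi]$. The Lehmann representations of $g_\beta$ and $g_\beta^h$ give $R_\beta(v)\geq c\,t\,g_\beta(t)$ and $S_\beta(v)\geq c_q\,t\,g_\beta(t)^h$. Combining these with $R_\beta=1/(v+S_\beta)\leq1/S_\beta$ yields $g_\beta^q\leq C_qt^{-2}$.
\item To invoke Theorem \ref{thm7.4}, you must check that each subsequential limit lies in its class: $g^h\in L^1$ and the second-moment identity. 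Weak convergence gives the latter only as an inequality; the paper recovers equality from $vS(v)\to2g(0)^h$.
\end{itemize}
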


Theorem \ref{mainthm1} in particular answers the spectral-edge question posed by
Feng--Tian--Wei \cite[Section~1.4, item~1]{FengTianWei1} for fixed $q\geq4$, where
they explicitly noted that proving convergence of $\E\lambda_1/\sqrt N$
appeared very difficult.

\begin{remark}
	(i) For $q=4$, the high-precision solution of the large-$N$
	Schwinger--Dyson equation in the normalization of \cite{ACT23} gives
	\[
		e_0=-0.04063026975834491522143475022673(6)\,.
	\]
  Gaussian symmetry identifies the upper and lower edges.
	In the thermodynamic normalization \eqref{1.1},
	$e_{{\rm SD},4}=|e_0|$.  Since \eqref{1.9} reads
	$\mathcal H=8H/\sqrt{aN}$ when $q=4$, the definition in
	\eqref{1.10} gives
	\begin{equation*}
		\kappa_{{\rm SD},4}=8|e_0|
		\approx0.32504215806675932177147800181384(48)\,.
	\end{equation*}

	(ii) After the first arXiv version of this manuscript appeared, Basu--Kothari--Midha
	\cite{BKM26} prove
	\[
	\big(1-O(q^{-1/2}+q^4N^{-2})\big)
	\frac{\sqrt{2N}}q
	\leq\E\lambda_1
	\leq\frac{\sqrt{2N}}q+O(1)
	\]
	for $4\leq q<\sqrt N/4$.  In particular, their bounds yield
	$\lim_{N\to\infty}q\E\lambda_1/\sqrt N=\sqrt2$ when
	$1\ll q\ll\sqrt N$.  The proof uses a twisted-bosonic representation
	of the trace moments and the Johnson association scheme, and also treats
	sparse SYK models.  At fixed $q\geq4$, their result gives the upper bound
	$\E\lambda_1\leq\sqrt{2N}/q+O_q(1)$, whereas the displayed lower estimate is not guaranteed to be positive.
	The relation \eqref{kkk} implies that \cite{BKM26} is consistent with
	Theorem \ref{mainthm1}, and the two results together determine the
	asymptotic location of $\lambda_1$ for all even $4\leq q\ll\sqrt N$.
\end{remark}

Theorem \ref{mainthm2} is pointwise in $\beta$.  Once the spectral
edge is known, the normalized-trace entropy sandwich determines the
leading quenched free energy along every low-temperature sequence.

\begin{theorem}[Growing inverse temperature]
	\label{mainthm3}
	Fix an even integer $q\geq4$.  Let $\beta_N\to \infty$ with arbitrary speed of growth. We have
	\begin{equation} \label{lll}
	\lim_{N\to \infty} f_{\mathrm{que}}(\beta_N)=-\lim_{N\to \infty}\frac{p_{\rm que}(\beta_N)}{\beta_N}
	= -e_{{\rm SD},q}  	\quad \mbox{and} \quad  \frac1{N\beta_N}\log\trn\e^{\beta_N H}
		\longrightarrow e_{{\rm SD},q}
	\end{equation}
almost surely.
\end{theorem}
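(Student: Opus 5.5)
Theorem \ref{mainthm3} follows from Theorem \ref{mainthm1} through an elementary sandwich for the normalized trace. Write $\lambda_{\max}(H)=\sqrt{aN/(q2^q)}\,\lambda_1$. Theorem \ref{mainthm1} gives $\lambda_{\max}(H)/N\to\kappa_{{\rm SD},q}/\sqrt{q2^q}=e_{{\rm SD},q}$ almost surely, since $a\to1$. The first step is the deterministic bound, valid for every $\beta>0$ and every $\delta>0$,
\[
\lambda_{\max}(H)-\delta N+\frac1\beta\log\trn \mathbbm 1_{[\lambda_{\max}(H)-\delta N,\infty)}(H)
\le \frac1\beta\log\trn\e^{\beta H}\le\lambda_{\max}(H).
\]
The upper bound is immediate. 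The lower bound follows by keeping only the eigenvalues in the top window.

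The main obstacle is to show that the top window carries normalized-trace weight at least $\e^{-c(\delta)N}$ with a constant $c(\delta)$ independent of $\beta$. Then $\frac1{N\beta_N}\log\trn\mathbbm1_{[\cdot]}\ge -c(\delta)/\beta_N\to0$, and the trace sandwich pins the quantity down. I would obtain this from Theorem \ref{mainthm2} at a fixed large $\beta_0$, combined with the zero-temperature limit. First, $p_{{\rm SD},q}(\beta_0)/\beta_0\to e_{{\rm SD},q}$ as $\beta_0\to\infty$; this is the zero-temperature analysis behind Theorem \ref{mainthm1} in Section \ref{sec7}. Second, a concentration estimate: $\log\trn\e^{\beta_0H}$ is a Lipschitz function of the Gaussian couplings with constant $O(\beta_0\sqrt N)$, so it concentrates at scale $o(N)$ with summable tail probabilities, and the quenched limit of Theorem \ref{mainthm2} holds almost surely. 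Now fix $\beta_0$ so large that $p_{{\rm SD},q}(\beta_0)/\beta_0\ge e_{{\rm SD},q}-\delta/4$. Splitting the spectrum at $\lambda_{\max}(H)-\delta N$ gives
\[
\trn\e^{\beta_0H}\le \e^{\beta_0\lambda_{\max}(H)}\,\trn\mathbbm1_{[\lambda_{\max}(H)-\delta N,\infty)}(H)+\e^{\beta_0(\lambda_{\max}(H)-\delta N)}.
\]
Almost surely, for large $N$, the left side is at least $\e^{\beta_0N(e_{{\rm SD},q}-\delta/2)}$ and $\lambda_{\max}(H)\le N(e_{{\rm SD},q}+\delta/4)$. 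Hence the second term on the right is negligible, and the window weight is at least $\tfrac12\e^{-\beta_0 N\delta}$. So $c(\delta)=\beta_0(\delta)\delta+o(1)$ works, and it is independent of $\beta_N$.

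Combining the two steps, $\frac1{N\beta_N}\log\trn\e^{\beta_NH}$ lies almost surely, for large $N$, within $\delta+|\lambda_{\max}(H)/N-e_{{\rm SD},q}|+c(\delta)/\beta_N$ of $e_{{\rm SD},q}$. Letting $N\to\infty$ and then $\delta\to0$ gives the almost sure statement in \eqref{lll}. The statement for $p_{\rm que}(\beta_N)/\beta_N$ follows by taking expectations, justified by dominated convergence. Domination is available because the quantity lies between $\E\lambda_{\max}(H)/N$-type bounds: the upper bound is $\lambda_{\max}(H)/N$, and the lower bound is $\trn(H)/N=0$ by Jensen. Both are uniformly integrable since $\lambda_{\max}(H)\le\|H\|\le C\sum_A|J_A|N^{-(q-1)/2}$ has all moments of order $N$. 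The relation $f_{\rm que}=-p_{\rm que}/\beta-\frac{1}{2\beta}\log2$ then gives the limit for $f_{\rm que}$, because the correction term $\log 2/(2\beta_N)\to0$.
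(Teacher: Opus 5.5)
\textbf{Almost-sure limit.} Your almost-sure argument is correct, but the step you call the main obstacle is not one. The window $[\lambda_{\max}(H)-\delta N,\infty)$ contains the top eigenvalue, so deterministically
\[
\trn\mathbbm 1_{[\lambda_{\max}(H)-\delta N,\infty)}(H)\geq\mathsf L^{-1}=2^{-N/2}.
\]
Hence $c(\delta)=\tfrac12\log2$ works for every $\delta\geq0$, uniformly in $\beta$. With $\delta=0$ your sandwich is exactly the paper's bound \eqref{8.4}, in the form \eqref{9.1}. The almost-sure claim then follows at once from Theorem~\ref{mainthm1}, $a\to1$, and $\log2/(2\beta_N)\to0$. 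This is the paper's proof.

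Your fixed-$\beta_0$ construction is valid but unnecessary here. It uses Theorem~\ref{mainthm2} at fixed $\beta_0$, almost-sure quenched concentration, and Corollary~\ref{cor7.8}. Those ingredients already enter the proof of Theorem~\ref{mainthm1}, and the dimension bound makes them superfluous for this statement.

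\textbf{Expectation step.} This step fails as written. The triangle-inequality bound $\|H\|\leq\varkappa\sum_A|J_A|$ has mean $\varkappa\binom Nq\sqrt{2/\pi}\asymp N^{(q+1)/2}$, not order $N$. After dividing by $N$ it grows like $N^{(q-1)/2}$, so it gives no uniform integrability of $\lambda_{\max}(H)/N$.

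The right input is Proposition~\ref{prop8.1}: $\lambda_1-\E\lambda_1$ is sub-Gaussian with variance proxy $(N-q+1)^{-1}$. Together with the almost-sure limit in Theorem~\ref{mainthm1}, this gives $e_N\to e_{{\rm SD},q}$, which is \eqref{8.6}. Once you have that, you need no dominated convergence at all. Taking expectations in the deterministic sandwich gives
\[
e_N-\frac{\log2}{2\beta_N}\leq\frac{p_{\rm que}(\beta_N)}{\beta_N}\leq e_N,
\]
which is how the paper concludes. Your conversion from $p_{\rm que}$ to $f_{\rm que}$ is then correct.
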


The rest of the paper is organised as follows.  Section \ref{sec2} develops
the deterministic Schwinger--Dyson theory.  Section \ref{sec3} proves Theorem \ref{mainthm2}, using the finite-size estimates of
Section \ref{sec6} and the factorization argument of Section \ref{sec4}.
Section \ref{sec7} studies the zero-temperature Schwinger--Dyson limit, and
Section \ref{sec8} employs it  to prove Theorems \ref{mainthm1} and \ref{mainthm3}. Appendix \ref{sec11} proves the estimate \eqref{kkk}.

\subsection*{Conventions} We write $\mathbb N=\{0,1,2,\ldots\}$ and $\mathbb N_+=\{1,2,\ldots\}$. Unless stated otherwise, all finite-system quantities depend on the fundamental large parameter $N$, and we omit this dependence from our notation. We retain an $N$-label only when objects at different sizes are compared explicitly. In statements involving a joint $N,\beta$-limit, the parameter $\beta$ may depend on $N$ without an additional subscript. Quantities independent of $N$ are called fixed or constant. We use $c$ for a generic small positive constant and $C$ for a generic large positive constant; both may change from line to line. We use the usual big $O$ notation $O(\cdot)$, and if the implicit constant depends on a parameter $\alpha$ we indicate it by writing $O_\alpha(\cdot)$. For integers $k\geq r\geq0$, we use the falling factorial
\[
(k)_r\deq k(k-1)\cdots(k-r+1)=\frac{k!}{(k-r)!}\,,
\quad (k)_0\deq1\,.
\]

\subsection*{Acknowledgment} The author is partially supported by National Key R\&D Program of China No.\,2023YFA1010400 and NSFC No.\,12322121.

\subsection*{Use of LLM} In an earlier version of this manuscript, GPT-5.6
assisted with the development of technical arguments on operator algebra.
That part was rewritten and substantially streamlined by the author, and the resulting self-contained OA treatment appears in Section \ref{sec5.1}.  GPT-5.6 also assisted
with literature search and expository editing of the manuscript. The author
is responsible for the contents.

\section{Preliminaries and the Schwinger--Dyson equation} \label{sec2}

\subsection{Positive Lehmann representations}

For $x\geq0$ and $0\leq\tau\leq1$, we define
\begin{equation}
	k_x(\tau)
	\deq\frac{\cosh(x(1/2-\tau))}{2\cosh(x/2)}\,.
	\label{2.1}
\end{equation}
For a finite positive Borel measure $\mu$ on $[0,\infty)$, we write
\begin{equation}
	G_\mu(\tau)\deq \int_0^\infty k_x(\tau)\,\mu(\dd x)\,.
	\label{lehmann}
\end{equation}
We say that $G$ admits a positive Lehmann representation if $G=G_\mu$ for a finite positive Borel measure $\mu$; in this case, $\mu$ is called the representing measure of $G$. Thus $G_\mu$ denotes a generic positive-Lehmann kernel, whereas $G_\beta=G_{\mu_\beta}$ denotes the particular kernel selected by the Schwinger--Dyson equation. Throughout, $\mu_j\overset{w}{\longrightarrow}\mu$ means weak convergence against every bounded continuous function on $[0,\infty)$, including convergence of the total masses.

Formula \eqref{lehmann} first defines $G_\mu$ on the unit interval. Whenever
Fourier coefficients on the fermionic thermal circle are used, we extend it
antiperiodically by
\[
	G_\mu(\tau+m)=(-1)^mG_\mu(\tau)\,,
	\quad 0<\tau<1\,,\quad m\in\bb Z\,,
\]
and retain the same symbol $G_\mu$. We use the same convention for $k_x$,
every normalized kernel $G$, and $G_\beta$.

We use the same formula for $k_x(z)$, and hence for $G_\mu(z)$, on the
closed strip $0\leq\re z\leq1$. For
$0<\delta<1/2$,
\[
	\abs{k_x(z)}\leq\e^{-\delta x}\,,
	\qquad \delta\leq\re z\leq1-\delta\,.
\]
Indeed,
\[
k_x(z)=\tfrac12\pb{\e^{-xz}+\e^{-x(1-z)}}(1+\e^{-x})^{-1}\,.
\]
The displayed bound and Morera's theorem show that every $G_\mu$ is
holomorphic in the open strip. The normalized class consists of the
functions $G_\mu$ with $\mu$ a probability measure. For a normalized
kernel,
\begin{equation*}
	0<G_\mu(\tau)\leq\tfrac12\,,
	\quad G_\mu(1-\tau)=G_\mu(\tau)\,,
	\quad G_\mu(0+)=G_\mu(1-)=\tfrac12\,.
\end{equation*}
The fermionic frequencies on the unit circle are $
\omega_n=(2n+1)\pi
$, $n\in\bb Z$. Our Fourier convention is
\begin{equation}
	\widehat {f}(n)=\int_0^1\e^{\ii \omega_n\tau}f(\tau)\,\dd\tau\,.
	\label{2.5}
\end{equation}
By a direct integration, we have
\begin{equation}
	\widehat {k}_x(n)=\ii\omega_n(\omega_n^2+x^2)^{-1}\,,
	\quad
	\widehat {G}_\mu(n)
	=\ii\omega_n\int_0^\infty(\omega_n^2+x^2)^{-1}\mu(\dd x)\,.
	\label{2.6}
\end{equation}
This follows by setting $u=\tau-\tfrac12$: the sine part is odd, while
$\cos(\omega_n/2)=0$, $\sin(\omega_n/2)=(-1)^n$, and
$\e^{\ii\omega_n/2}=\ii(-1)^n$.

\subsubsection{Exact closure under the relevant odd power}

Fix the even interaction order $q\geq4$ and put $h=q-1$.  For
$\boldsymbol\varepsilon=(\varepsilon_2,\ldots,\varepsilon_h)
\in\{\pm1\}^{h-1}$, define
\[
	s_{\boldsymbol\varepsilon}(\mathbf x)
	=x_1+\sum_{j=2}^h\varepsilon_jx_j\,,
	\qquad
	w_{\boldsymbol\varepsilon}(\mathbf x)
	=\frac{\cosh\pb{s_{\boldsymbol\varepsilon}(\mathbf x)/2}}
	{2^{2h-2}\prod_{j=1}^h\cosh(x_j/2)}\,.
\]
For a probability measure $\mu$ on $[0,\infty)$, define the positive
measure $\mathcal M_h(\mu)$ by
\begin{equation}
\begin{aligned}
	\int f(y)\,\mathcal M_h(\mu)(\dd y)
	={}&\int\mu^{\otimes h}(\dd\mathbf x)
	\sum_{\boldsymbol\varepsilon\in\{\pm1\}^{h-1}}
	w_{\boldsymbol\varepsilon}(\mathbf x)
	f\pb{|s_{\boldsymbol\varepsilon}(\mathbf x)|}
\end{aligned}
	\label{reflection}
\end{equation}
for every bounded Borel function $f$.

\begin{lemma}
	\label{cube}
	If $\mu$ is a probability measure, then
	\begin{equation}
		G_\mu(\tau)^h=G_{\mathcal M_h(\mu)}(\tau)\,,
		\qquad
		\mathcal M_h(\mu)([0,\infty))=2^{1-h}=2^{2-q}\,.
		\label{2.9}
	\end{equation}
	Moreover, $\mu\mapsto\mathcal M_h(\mu)$ is continuous for weak
	convergence of probability measures.
\end{lemma}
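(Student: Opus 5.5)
The plan is to prove the identity pointwise for the elementary kernels $k_x$ and then integrate against $\mu^{\otimes h}$. Write $a_j=x_j(1/2-\tau)$, so that $k_{x_j}(\tau)=\cosh(a_j)/(2\cosh(x_j/2))$. The key algebraic input is the product-to-sum identity
\[
\prod_{j=1}^h\cosh(a_j)=2^{1-h}\sum_{\boldsymbol\varepsilon\in\{\pm1\}^{h-1}}\cosh\Bigl(a_1+\sum_{j=2}^h\varepsilon_ja_j\Bigr)\,.
\]
I would prove it by induction on $h$, using $\cosh(A)\cosh(b)=\tfrac12(\cosh(A+b)+\cosh(A-b))$. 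Since $a_1+\sum_j\varepsilon_ja_j=s_{\boldsymbol\varepsilon}(\mathbf x)(1/2-\tau)$ and $\cosh$ is even,
\[
\cosh\pb{s_{\boldsymbol\varepsilon}(\mathbf x)(1/2-\tau)}=2\cosh\pb{|s_{\boldsymbol\varepsilon}(\mathbf x)|/2}\,k_{|s_{\boldsymbol\varepsilon}(\mathbf x)|}(\tau)\,.
\]
Collecting the constants gives $2^{-h}\cdot2^{1-h}\cdot2=2^{2-2h}$, hence
\[
\prod_{j=1}^hk_{x_j}(\tau)=\sum_{\boldsymbol\varepsilon}w_{\boldsymbol\varepsilon}(\mathbf x)\,k_{|s_{\boldsymbol\varepsilon}(\mathbf x)|}(\tau)\,,\qquad 0\leq\tau\leq1\,.
\]
This holds on the closed interval, including $\tau=0$ and $\tau=1$.

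Next I would integrate this identity against $\mu^{\otimes h}(\dd\mathbf x)$. All terms are nonnegative, so Tonelli applies. The left side becomes $G_\mu(\tau)^h$. By definition \eqref{reflection}, applied to the bounded Borel function $y\mapsto k_y(\tau)$ (recall $0<k_y\leq1/2$), the right side is $\int k_y(\tau)\,\mathcal M_h(\mu)(\dd y)=G_{\mathcal M_h(\mu)}(\tau)$.

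For the total mass I would take $f\equiv1$ in \eqref{reflection}. A cleaner route is to evaluate the pointwise identity at $\tau=0$, where $k_x(0)=1/2$ for every $x$. The left side is then $2^{-h}$ and the right side is $\tfrac12\mathcal M_h(\mu)([0,\infty))$. Integrating over the probability measure $\mu^{\otimes h}$ gives mass $2^{1-h}=2^{2-q}$.

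For weak continuity, fix a bounded continuous $f$ and set $F(\mathbf x)=\sum_{\boldsymbol\varepsilon}w_{\boldsymbol\varepsilon}(\mathbf x)f(|s_{\boldsymbol\varepsilon}(\mathbf x)|)$. This function is clearly continuous. It is also bounded: iterating $\cosh(A\pm b)\leq2\cosh A\cosh b$ gives $\cosh(s_{\boldsymbol\varepsilon}/2)\leq2^{h-1}\prod_j\cosh(x_j/2)$, so $|F|\leq2^{h-1}\cdot2^{h-1}\cdot2^{2-2h}\|f\|_\infty=\|f\|_\infty$. If $\mu_n\overset{w}{\longrightarrow}\mu$ on the separable metric space $[0,\infty)$, then $\mu_n^{\otimes h}\overset{w}{\longrightarrow}\mu^{\otimes h}$ (standard, e.g.\ Billingsley). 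Hence $\int F\,\dd\mu_n^{\otimes h}\to\int F\,\dd\mu^{\otimes h}$, and the total masses agree by the mass computation.

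The only genuinely nontrivial point is the uniform boundedness of the weights $w_{\boldsymbol\varepsilon}$. Without it, continuity would fail because of mass escaping to infinity. The hyperbolic-cosine inequality above settles this, so I expect no real obstacle.
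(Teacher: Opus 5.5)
Your proposal is correct and follows the paper's proof. Like the paper, you use the product-to-sum identity for $\cosh$, obtain the total mass by evaluating at $a_j=x_j/2$ (your $\tau=0$), and get continuity from the bounded continuous weights together with weak convergence of $\mu_n^{\otimes h}$. One simplification: nonnegativity and the pointwise identity $\sum_{\boldsymbol\varepsilon}w_{\boldsymbol\varepsilon}(\mathbf x)=2^{1-h}$ already bound the weights uniformly, so the iterated $\cosh$ inequality is not needed.
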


\begin{proof}
	The elementary identity
	\[
		\prod_{j=1}^h\cosh(a_j)
		=2^{1-h}\sum_{\boldsymbol\varepsilon\in\{\pm1\}^{h-1}}
		\cosh\pB{a_1+\sum_{j=2}^h\varepsilon_ja_j}
	\]
	and \eqref{2.1} give the first identity in \eqref{2.9}.  At
	$a_j=x_j/2$, the same identity shows that the weights in
	\eqref{reflection} sum to $2^{1-h}$.  Their boundedness and continuity
	give the last assertion.
\end{proof}

\subsection{The Dyson map preserves positive Lehmann representations}

For a finite positive measure $\rho$ on $[0,\infty)$ and $\omega>0$, put
\begin{equation*}
	s_\rho(\omega)
	\deq \omega\int_0^\infty\frac{\rho(\dd x)}{\omega^2+x^2}\,.
\end{equation*}

\begin{lemma}
	\label{lemma 2.2}
	There is a unique probability measure $\mathcal D_{\rm m}(\rho)$ on
	$[0,\infty)$ such that, for every $\omega>0$,
	\begin{equation}
		\frac{1}{\omega+s_\rho(\omega)}
		=\omega\int_0^\infty
		\frac{\mathcal D_{\rm m}(\rho)(\dd x)}{\omega^2+x^2}\,.
		\label{dyson2}
	\end{equation}
	Its second moment is
	\[
		\int_0^\infty x^2\mathcal D_{\rm m}(\rho)(\dd x)
		=\rho([0,\infty))\,.
	\]
\end{lemma}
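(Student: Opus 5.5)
We pass to the complex variable $z=-\ii\omega$ (equivalently work with Stieltjes/Cauchy transforms on the imaginary axis) and treat \eqref{dyson2} as a statement about Nevanlinna (Pick) functions. The plan is to show that the left-hand side is the Stieltjes transform of a symmetric probability measure. Uniqueness and the moment identity then follow from the inversion theorem and large-$\omega$ asymptotics.

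\textbf{Step 1: a symmetric measure and its Stieltjes transform.} Let $\tilde\rho$ be the symmetrization of $\rho$ on $\R$, namely $\tilde\rho(E)=\tfrac12(\rho(E)+\rho(-E))$, so that mass at $0$ is kept whole. For $z\in\bb C^+$ put
\[
m_{\tilde\rho}(z)=\int_{\R}\frac{\tilde\rho(\dd x)}{x-z}.
\]
On the imaginary axis $z=\ii\omega$ the symmetry gives
\[
m_{\tilde\rho}(\ii\omega)=\int\frac{x+\ii\omega}{x^2+\omega^2}\,\tilde\rho(\dd x)=\ii s_\rho(\omega).
\]
Now set $F(z)=-z-m_{\tilde\rho}(z)$. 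Since $\im m_{\tilde\rho}>0$, we get $\im F(z)\le -\im z<0$ on $\bb C^+$. Hence $R(z)\deq 1/F(z)$ maps $\bb C^+$ into $\bb C^+$. Moreover $zR(z)\to-1$ as $z=\ii y$, $y\to\infty$, because $m_{\tilde\rho}(\ii y)=O(1/y)$.

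\textbf{Step 2: representation and symmetry.} By the Nevanlinna representation theorem, a Pick function with $\ii yR(\ii y)\to-1$ is the Stieltjes transform of a unique probability measure $\nu$ on $\R$, so $R=m_\nu$. The map $x\mapsto -x$ leaves $\tilde\rho$ invariant, which gives $m_{\tilde\rho}(-\bar z)=-\overline{m_{\tilde\rho}(z)}$. The same relation then holds for $R$, and by uniqueness $\nu$ is symmetric. Define $\mathcal D_{\rm m}(\rho)$ as the pushforward of $\nu$ under $x\mapsto|x|$. At $z=\ii\omega$ we then have
\[
R(\ii\omega)=\frac{1}{-\ii\omega-\ii s_\rho(\omega)}=\frac{\ii}{\omega+s_\rho(\omega)}.
\]
On the other hand, symmetry of $\nu$ gives $m_\nu(\ii\omega)=\ii\omega\int(\omega^2+x^2)^{-1}\nu(\dd x)$. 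Comparing the two expressions yields \eqref{dyson2}.

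\textbf{Step 3: uniqueness.} Suppose a probability measure $\sigma$ on $[0,\infty)$ satisfies \eqref{dyson2}. Then its symmetrization $\tilde\sigma$ has $m_{\tilde\sigma}(\ii\omega)=R(\ii\omega)$ for all $\omega>0$. Both functions are holomorphic on $\bb C^+$, so by the identity theorem $m_{\tilde\sigma}=R$ on all of $\bb C^+$. Stieltjes inversion then gives $\tilde\sigma=\nu$, and therefore $\sigma=\mathcal D_{\rm m}(\rho)$.

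\textbf{Step 4: second moment (main obstacle).} Expanding at large $\omega$ gives
\[
\frac{1}{\omega+s_\rho(\omega)}=\frac1\omega-\frac{\rho_0}{\omega^3}+o(\omega^{-3}),
\]
where $\rho_0=\rho([0,\infty))$ and we used $\omega s_\rho(\omega)\to\rho_0$ by dominated convergence. Therefore
\[
\omega^2\Bigl(1-\omega^2\int\frac{\mu(\dd x)}{\omega^2+x^2}\Bigr)=\omega^2\int\frac{x^2\,\mu(\dd x)}{\omega^2+x^2}\longrightarrow\rho_0,
\]
with $\mu=\mathcal D_{\rm m}(\rho)$. The integrand $\omega^2x^2/(\omega^2+x^2)$ increases to $x^2$ as $\omega\to\infty$, so monotone convergence gives $\int x^2\mu(\dd x)=\rho_0$, and in particular the second moment is finite. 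The delicate point is justifying the $o(\omega^{-3})$ error without assuming moments of $\rho$. It suffices to have $\omega s_\rho(\omega)=\rho_0+o(1)$, which holds by dominated convergence alone, so no moment assumption on $\rho$ is needed.
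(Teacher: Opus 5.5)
Your proof is correct, but it obtains existence by a different mechanism from the paper's.

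\textbf{Your route.} You symmetrize $\rho$ to a measure on $\mathbb{R}$ and show that $R(z)=(-z-m_{\tilde\rho}(z))^{-1}$ is a Pick function with $\ii yR(\ii y)\to-1$. The Nevanlinna representation theorem then produces a probability measure $\nu$, and you deduce its symmetry from uniqueness.

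\textbf{The paper's route.} The paper stays on the half-line in the variable $x=\omega^2$, where the target is the Stieltjes function $F_\rho(x)=\bigl(x(1+S_\rho(x))\bigr)^{-1}$. It proves that $F_\rho$ is the Stieltjes transform of a probability measure by an explicit construction:
\begin{itemize}
\item for atomic $\rho$ it writes down the arrow-type matrix $A_M=B_M^*B_M\geq0$, whose $(0,0)$ resolvent entry is $F_\rho$ by a Schur complement;
\item it reads off the first moment exactly as $\norm{B_M\mathbf e_0}^2$;
\item it reaches general $\rho$ by weak approximation with atomic measures, tightness from the common first moment, and uniqueness of the Stieltjes transform.
\end{itemize}

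\textbf{Shared steps.} Your uniqueness step is essentially the paper's. So is your second-moment step, which is in fact an exact identity,
\[
\omega^2\int\frac{x^2\,\mu(\dd x)}{\omega^2+x^2}=\frac{\omega s_\rho(\omega)}{1+s_\rho(\omega)/\omega}\,,
\]
so no $o(\omega^{-3})$ error term actually needs to be tracked.

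\textbf{What each buys.} Your argument is shorter and handles arbitrary finite $\rho$ in one stroke, with no approximation or tightness step. The price is using the Nevanlinna representation theorem as a black box. The paper's argument is self-contained, needing only Prokhorov and Stieltjes uniqueness. Its finite-dimensional realization is also the same Schur-complement structure as the quadratic Majorana bath $\mathsf A_\Sigma$ of Section~\ref{sec3.3}, where $[(\omega\mathbbm1-\mathsf A_\Sigma)^{-1}]_{00}=(\omega+s_\rho(\omega))^{-1}$. So the lemma's proof already anticipates the bath representation of the leading diagrams.

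\textbf{A small slip.} The claim $\im m_{\tilde\rho}>0$ fails when $\rho=0$. Only $\im m_{\tilde\rho}\geq0$ is needed, and in that case $R(z)=-1/z$ yields $\delta_0$, matching the paper's convention.
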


\begin{proof}
	Let $\bar\rho$ be the pushforward of $\rho$ under $r\mapsto r^2$, and put
	\[
		S_\rho(x)=\int_0^\infty\frac{\bar\rho(\dd y)}{x+y}\,,
		\quad F_\rho(x)=\pb{x(1+S_\rho(x))}^{-1}\,.
	\]
	If $\rho=0$, then $F_\rho(x)=x^{-1}$ and the conclusions hold with
	$\mathcal D_{\rm m}(\rho)=\delta_0$. We henceforth assume that $\rho$
	has positive mass.
	We first prove the assertion when
	\[
		\bar\rho=\sum_{\alpha=1}^Ms_\alpha\delta_{y_\alpha}\,,
		\quad s_\alpha>0,\quad y_\alpha\geq0\,.
	\]
	Define the rectangular matrix $B_M:\bb C^{M+1}\to\bb C^M$ by
	\[
		\q{B_M(\zeta,f)}_\alpha
		=\sqrt{s_\alpha}\,\zeta-\sqrt{y_\alpha}\,f_\alpha\,,
	\]
	put $A_M=B_M^*B_M$, and let $\mathbf e_0=(1,0,\ldots,0)$. Denote by
	$\rho_M^\sharp$ the spectral measure of the nonnegative matrix $A_M$ at
	$\mathbf e_0$, characterized by
	$\langle\mathbf e_0,f(A_M)\mathbf e_0\rangle
	=\int f(y)\rho_M^\sharp(\dd y)$ for every continuous function $f$.

	Fix $x>0$ and write $(\zeta,f)=(x+A_M)^{-1}\mathbf e_0$. The last $M$
	rows of the resolvent equation give
	$
		(x+y_\alpha)f_\alpha
		=\sqrt{s_\alpha y_\alpha}\,\zeta.
	$
	Substitution in the first row gives
	\[
		1=x\zeta\bigg(1+\sum_{\alpha=1}^M
		\frac{s_\alpha}{x+y_\alpha}\bigg)\,.
	\]
	Consequently,
	\[
		\int_0^\infty\frac{\rho_M^\sharp(\dd y)}{x+y}
		=\langle \mathbf e_0,(x+A_M)^{-1}\mathbf e_0\rangle
		=F_\rho(x)\,.
	\]
	The first moment is equally direct:
	\[
		\int_0^\infty y\,\rho_M^\sharp(\dd y)
		=\langle \mathbf e_0,A_M\mathbf e_0\rangle
		=\norm{B_M\mathbf e_0}^2
		=\sum_{\alpha=1}^Ms_\alpha\,.
	\]

	For a general finite measure $\bar\rho$, choose finite atomic measures
	$\bar\rho_j$ of the same mass with
	$\bar\rho_j\overset{w}{\longrightarrow}\bar\rho$. The corresponding
	probability measures $\rho_j^\sharp$ have the common first moment
	$m=\rho([0,\infty))$, and are therefore tight. Every subsequential
	limit has Stieltjes transform $F_\rho$, because
	$\int(x+y)^{-1}\bar\rho_j(\dd y)\to S_\rho(x)$ for every $x>0$.
	Uniqueness of the Stieltjes transform identifies all subsequential limits;
	denote their common limit by $\rho^\sharp$.

	It remains to retain the exact first moment in the limit. From
	the Stieltjes identity and monotone convergence,
	\[
		\lim_{x\to\infty}x\pb{1-xF_\rho(x)}
		=\lim_{x\to\infty}\int_0^\infty
		\frac{xy}{x+y}\rho^\sharp(\dd y)
		=\int_0^\infty y\,\rho^\sharp(\dd y)\,.
	\]
	On the other hand, the definition of $F_\rho$ gives
	\[
		x\pb{1-xF_\rho(x)}
		=\frac{xS_\rho(x)}{1+S_\rho(x)}
		\longrightarrow\rho([0,\infty))\,.
	\]
	Thus $\rho^\sharp$ has first moment $\rho([0,\infty))$. Define
	$\mathcal D_{\rm m}(\rho)$ as its pushforward under $y\mapsto\sqrt y$.
	Taking $x=\omega^2$ in the Stieltjes identity proves \eqref{dyson2},
	and the first-moment identity gives the asserted second moment. The same
	Stieltjes identity proves uniqueness.
\end{proof}
Thus the Dyson reciprocal of a self-energy with a positive Lehmann representation again has a normalized positive Lehmann representation. Since the Matsubara values alone do not imply a positive Lehmann representation, we use the measure-valued construction.

The Lehmann transform $\rho\mapsto G_\rho$ is injective on finite measures. To see this, note that $G_\rho$ is holomorphic in the strip $0<\re z<1$; on the line $z=\tfrac{1}{2}+\ii t$,
\begin{equation*}
	G_\rho\pb{\tfrac12+\ii t}
	=\tfrac12\int_0^\infty\cos(tx)\cosh(x/2)^{-1}\rho(\dd x)\,.
\end{equation*}
Suppose two Lehmann kernels agree on $(0,1)$. By the identity theorem, their cosine transforms on the central line agree. The uniqueness of the Fourier transform then gives equality of the finite weighted measures $(2\cosh(x/2))^{-1}\rho(\dd x)$. On every compact interval, we multiply by $2\cosh(x/2)$ to recover the original measures. Letting the interval increase to $[0,\infty)$ proves their equality. We may now identify a kernel of the form \eqref{lehmann} with its representing positive measure. For $\Sigma=G_\rho$ we use the induced kernel notation
\begin{equation}
	\mathcal D(\Sigma)\deq G_{\mathcal D_{\rm m}(\rho)}\,.
	\label{2.16}
\end{equation}
Throughout the paper, we use $G=\mathcal D(\Sigma)$ in this sense.

For later use, if $\Sigma=G_\rho$ with $\rho$ finite, define the normalized Majorana/Pfaffian partition-function ratio by
\begin{equation}
	Z_{\rm Pf}(\Sigma)
	\deq \prod_{n\geq0}\bigg(1+\frac{s_\rho(\omega_n)}{\omega_n}\bigg)\,,
	\quad \omega_n=(2n+1)\pi\,.
	\label{2.17}
\end{equation}
\begin{lemma}
	\label{lemma 2.3}
	Let $\rho_j\overset{w}{\longrightarrow}\rho$ be finite positive measures with masses bounded by a fixed constant $M_0<\infty$. Then
	\[
	\mathcal D_{\rm m}(\rho_j)\overset{w}{\longrightarrow}\mathcal D_{\rm m}(\rho)\,,
	\quad Z_{\rm Pf}(G_{\rho_j})\longrightarrow Z_{\rm Pf}(G_\rho)\,,
	\quad 1\leq Z_{\rm Pf}(G_\rho)\leq \e^{\rho([0,\infty))/8}\,.
	\]
	The corresponding kernels converge pointwise on $[0,1]$.
	On each bounded-mass class, the Dyson output and $Z_{\rm Pf}$ maps are
	Borel.
\end{lemma}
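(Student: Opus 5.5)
The proof works at the level of Stieltjes transforms, using the construction in the proof of Lemma \ref{lemma 2.2}. With $\bar\rho_j,\bar\rho$ the pushforwards under $r\mapsto r^2$, for each fixed $x>0$ the function $y\mapsto (x+y)^{-1}$ is bounded and continuous on $[0,\infty)$. Hence $S_{\rho_j}(x)\to S_\rho(x)$, and so $F_{\rho_j}(x)\to F_\rho(x)$.

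\emph{Convergence of $\mathcal D_{\rm m}(\rho_j)$.} The measures $\rho_j^\sharp$ are probability measures with first moments $\rho_j([0,\infty))\leq M_0$. They are therefore tight, and every subsequential weak limit is a probability measure. Its Stieltjes transform is the pointwise limit $F_\rho$, because $y\mapsto (x+y)^{-1}$ is bounded and continuous. Uniqueness of the Stieltjes transform gives $\rho_j^\sharp\overset{w}{\longrightarrow}\rho^\sharp$, and pushing forward under the continuous map $y\mapsto\sqrt y$ gives $\mathcal D_{\rm m}(\rho_j)\overset{w}{\longrightarrow}\mathcal D_{\rm m}(\rho)$. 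For each fixed $\tau\in[0,1]$, the function $x\mapsto k_x(\tau)$ is bounded by $1/2$ and continuous in $x$, so the kernels $G_{\mathcal D_{\rm m}(\rho_j)}(\tau)$ converge pointwise. The same holds for $G_{\rho_j}$, since the total masses converge.

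\emph{Bounds and convergence of $Z_{\rm Pf}$.} Each factor satisfies $s_\rho(\omega_n)/\omega_n=\int(\omega_n^2+x^2)^{-1}\rho(\dd x)\in[0,\rho([0,\infty))/\omega_n^2]$. Hence $Z_{\rm Pf}\geq1$, and by $\log(1+t)\leq t$,
\[
\log Z_{\rm Pf}(G_\rho)\leq \rho([0,\infty))\sum_{n\geq0}\omega_n^{-2}=\rho([0,\infty))\cdot\frac{1}{\pi^2}\cdot\frac{\pi^2}{8}=\frac{\rho([0,\infty))}{8}\,.
\]
This also shows the product converges. For the convergence of $Z_{\rm Pf}(G_{\rho_j})$, each factor converges because $x\mapsto(\omega_n^2+x^2)^{-1}$ is bounded and continuous. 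The tail $\sum_{n\geq K}\log(1+\cdot)$ is bounded by $M_0\sum_{n\geq K}\omega_n^{-2}$ uniformly in $j$. Dominated convergence for the series of logarithms then gives $Z_{\rm Pf}(G_{\rho_j})\to Z_{\rm Pf}(G_\rho)$.

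\emph{Borel measurability.} Equip each bounded-mass class with the weak topology, which is metrizable (for example by the bounded-Lipschitz metric, including total mass). The maps are sequentially continuous by the above, hence continuous, hence Borel. No step is a real obstacle. The only point needing care is that tightness uses the uniform first-moment bound $M_0$, which rules out mass escaping to infinity. Without it the limit could fail to be a probability measure.
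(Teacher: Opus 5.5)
Your proposal is correct and follows essentially the same route as the paper. The shared steps are: tightness of the outputs from the uniform bound $\int x^2\,\mathcal D_{\rm m}(\rho_j)(\dd x)=\rho_j([0,\infty))\leq M_0$ of Lemma \ref{lemma 2.2}; identification of every subsequential limit through the Stieltjes/reciprocal identity \eqref{dyson2}; dominated convergence in the logarithmic series with majorant $M_0\omega_n^{-2}$; and continuity on each bounded-mass class giving Borel measurability. Working with $\rho_j^\sharp$ on the squared scale and then pushing forward by $y\mapsto\sqrt y$ differs only cosmetically from the paper's direct use of \eqref{dyson2}.
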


\begin{proof}
	For every $x>0$, weak convergence gives
	\[
		\int\frac{\rho_j(\dd y)}{x+y^2}
		\longrightarrow\int\frac{\rho(\dd y)}{x+y^2}\,.
	\]
	By Lemma \ref{lemma 2.2}, the output measures have second moments at most
	$M_0$, so they are tight. The reciprocal identity \eqref{dyson2} shows
	that every subsequential limit is $\mathcal D_{\rm m}(\rho)$. In
	addition, for every
	$n$,
	\[
		d_{j,n}\deq\frac{s_{\rho_j}(\omega_n)}{\omega_n}
		\longrightarrow d_n\deq\frac{s_\rho(\omega_n)}{\omega_n}\,,
	\quad 0\leq d_{j,n}\leq M_0\omega_n^{-2}\,.
	\]
	By dominated convergence in the logarithmic series, we have convergence
	of $Z_{\rm Pf}$. Put $m=\rho([0,\infty))$. The bounds
	$d_n\leq m\omega_n^{-2}$, $\log(1+x)\leq x$, and
	$\sum_{n\geq0}\omega_n^{-2}=1/8$ give the stated estimate. By weak
	convergence of the output measures, we have pointwise convergence on
	$[0,1]$. The two maps are continuous on each bounded-mass class, hence
	Borel.
\end{proof}

\begin{lemma}
	\label{lemma 2.4}
	Let $\Sigma_j=G_{\rho_j}$ for finite positive measures $\rho_j$, and let $G_j=\mathcal D(\Sigma_j)$, $j=1,2$. For $n\geq0$, write
	\[
	\widehat{\Sigma}_j(n)=\ii s_{j,n},\quad
	\widehat {G}_j(n)=\ii g_{j,n}\,.
	\]
	Then, with absolutely convergent series,
	\begin{align*}
		\int_0^1(G_1-G_2)(\Sigma_1-\Sigma_2)\,\dd u
		=2\sum_{n\geq0}(g_{1,n}-g_{2,n})(s_{1,n}-s_{2,n})=-2\sum_{n\geq0}
		\frac{(s_{1,n}-s_{2,n})^2}
		{(\omega_n+s_{1,n})(\omega_n+s_{2,n})}\leq0\,.
	\end{align*}
\end{lemma}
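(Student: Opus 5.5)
This is a Parseval identity on the fermionic circle, followed by the algebraic Dyson relation between $g$ and $s$.

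\emph{Step 1: Fourier data.} By \eqref{2.6}, for $\Sigma_j=G_{\rho_j}$ we have $\widehat\Sigma_j(n)=\ii s_{\rho_j}(\omega_n)$. For $n\geq0$ this gives $s_{j,n}=s_{\rho_j}(\omega_n)\geq0$, and for $n<0$ (where $\omega_{-n-1}=-\omega_n$) the coefficient is $\widehat\Sigma_j(-n-1)=-\widehat\Sigma_j(n)$. Likewise $\widehat G_j(n)=\ii\,\omega_n\int(\omega_n^2+x^2)^{-1}\mathcal D_{\rm m}(\rho_j)(\dd x)$. By \eqref{dyson2} this equals $\ii(\omega_n+s_{j,n})^{-1}$, so $g_{j,n}=(\omega_n+s_{j,n})^{-1}$ for $n\geq0$. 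Both sequences are purely imaginary, and each is odd under $n\mapsto -n-1$.

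\emph{Step 2: Parseval.} The functions $\e^{-\ii\omega_n\tau}$, $n\in\bb Z$, form an orthonormal basis of $L^2(0,1)$. Both $G_j$ and $\Sigma_j$ are bounded on $(0,1)$: each $k_x$ is bounded by $1/2$ and the measures are finite. They are also real. For real $f,h\in L^2(0,1)$, Parseval gives $\int_0^1 fh=\sum_{n\in\bb Z}\widehat f(n)\overline{\widehat h(n)}$. With $f=G_1-G_2$ and $h=\Sigma_1-\Sigma_2$, each term equals $\ii(g_{1,n}-g_{2,n})\cdot\overline{\ii(s_{1,n}-s_{2,n})}=(g_{1,n}-g_{2,n})(s_{1,n}-s_{2,n})$. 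The terms with $n<0$ duplicate those with $n\geq0$ because both factors flip sign. This yields the factor $2$ and the first identity.

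\emph{Step 3: algebra and convergence.} From
\[
g_{1,n}-g_{2,n}=\frac{s_{2,n}-s_{1,n}}{(\omega_n+s_{1,n})(\omega_n+s_{2,n})}
\]
we obtain the second expression, and it is $\leq0$ because $s_{j,n}\geq0$.

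Absolute convergence needs a check, since Parseval by itself gives only conditional summation in the symmetric sense. Here $0\leq s_{j,n}\leq \rho_j([0,\infty))\,\omega_n^{-1}$, so the $n$-th term is $O(\omega_n^{-4})$ and is summable. The only mild point in the whole argument is justifying Parseval for the antiperiodic extension. This holds because the family $\e^{-\ii\omega_n\tau}=\e^{-\ii\pi\tau}\e^{-2\pi\ii n\tau}$ is the standard Fourier basis twisted by a unimodular factor, so it is orthonormal and complete in $L^2(0,1)$.
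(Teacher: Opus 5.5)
Your proof is correct and follows the paper's argument: the Fourier data come from \eqref{2.6} and \eqref{dyson2}, Parseval on $L^2(0,1)$ with the pairing $n\leftrightarrow -n-1$ gives the factor $2$, and the Dyson identity $g_{j,n}=(\omega_n+s_{j,n})^{-1}$ gives the sign. One small correction: the Parseval pairing of two $L^2$ functions is already absolutely convergent by Cauchy--Schwarz on the $\ell^2$ coefficient sequences, which is how the paper obtains it, so your separate $O(\omega_n^{-4})$ bound is valid but not needed.
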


\begin{proof}
	By \eqref{2.6} and \eqref{dyson2}, we have
	\[
	s_{j,n}\geq0,\quad g_{j,n}=(\omega_n+s_{j,n})^{-1}\,,
	\quad
	\widehat{\Sigma}_j(-n-1)=-\ii s_{j,n},\quad
	\widehat {G}_j(-n-1)=-\ii g_{j,n}\,.
	\]
	Put $\Delta G=G_1-G_2$ and $\Delta\Sigma=\Sigma_1-\Sigma_2$. All four kernels are bounded, so $\Delta G,\Delta\Sigma\in L^2(0,1)$, and Cauchy--Schwarz makes their Parseval pairing absolutely convergent. Since the kernels are real and $\omega_{-n-1}=-\omega_n$, $\overline{\widehat{\Delta\Sigma}(n)}=\widehat{\Delta\Sigma}(-n-1)$. Thus, with the Fourier convention \eqref{2.5},
	\[
	\int_0^1\Delta G\,\Delta\Sigma =\sum_{n\in\bb Z}
	\widehat{\Delta G}(n)\overline{\widehat{\Delta\Sigma}(n)}
	=\sum_{n\in\bb Z}
	\widehat{\Delta G}(n)\widehat{\Delta\Sigma}(-n-1)\,.
	\]
	Pairing $n$ with $-n-1$, we have the first equality. Moreover,
	\[
	g_{1,n}-g_{2,n}
	=-\frac{s_{1,n}-s_{2,n}}
	{(\omega_n+s_{1,n})(\omega_n+s_{2,n})}\,,
	\]
	from which we have the second equality and the inequality. This finishes the proof.
\end{proof}

\subsection{Existence and uniqueness at every positive inverse temperature}

\begin{theorem}
	\label{thm2.6}
	For every $\beta>0$, there is a unique probability measure $\mu_\beta$ such that
	\[
	\mu_\beta=\mathcal D_{\rm m}\pb{\beta^2\mathcal M_h(\mu_\beta)}\,.
	\]
	It satisfies
	\begin{equation}
		\int_0^\infty x^2\,\mu_\beta(\dd x)
		=2^{1-h}\beta^2=2^{2-q}\beta^2\,.
		\label{2.20}
	\end{equation}
	Set
	$
	G_\beta=G_{\mu_\beta}$, and
	$\Sigma_\beta(\tau)=\beta^2G_\beta(\tau)^h.$ Then for every $n\in\bb Z$,
	\begin{equation}
		\widehat {G}_\beta(n) =\frac{1}{-\ii\omega_n-\widehat{\Sigma}_\beta(n)}\,.
		\label{2.21}
	\end{equation}
	Moreover, $\beta\mapsto\mu_\beta$ is weakly continuous on $(0,\infty)$;
	in particular, $G_\beta(u)$ is continuous in $\beta$ for every
	$u\in[0,1]$.
\end{theorem}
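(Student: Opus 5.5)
Existence, uniqueness and continuity will all be organised around the map $T_\beta(\mu)=\mathcal D_{\rm m}(\beta^2\mathcal M_h(\mu))$ on probability measures. The second-moment identity \eqref{2.20} and the Fourier identity \eqref{2.21} follow formally from it.

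By Lemma \ref{cube}, $\beta^2\mathcal M_h(\mu)$ has mass $\beta^2 2^{1-h}$. Lemma \ref{lemma 2.2} then shows that every output $T_\beta(\mu)$ is a probability measure with second moment exactly $2^{1-h}\beta^2$. So $T_\beta$ maps the convex set $K_\beta=\{\mu:\int x^2\,\dd\mu\le 2^{1-h}\beta^2\}$ into itself. This set is weakly compact, since it is tight and the second moment is lower semicontinuous, and it is convex.

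\textbf{Existence.} By Lemma \ref{cube} and Lemma \ref{lemma 2.3}, $T_\beta$ is weakly continuous on $K_\beta$; the masses are constant, so the bounded-mass hypothesis holds. The weak topology on probability measures on $[0,\infty)$ is metrizable, for instance by the Lévy--Prokhorov metric, and $K_\beta$ is a compact convex subset of the locally convex space of finite signed measures with the weak topology. Schauder--Tychonoff therefore gives a fixed point $\mu_\beta$. Every fixed point lies in the image of $T_\beta$, so it satisfies \eqref{2.20} exactly.

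\textbf{Identity \eqref{2.21}.} Set $\rho=\beta^2\mathcal M_h(\mu_\beta)$. Lemma \ref{cube} gives $\Sigma_\beta=\beta^2G_\beta^h=G_\rho$. By \eqref{2.6}, $\widehat\Sigma_\beta(n)=\ii s_\rho(\omega_n)$ and $\widehat G_\beta(n)=\ii\,\omega_n\int(\omega_n^2+x^2)^{-1}\mu_\beta(\dd x)$. For $n\ge0$, \eqref{dyson2} turns the latter into $\ii/(\omega_n+s_\rho(\omega_n))=1/(-\ii\omega_n-\widehat\Sigma_\beta(n))$. For negative $n$, the relation $\widehat f(-n-1)=\overline{\widehat f(n)}$ for real kernels, together with $\omega_{-n-1}=-\omega_n$, gives the same formula.

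\textbf{Uniqueness.} Let $\mu_1,\mu_2$ be fixed points, with $G_j=G_{\mu_j}$ and $\Sigma_j=\beta^2G_j^h$. Lemma \ref{lemma 2.4} gives $\int_0^1(G_1-G_2)(\Sigma_1-\Sigma_2)\,\dd u\le0$. On the other hand $\Sigma_1-\Sigma_2=\beta^2(G_1^h-G_2^h)$, and both kernels are positive while $t\mapsto t^h$ is strictly increasing on $(0,\infty)$. So the integrand is pointwise $\ge0$, and the integral is $\ge0$. The integral is therefore zero, and so is the right-hand side of Lemma \ref{lemma 2.4}. This forces $s_{1,n}=s_{2,n}$ for all $n$, hence $g_{1,n}=g_{2,n}$ for $n\ge0$, and by conjugation symmetry for all $n$. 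Then $G_1=G_2$ in $L^2$, and by continuity of positive-Lehmann kernels on $(0,1)$ they agree everywhere. Injectivity of the Lehmann transform, proved after Lemma \ref{lemma 2.2}, gives $\mu_1=\mu_2$. (One could argue directly from the pointwise vanishing of the integrand, but routing through Fourier coefficients avoids any issue at the endpoints.)

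\textbf{Continuity in $\beta$.} Take $\beta_j\to\beta>0$. The fixed points $\mu_{\beta_j}$ have uniformly bounded second moments, so they are tight. Let $\nu$ be any subsequential limit. Lemma \ref{cube} gives $\beta_j^2\mathcal M_h(\mu_{\beta_j})\to\beta^2\mathcal M_h(\nu)$ weakly, with bounded masses. Lemma \ref{lemma 2.3} then gives $\mu_{\beta_j}=T_{\beta_j}(\mu_{\beta_j})\to T_\beta(\nu)$, so $\nu=T_\beta(\nu)$. By uniqueness $\nu=\mu_\beta$, so the whole sequence converges. Pointwise convergence of $G_\beta(u)$ follows from the last statement of Lemma \ref{lemma 2.3}: each $k_x(u)$ is bounded and continuous in $x$, including at the endpoints $u\in\{0,1\}$, where $k_x=\tfrac12$.

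I expect the main obstacle to be the functional-analytic setup for the fixed-point theorem, namely placing $K_\beta$ inside a suitable locally convex space so that Schauder--Tychonoff applies. If this is awkward, an alternative is to discretize: restrict to measures supported on finite grids, apply Brouwer there, and pass to the limit using tightness and the continuity from Lemma \ref{lemma 2.3}.
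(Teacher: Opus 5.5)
Your proposal is correct and follows the paper's proof essentially step by step: a Schauder--Tychonoff fixed point on the compact convex second-moment set, uniqueness from the sign identity of Lemma \ref{lemma 2.4} combined with strict monotonicity of $x\mapsto x^h$, and continuity in $\beta$ via tightness and uniqueness of subsequential limits. The only difference is cosmetic: you deduce uniqueness from the vanishing of the Fourier-side sum rather than from the pointwise vanishing of the nonnegative integrand, and both routes are valid.
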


\begin{proof}
	Put $m_\beta=2^{1-h}\beta^2$ and consider the convex set
	\[
	\mathcal X_\beta=\hB{\mu:\ \mu\mbox{ is a probability measure},\quad
		\int x^2\,\mu(\dd x)\leq m_\beta}\,.
	\]
	It is weakly compact by Prokhorov's theorem and lower semicontinuity of
	the second moment. Lemmas \ref{cube}--\ref{lemma 2.3} show that
	\[
		\mu\longmapsto
		\mathcal D_{\rm m}\pb{\beta^2\mathcal M_h(\mu)}
	\]
	is a continuous self-map of $\mathcal X_\beta$, and that every output
	has second moment exactly $m_\beta$. The Schauder--Tychonoff theorem
	therefore gives a fixed point and \eqref{2.20}.

	If $\mu_1,\mu_2$ are fixed points, put $G_j=G_{\mu_j}$ and
	$\Sigma_j=\beta^2G_j^h$. Then $G_j=\mathcal D(\Sigma_j)$, and Lemma
	\ref{lemma 2.4} gives
	\[
		0\geq\int_0^1(G_1-G_2)(\Sigma_1-\Sigma_2)\,\dd u
		=\beta^2\int_0^1(G_1-G_2)(G_1^h-G_2^h)\,\dd u\geq0\,.
	\]
	Strict monotonicity of $x\mapsto x^h$ gives $G_1=G_2$, and Lehmann
	injectivity gives $\mu_1=\mu_2$.

	Suppose $\beta_j\to\beta>0$. The exact second moments make
	$\{\mu_{\beta_j}\}$ tight. Along any weakly convergent subsequence,
	Lemma \ref{cube} and continuity of the Dyson map let us pass to the
	fixed-point equation at $\beta$. Uniqueness identifies the limit with
	$\mu_\beta$, so the full sequence converges. The assertion for $G_\beta$
	follows from the definition of $k_x$ and weak convergence.
	Finally, Equations \eqref{2.6}, \eqref{2.9}, and \eqref{dyson2} give
	\eqref{2.21}.
\end{proof}

We shall repeatedly use the following elementary estimate.

\begin{lemma}
	\label{lemma 2.7}
	Let $V_1,\ldots,V_m$ be pairwise anticommuting Hermitian unitaries
	and let $\varphi$ be a state on the matrix algebra they generate. Then
	\begin{equation*}
		\sum_{j=1}^m|\varphi(V_j)|^2\leq1\,.
	\end{equation*}
\end{lemma}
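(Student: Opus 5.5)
The plan is to reduce the statement to a norm bound on a single operator. Assume first that the $\varphi(V_j)$ are real; this holds automatically, since each $V_j$ is Hermitian and $\varphi$ is a state. Fix a real unit vector $c=(c_1,\ldots,c_m)\in\R^m$ and form the operator $W=\sum_j c_jV_j$. By linearity, $\varphi(W)=\sum_j c_j\varphi(V_j)$.

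The key identity is $W^2=\mathbbm1$. Expanding the square gives
\[
W^2=\sum_j c_j^2V_j^2+\sum_{j<k}c_jc_k\,(V_jV_k+V_kV_j)\,.
\]
Each $V_j^2=\mathbbm1$ because $V_j$ is a Hermitian unitary. Each anticommutator $V_jV_k+V_kV_j$ vanishes by pairwise anticommutation. Hence $W^2=\sum_j c_j^2\,\mathbbm1=\mathbbm1$.

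Since $W$ is Hermitian, $\norm W=1$. A state satisfies $|\varphi(W)|\le\norm W$, which can be seen from Cauchy--Schwarz: $|\varphi(W)|^2\le\varphi(W^2)=1$. Therefore $|\sum_j c_j\varphi(V_j)|\le1$ for every real unit vector $c$.

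Now take $c$ proportional to the real vector $v=(\varphi(V_j))_j$, assuming $v\neq0$ (otherwise there is nothing to prove). With $c=v/\norm v$ we get $\sum_j c_j\varphi(V_j)=\norm v$. The bound above then gives $\norm v^2=\sum_j|\varphi(V_j)|^2\le1$.

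There is no real obstacle in this argument. The only points that need care are the reality of $\varphi(V_j)$, which follows from positivity of the state on Hermitian elements, and working in a unital matrix algebra so that $\mathbbm1$ lies in the algebra and $\varphi(\mathbbm1)=1$. If the algebra generated by the $V_j$ is read as non-unital, I will note that it contains $V_1^2=\mathbbm1$ anyway whenever $m\ge1$.
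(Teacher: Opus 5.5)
Your proof is correct and is essentially the paper's argument. The paper takes $Y=\sum_j b_jV_j$ with the unnormalized coefficients $b_j=\varphi(V_j)$, uses $Y^2=(\sum_j b_j^2)$ times the identity and Cauchy--Schwarz to get $(\sum_j b_j^2)^2\le\sum_j b_j^2$; you normalize the coefficient vector first and obtain $\|v\|\le 1$, which is the same conclusion.
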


\begin{proof}
	Put $b_j=\varphi(V_j)\in\bb R$ and $Y=\sum_jb_jV_j$. Pairwise
	anticommutation gives $Y^2=(\sum_jb_j^2)\mathbbm 1$, while
	$\varphi(Y)=\sum_jb_j^2$. The Cauchy--Schwarz inequality for $\varphi$
	therefore gives $(\sum_jb_j^2)^2\leq\sum_jb_j^2$, which proves the
	claim.
\end{proof}

\section{The pressure derivative and the finite-temperature proof}
\label{sec3}

In this section, we first reduce Theorem \ref{mainthm2} to the main
technical result, Theorem \ref{cor6.3}.  Section \ref{sec3.2} derives
Theorem \ref{cor6.3} from the quadratic Majorana calculation in Lemma
\ref{lemma 6.1} and the finite-size factorization estimate
\eqref{3.25a}, proved in Proposition \ref{prop5.6}.
Section \ref{sec3.4} then deduces Theorem \ref{mainthm2}.

Throughout Sections \ref{sec3}--\ref{sec8}, $q\geq4$ is a fixed even
integer, and we put $h=q-1$.  We abbreviate $\mathsf L=2^{N/2}$ and realize
the physical system by an irreducible Majorana family
$\chi_1,\ldots,\chi_N$ on $\bb C^{\mathsf L}$ satisfying
\[
\chi_i^*=\chi_i\,,
\quad
\chi_i\chi_j+\chi_j\chi_i=\delta_{ij}\mathbbm 1\,.
\]
Thus $\chi_i^2=\mathbbm 1/2$. Put $\psi_i=\sqrt2\chi_i$. If
$A=\{i_1<\cdots<i_q\}$, the monomial in \eqref{1.1} is
\[
\Psi_A=\ii^{q/2}\psi_{i_1}\cdots\psi_{i_q}\,.
\]
If $P_A=\psi_{i_1}\cdots\psi_{i_q}$, then
$
	P_A^*=(-1)^{q(q-1)/2}P_A$ and
	$P_A^2=(-1)^{q(q-1)/2}\mathbbm1.
$
Thus the phase $\ii^{q/2}$ makes $\Psi_A$ a Hermitian unitary, and
\[
\Psi_A\Psi_B=(-1)^{|A\cap B|}\Psi_B\Psi_A\,,
\quad
\psi_i\Psi_A\psi_i=(-1)^{\mathbf1_{i\in A}}\Psi_A\,.
\]
We use the abbreviations
\begin{equation*}
	\varkappa^2=\frac{(q-1)!}{2^qN^{q-1}}\,,
	\qquad
	a=\frac{(N-1)_{q-1}}{N^{q-1}}\,,
	\qquad
	H=\varkappa\sum_{|A|=q}J_A\Psi_A\,.
\end{equation*}
For a unitary $U$, we write $\Ad(U)(X)=UXU^*$. Define
$\alpha_i=\Ad(\psi_i)$ and
$\alpha_S=\prod_{i\in S}\alpha_i$. Then
$\Ad(\Psi_A)=\alpha_A$. Whenever auxiliary Majoranas are introduced,
they and the physical Majoranas are taken from one enlarged Majorana
family.

\subsection{The two pressure derivatives}

Let $\trn$ denote normalized trace and set
\[
	p(\beta)=\frac1N\log\E\trn\e^{\beta H}\,.
\]
For a finite set $S\subset\qq N$, define the thermal overlap
\begin{equation*}
	f_{S,\beta}(u)
	=
	\frac{\E\tr(\e^{(1-u)\beta H}
	\e^{u\alpha_S(\beta H)})}{\E\tr\e^{\beta H}}\,,
	\quad 0\leq u\leq1\,,
\end{equation*}
and its $q$-site average
\begin{equation}
	K_{q,\beta}(u)
	=\frac{1}{2^q\tbinom Nq}\sum_{|S|=q}
	f_{S,\beta}(u)\,.
	\label{3.3}
\end{equation}
The prefactor $2^{-q}$ removes the combined normalization of the two
$\Psi_S$ insertions. More explicitly, if
$S=\{i_1<\cdots<i_q\}$, then
\begin{equation*}
\begin{aligned}
	f_{S,\beta}(u)
	={}&\frac{(-1)^{q/2}2^q
	\E\tr(\e^{(1-u)\beta H}
	\chi_{i_1}\cdots\chi_{i_q}
	\e^{u\beta H}
	\chi_{i_1}\cdots\chi_{i_q})}{\E\tr\e^{\beta H}}\,.
\end{aligned}
\end{equation*}
The value of $f_{S,\beta}$ is independent of the $q$-set $S$.
Indeed, a permutation of the sites sends any $q$-set to any other one.
The signs produced by reordering the corresponding Majorana monomials are
absorbed by the coordinatewise sign symmetry of the independent centered
Gaussian couplings. Consequently,
$K_{q,\beta}=2^{-q}f_{S,\beta}$ for every $q$-set $S$.

\begin{proposition}
\label{prop6.4}
For every even $N\geq q$ and $\beta>0$,
\begin{equation}
	p'(\beta)
	=\frac{\beta a}{q}
	\int_0^1K_{q,\beta}(u)\,\dd u\,.
	\label{6.20}
\end{equation}
\end{proposition}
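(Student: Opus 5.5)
The plan is to differentiate directly, apply Gaussian integration by parts to each coupling, and evaluate the resulting derivative of the matrix exponential with Duhamel's formula. We work at fixed finite $N$, so every object is a finite-dimensional matrix function of finitely many Gaussian variables.

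\emph{Step 1 (differentiation).} Since $\|H\|\leq\varkappa\sum_A|J_A|$ has all exponential moments, we may differentiate under the expectation. This gives
\[
p'(\beta)=\frac1N\,\frac{\E\tr(H\e^{\beta H})}{\E\tr\e^{\beta H}}\,.
\]

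\emph{Step 2 (Gaussian integration by parts).} Write $\E\tr(H\e^{\beta H})=\varkappa\sum_{|A|=q}\E\pb{J_A\tr(\Psi_A\e^{\beta H})}$ and use $\E[J_AF(J)]=\E[\partial_{J_A}F(J)]$. Here $F$ is smooth and of at most exponential growth in $J$, so the identity applies. Since $\partial_{J_A}(\beta H)=\beta\varkappa\Psi_A$, Duhamel's formula gives
\[
\partial_{J_A}\tr(\Psi_A\e^{\beta H})=\beta\varkappa\int_0^1\tr\pb{\Psi_A\e^{(1-u)\beta H}\Psi_A\e^{u\beta H}}\,\dd u\,.
\]

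\emph{Step 3 (identifying the overlap).} Use cyclicity of the trace together with $\Psi_A^2=\mathbbm1$ and $\Psi_A^*=\Psi_A$. These give $\Psi_A\e^{u\beta H}\Psi_A=\e^{u\Psi_A(\beta H)\Psi_A}=\e^{u\alpha_A(\beta H)}$, where $\alpha_A=\Ad(\Psi_A)$. The integrand therefore equals $\tr\pb{\e^{(1-u)\beta H}\e^{u\alpha_A(\beta H)}}$. Dividing by $\E\tr\e^{\beta H}$, and using Fubini to exchange $\E$ with $\int_0^1$ (the integrand is bounded by $\tr\e^{\beta\|H\|}$, which is integrable), we obtain
\[
p'(\beta)=\frac{\beta\varkappa^2}{N}\sum_{|A|=q}\int_0^1f_{A,\beta}(u)\,\dd u
=\frac{\beta\varkappa^2\,2^q\binom Nq}{N}\int_0^1K_{q,\beta}(u)\,\dd u\,.
\]
Here the last equality is the definition \eqref{3.3}.

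\emph{Step 4 (the constant).} It remains to compute the prefactor:
\[
\frac{\varkappa^2 2^q}{N}\binom Nq=\frac{(q-1)!}{N^q}\cdot\frac{(N)_q}{q!}=\frac{(N)_q}{qN^q}=\frac{(N-1)_{q-1}}{qN^{q-1}}=\frac aq\,.
\]
This yields \eqref{6.20}.

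None of these steps is a real obstacle. The only points needing care are the analytic justifications: interchanging differentiation, Gaussian integration by parts, and the $u$-integral with the expectation. Each follows from finite dimensionality and the bound $\tr\e^{\beta H}\leq\mathsf L\,\e^{\beta\varkappa\sum_A|J_A|}$, which is integrable against the Gaussian measure.
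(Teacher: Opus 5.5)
Your proposal is correct and follows the paper's proof essentially verbatim: differentiate under the expectation, apply Gaussian integration by parts to each $J_A$, evaluate $\partial_{J_A}\e^{\beta H}$ by Duhamel's formula, and use $\Psi_A^2=\mathbbm 1$ to recognize $f_{A,\beta}(u)$. Your prefactor computation $\varkappa^2 2^q\binom Nq/N=a/q$ agrees with the paper's. The justifications of the analytic interchanges you add are fine but not needed beyond what the paper states.
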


\begin{proof}
Differentiation under the Gaussian expectation gives
\[
	p'(\beta)
	=N^{-1}\varkappa
	\sum_{|A|=q}
	\frac{\E(J_A\trn(\Psi_A\e^{\beta H}))}{\E\trn\e^{\beta H}}\,.
\]
Gaussian integration by parts and Duhamel's formula give
\[
	\E (J_A\trn(\Psi_A\e^{\beta H}))
	=\beta\varkappa\int_0^1
	\E\trn(\e^{(1-u)\beta H}\Psi_A
	\e^{u\beta H}\Psi_A)\,\dd u\,.
\]
Since $\Psi_A^2=\mathbbm 1$ and conjugation by $\Psi_A$ is $\alpha_A$, the
normalized last trace is $f_{A,\beta}(u)$. Therefore
\[
	p'(\beta)
	=\frac{\beta\varkappa^2}{N}\binom Nq
	2^q\int_0^1K_{q,\beta}(u)\,\dd u
	=\frac{\beta a}{q}
	\int_0^1K_{q,\beta}(u)\,\dd u\,.
\]
This finishes the proof.
\end{proof}

The deterministic pressure satisfies the same formula with the random
$q$-site overlap replaced by the Schwinger--Dyson kernel. We prove this
without differentiating the solution of the Schwinger--Dyson equation.
Let $\mathcal S$ be the cone of functions having positive Lehmann
representations with finite representing measure. For $\beta>0$, define
the following functional. Its exponent is forced by the fixed-point
relation: eliminating $G$ from $\Sigma=\beta^2G^{q-1}$ gives
$G=\beta^{-2/(q-1)}\Sigma^{1/(q-1)}$. We set
\begin{equation}
	\mathcal J_\beta(\Sigma)
	=\log Z_{\rm Pf}(\Sigma)
	-\frac{q-1}{2q}\beta^{-2/(q-1)}
	\int_0^1\Sigma(u)^{q/(q-1)}\,\dd u\,.
	\label{6.21}
\end{equation}

\begin{lemma}
\label{lemma 6.5}
The unique maximizer of $\mathcal J_\beta$ on $\mathcal S$ is
\[
	\Sigma_\beta=\beta^2G_\beta^{q-1}\,,
	\quad G_\beta=\mathcal D(\Sigma_\beta)\,.
\]
In particular,
\begin{equation}
	p_{\rm SD}(\beta)
	=\max_{\Sigma\in\mathcal S}\mathcal J_\beta(\Sigma)\,.
	\label{6.22}
\end{equation}
\end{lemma}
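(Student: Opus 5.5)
The plan is to show that $\mathcal J_\beta$ is strictly concave on the convex cone $\mathcal S$, and that its directional derivative at $\Sigma_\beta$ vanishes in every admissible direction. The identity \eqref{6.22} then follows by evaluating $\mathcal J_\beta$ at $\Sigma_\beta$.

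\emph{Membership and concavity.} First, $\Sigma_\beta\in\mathcal S$: by Lemma \ref{cube}, $\Sigma_\beta=\beta^2G_\beta^{h}=G_{\beta^2\mathcal M_h(\mu_\beta)}$ with finite representing measure. Also $\mathcal S$ is convex, because $\rho\mapsto G_\rho$ is linear on measures. Fix $\Sigma=G_\rho\in\mathcal S$, write $\Sigma_\beta=G_{\rho_\beta}$, and set $\Sigma_t=(1-t)\Sigma_\beta+t\Sigma=G_{(1-t)\rho_\beta+t\rho}$ for $t\in[0,1]$. The Matsubara quantities $s_n(t)=s_{\rho_t}(\omega_n)$ are affine in $t$. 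Hence each term $\log(1+s_n(t)/\omega_n)$ of $\log Z_{\rm Pf}(\Sigma_t)$ is concave in $t$, and so is their sum.

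Since $q/(q-1)>1$, the map $x\mapsto x^{q/(q-1)}$ is strictly convex on $[0,\infty)$. Therefore $t\mapsto-\int_0^1\Sigma_t^{q/(q-1)}$ is concave, and it is strictly concave unless $\Sigma=\Sigma_\beta$ a.e. on $(0,1)$. In that case Lehmann injectivity gives $\rho=\rho_\beta$. Consequently $\phi(t)=\mathcal J_\beta(\Sigma_t)$ is concave, and strictly concave when $\Sigma\neq\Sigma_\beta$.

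\emph{Vanishing first variation.} We compute the right derivative $\phi'(0^+)$.

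For the Pfaffian term, differentiating termwise gives $\sum_{n\ge0}(s_n-s_{\beta,n})/(\omega_n+s_{\beta,n})$. Termwise differentiation is justified by dominated convergence, using $0\le s_n(t)\le M\omega_n^{-2}$ as in Lemma \ref{lemma 2.3}. By \eqref{dyson2} and \eqref{2.6} we have $1/(\omega_n+s_{\beta,n})=g_{\beta,n}$, the Fourier data of $G_\beta=\mathcal D(\Sigma_\beta)$. The Parseval computation in the proof of Lemma \ref{lemma 2.4}, applied to one kernel and one difference, then gives
\[
\sum_{n\ge0}g_{\beta,n}(s_n-s_{\beta,n})=\tfrac12\int_0^1G_\beta(\Sigma-\Sigma_\beta)\,\dd u\,.
\]

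For the second term, differentiating under the integral is justified because all kernels are bounded by their masses. This gives
\[
-\tfrac12\beta^{-2/(q-1)}\int_0^1\Sigma_\beta^{1/(q-1)}(\Sigma-\Sigma_\beta)\,\dd u\,.
\]
Since $\beta^{-2/(q-1)}\Sigma_\beta^{1/(q-1)}=G_\beta$, this equals $-\tfrac12\int_0^1 G_\beta(\Sigma-\Sigma_\beta)\,\dd u$, and so $\phi'(0^+)=0$.

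\emph{Conclusion.} By concavity, $\phi(1)\le\phi(0)+\phi'(0^+)=\phi(0)$, and the inequality is strict when $\Sigma\ne\Sigma_\beta$. Hence $\Sigma_\beta$ is the unique maximizer. Finally,
\[
\int_0^1\Sigma_\beta^{q/(q-1)}=\beta^{2q/(q-1)}\int_0^1G_\beta^q\,,
\]
so $\mathcal J_\beta(\Sigma_\beta)=\log Z_{\rm Pf}(\Sigma_\beta)-\frac{q-1}{2q}\beta^2\int_0^1G_\beta^q=p_{{\rm SD},q}(\beta)$.

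\emph{Main obstacle.} The delicate step is the first variation. The factor $2$ and the sign bookkeeping in the Parseval identity must match the normalization $\tfrac{q-1}{2q}$ exactly. The interchange of derivative and infinite product must also be justified; it is handled by the uniform $\omega_n^{-2}$ decay of $s_n(t)$ and the boundedness of the kernels.
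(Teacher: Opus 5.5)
Your proof is correct and is essentially the paper's argument. The paper applies the supporting-line inequality for $\log(1+x)$ at $\Sigma_\beta$, the same Parseval identity $\sum_{n\ge0}g_{\beta,n}(s_n-s_{\beta,n})=\frac12\int_0^1G_\beta(\Sigma-\Sigma_\beta)\,\dd u$, and the tangent-line inequality for $x^{q/(q-1)}$; this is your concavity-plus-vanishing-first-variation argument written without differentiating. One small slip: the correct bound is $0\le s_n(t)\le M/\omega_n$, so $s_n(t)/\omega_n\le M\omega_n^{-2}$ (not $s_n(t)\le M\omega_n^{-2}$), and this is the bound your dominated-convergence step actually needs.
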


\begin{proof}
For a source $\Sigma$, write
\[
	\widehat\Sigma(n)=\ii s_{\Sigma,n}
	\quad(n\geq0)\,.
\]
The supporting-line inequality applied to the product in \eqref{2.17},
together with the Dyson identity \eqref{2.16}, gives
\begin{align}
	\log Z_{\rm Pf}(\Sigma)-\log Z_{\rm Pf}(\Sigma_\beta)
	\leq
	\sum_{n\geq0}
	\frac{s_{\Sigma,n}-s_{\Sigma_\beta,n}}
	{\omega_n+s_{\Sigma_\beta,n}}=\frac12\int_0^1
	G_\beta(u)\pb{\Sigma(u)-\Sigma_\beta(u)}\,\dd u\,.
	\label{6.23}
\end{align}
The last equality is Parseval's identity, pairing the positive and
negative fermionic frequencies. All series are absolutely convergent,
since a source of mass $m$ satisfies
$0\leq s_{\Sigma,n}\leq m/\omega_n$.

Convexity of $x^{q/(q-1)}$ and
$\Sigma_\beta^{1/(q-1)}=\beta^{2/(q-1)}G_\beta$ give
\begin{align}
	\frac{q-1}{2q}\beta^{-2/(q-1)}\int_0^1
	\pb{\Sigma^{q/(q-1)}-\Sigma_\beta^{q/(q-1)}}\,\dd u
	&\geq\frac12\int_0^1
	G_\beta(\Sigma-\Sigma_\beta)\,\dd u\,.
	\label{6.24}
\end{align}
Subtracting \eqref{6.24} from \eqref{6.23} proves
$\mathcal J_\beta(\Sigma)\leq
\mathcal J_\beta(\Sigma_\beta)$. Strict convexity in \eqref{6.24}
shows that equality forces $\Sigma=\Sigma_\beta$ almost everywhere,
and hence everywhere by continuity. At the maximizer,
\eqref{6.21} is exactly the defining formula for
$p_{\rm SD}(\beta)$.
\end{proof}

\begin{proposition}
\label{prop6.6}
The Schwinger--Dyson pressure is continuously differentiable on
$(0,\infty)$ and
\begin{equation}
	p_{\rm SD}'(\beta)
	=\frac{\beta}{q}\int_0^1G_\beta(u)^q\,\dd u\,.
	\label{6.25}
\end{equation}
Moreover, $p_{\rm SD}(\beta)\to0$ as $\beta\downarrow0$, and
\begin{equation}
	p_{\rm SD}(\beta)
	=\int_0^\beta\frac{s}{q}
	\int_0^1G_s(u)^q\,\dd u\,\dd s\,.
	\label{6.26}
\end{equation}
\end{proposition}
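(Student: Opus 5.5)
We use the variational formula \eqref{6.22} of Lemma \ref{lemma 6.5}. For fixed $\Sigma\in\mathcal S$, the map $\beta\mapsto\mathcal J_\beta(\Sigma)$ is smooth, and only the second term depends on $\beta$:
\[
\partial_\beta\mathcal J_\beta(\Sigma)=\frac{1}{q}\beta^{-2/(q-1)-1}\int_0^1\Sigma^{q/(q-1)}\,\dd u .
\]
At $\Sigma=\Sigma_\beta=\beta^2G_\beta^{q-1}$ we have $\Sigma_\beta^{q/(q-1)}=\beta^{2q/(q-1)}G_\beta^q$. The exponent of $\beta$ is then
\[
-\frac{2}{q-1}-1+\frac{2q}{q-1}=1,
\]
so the derivative equals $\frac{\beta}{q}\int_0^1G_\beta^q\,\dd u$. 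This is the envelope-theorem value, and the task is to justify it without differentiating $\Sigma_\beta$.

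\textbf{Sandwich argument.} For $\beta,\beta'>0$, Lemma \ref{lemma 6.5} gives the two-sided bound
\[
\mathcal J_{\beta'}(\Sigma_\beta)-\mathcal J_\beta(\Sigma_\beta)\le p_{\rm SD}(\beta')-p_{\rm SD}(\beta)\le \mathcal J_{\beta'}(\Sigma_{\beta'})-\mathcal J_\beta(\Sigma_{\beta'}).
\]
In both outer expressions the $\log Z_{\rm Pf}$ terms cancel. What remains is
\[
c(\beta')-c(\beta)\quad\text{times}\quad I(\Sigma)=\frac{q-1}{2q}\int_0^1\Sigma^{q/(q-1)}\,\dd u,
\qquad c(\beta)=-\beta^{-2/(q-1)},
\]
with $\Sigma=\Sigma_\beta$ on the left and $\Sigma=\Sigma_{\beta'}$ on the right. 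Dividing by $\beta'-\beta$ (and flipping the inequalities when $\beta'<\beta$), the left and right difference quotients are squeezed between $\frac{c(\beta')-c(\beta)}{\beta'-\beta}\,I(\Sigma_\beta)$ and the same expression with $I(\Sigma_{\beta'})$.

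\textbf{Main obstacle: continuity of $I(\Sigma_{\beta'})$.} We need $I(\Sigma_{\beta'})\to I(\Sigma_\beta)$ as $\beta'\to\beta$. Since $I(\Sigma_{\beta'})=\frac{q-1}{2q}\beta'^{2q/(q-1)}\int_0^1G_{\beta'}^q\,\dd u$, it suffices that $\int_0^1 G_{\beta'}^q\,\dd u\to\int_0^1 G_\beta^q\,\dd u$.
\begin{itemize}
\item Theorem \ref{thm2.6} gives pointwise continuity of $G_\beta(u)$ in $\beta$.
\item The normalized bound $0<G\le\tfrac12$ provides domination.
\end{itemize}
Dominated convergence then yields the limit. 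Both one-sided derivatives therefore equal $c'(\beta)I(\Sigma_\beta)=\frac{\beta}{q}\int_0^1 G_\beta^q\,\dd u$, which proves \eqref{6.25}. The same continuity shows this derivative is continuous in $\beta$, so $p_{\rm SD}$ is $C^1$ on $(0,\infty)$.

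\textbf{Limit $\beta\downarrow0$ and \eqref{6.26}.}
\begin{itemize}
\item \emph{Upper bound.} Lemma \ref{lemma 2.3} gives $0\le\log Z_{\rm Pf}(\Sigma_\beta)\le m/8$, where $m$ is the mass of $\beta^2\mathcal M_h(\mu_\beta)$. By Lemma \ref{cube}, $m=\beta^2 2^{2-q}\to0$.
\item \emph{Quadratic term.} Since $G\le\tfrac12$, the term $\frac{q-1}{2q}\beta^2\int_0^1 G_\beta^q\,\dd u$ is $O(\beta^2)$.
\end{itemize}
Hence $p_{\rm SD}(\beta)\to0$. Finally, the derivative is bounded by $\beta 2^{-q}/q$, which is integrable near $0$. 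The fundamental theorem of calculus on $[\epsilon,\beta]$, followed by letting $\epsilon\downarrow0$, gives \eqref{6.26}.
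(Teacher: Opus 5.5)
Your proposal is correct and follows essentially the same route as the paper. It uses the two-sided maximality sandwich from Lemma \ref{lemma 6.5}, in which the $\log Z_{\rm Pf}$ terms cancel, together with continuity of $\int_0^1G_\beta^q\,\dd u$ via Theorem \ref{thm2.6} and dominated convergence, and then the $O(\beta^2)$ bounds and integration from $\epsilon$. The only cosmetic difference is in showing $p_{\rm SD}(\beta)\to0$: the paper uses $0=\mathcal J_\beta(0)\leq p_{\rm SD}(\beta)\leq\log Z_{\rm Pf}(\Sigma_\beta)$, whereas you bound the two terms of $\mathcal J_\beta(\Sigma_\beta)$ separately, which works equally well.
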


\begin{proof}
Put
\[
	I_\beta=\int_0^1\Sigma_\beta(u)^{q/(q-1)}\,\dd u
	=\beta^{2q/(q-1)}\int_0^1G_\beta(u)^q\,\dd u\,.
\]
The continuity of the Schwinger--Dyson fixed point in $\beta$, together
with $0\leq G_\beta\leq1/2$, makes $I_\beta$ continuous. If
$0<\beta_1<\beta_2$, maximality in \eqref{6.22} gives
\begin{equation*}
	\frac{q-1}{2q}
	(\beta_1^{-2/(q-1)}-\beta_2^{-2/(q-1)})I_{\beta_1}
	\leq p_{\rm SD}(\beta_2)-p_{\rm SD}(\beta_1)
	\leq
	\frac{q-1}{2q}
	(\beta_1^{-2/(q-1)}-\beta_2^{-2/(q-1)})I_{\beta_2}\,.
\end{equation*}
Dividing by $\beta_2-\beta_1$ and letting the two parameters meet gives
\eqref{6.25}. Continuity of the right-hand side gives continuous
differentiability.

The source mass is $2^{2-q}\beta^2$, whence
$0\leq\log Z_{\rm Pf}(\Sigma_\beta)\leq\beta^2/2^{q+1}$, while
$0\leq\int G_\beta^q\leq2^{-q}$. The variational formula also gives
$p_{\rm SD}(\beta)\geq\mathcal J_\beta(0)=0$, whereas
$p_{\rm SD}(\beta)\leq\log Z_{\rm Pf}(\Sigma_\beta)
\leq\beta^2/2^{q+1}$.
Thus $p_{\rm SD}(\beta)\to0$ as $\beta\downarrow0$. Integrating
\eqref{6.25} from $\epsilon$ to $\beta$ and then letting
$\epsilon\downarrow0$ proves \eqref{6.26}.
\end{proof}

Subtracting \eqref{6.25} from \eqref{6.20} gives
\begin{equation}
	p'(\beta)-p_{\rm SD}'(\beta)
	=\frac{\beta}{q}\left[
	(a-1)\int_0^1K_{q,\beta}(u)\,\dd u
	+\int_0^1K_{q,\beta}(u)-G_\beta(u)^q\,\dd u
	\right]\,.
	\label{3.13}
\end{equation}
Since $a\to1$, only the second term requires a finite-temperature limit.

\begin{theorem}[$q$-site factorization]
\label{cor6.3}
For every fixed $\beta>0$ and $0<u<1$, as $N\to\infty$ through even
integers,
\begin{equation*}
	K_{q,\beta}(u)\longrightarrow G_\beta(u)^q\,,
	\quad
	\int_0^1K_{q,\beta}(u)\,\dd u
	\longrightarrow\int_0^1G_\beta(u)^q\,\dd u\,.
\end{equation*}
\end{theorem}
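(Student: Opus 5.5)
By the permutation invariance noted before Proposition \ref{prop6.4}, $K_{q,\beta}=2^{-q}f_{S,\beta}$ for any fixed $q$-set $S$. The first assertion therefore amounts to
\[
f_{S,\beta}(u)\longrightarrow(2G_\beta(u))^q\,,\qquad 0<u<1\,.
\]
The second assertion then follows by dominated convergence. The bound $|f_{S,\beta}(u)|\leq1$ comes from H\"older's inequality applied to $\tr(\e^{(1-u)\beta H}\e^{u\alpha_S(\beta H)})$, since $\alpha_S(\beta H)$ has the same spectrum as $\beta H$. The main work is the single-site analysis, which I would organize as a compactness-plus-identification argument in the spirit of the uniqueness part of Theorem \ref{thm2.6}.

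\textbf{Step 1 (one-site kernel and compactness).} Define the one-site function $G_N(u)$ as the Gibbs correlation of $\sqrt2\chi_N$ at imaginary times $0$ and $u$ under the annealed state, normalized by $f_{\{N\},\beta}/2$. A finite-$N$ Lehmann decomposition in the eigenbasis of $H$, averaged over the Gaussian weights, shows that $G_N=G_{\mu_N}$ for a probability measure $\mu_N$. Its second moment is an energy-type commutator quantity, $\E\trn\bigl([H,\chi_N]^2\e^{\beta H}\bigr)$ suitably normalized, and this is bounded uniformly in $N$. Hence $\{\mu_N\}$ is tight, and I can pass to a subsequence with $\mu_N\to\mu$ weakly and $G_N\to G=G_\mu$ pointwise.

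\textbf{Step 2 (cavity expansion at site $N$).} Write $\beta H=B+X_0$ and expand $f_{\{N\},\beta}(u)$ in Duhamel series in $X_0$ only, keeping $\e^{B}$ intact. After Gaussian averaging of the couplings in $X_0$, each pair selects a $(q-1)$-set of bulk sites. Overlapping choices cost a factor $N^{-1}$ and are discarded, which leaves the disjoint-label diagrams. The key input is the finite-size factorization estimate \eqref{3.25a} from Proposition \ref{prop5.6}. Conditionally on the one-site kernel, the bulk correlations of the distinct chosen sites factorize into products of $G$, uniformly in the labels. The surviving fermionic signs are those of a quadratic Majorana bath with self-energy $\beta^2G^{q-1}$. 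By Lemma \ref{lemma 6.1} the resummed series then equals $\mathcal D(\beta^2G^{q-1})$, using Lemma \ref{cube} to see that $G^{q-1}$ is again positive-Lehmann. So the isolated site converges to $\mathcal D(\beta^2G^{q-1})$, while by permutation invariance it is also a bulk site and converges to $G$. This gives $G=\mathcal D(\beta^2G^{q-1})$. By uniqueness in Theorem \ref{thm2.6}, $G=G_\beta$, and since the limit is independent of the subsequence, the full sequence converges. If the factorization in \eqref{3.25a} is only conditional on a random $G$, I would instead compare two sites using Lemma \ref{lemma 2.4}: the pairing $\int(G-\mathcal D(\beta^2G^{q-1}))(\ldots)\le0$ combined with monotonicity of $x^{h}$ forces equality, exactly as in the uniqueness proof of Theorem \ref{thm2.6}.

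\textbf{Step 3 ($q$ sites).} Run the same expansion with the observable $\alpha_S$, peeling the sites of $S$ one at a time. Each site of $S$ joins the bulk after its own cavity step. The factorization estimate, applied to the $q$ distinct sites, gives $f_{S,\beta}(u)\to\prod_{i\in S}2G_\beta(u)$. The signs from reordering $\chi_{i_1}\cdots\chi_{i_q}$ cancel against $(-1)^{q/2}$ in the formula for $f_{S,\beta}$.

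The main obstacle is Step 2. It requires uniform, label-independent control of the bulk correlations when only the site-$N$ interactions are expanded and the expansion order is unbounded. This is where I rely fully on Proposition \ref{prop5.6} together with the locality estimates. The other point needing care is justifying the interchange of the infinite Duhamel sum with $N\to\infty$; for this I would use the a priori bound $1\le Z_{\rm Pf}\le\e^{m/8}$ of Lemma \ref{lemma 2.3} to dominate the bath series.
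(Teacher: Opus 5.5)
Your overall architecture matches the paper's: cavity at site $N$, discarding colliding labels, Proposition~\ref{prop5.6} for the disjoint labels, Lemma~\ref{lemma 6.1} for the resummation, then permutation invariance with the anti-monotonicity of Lemma~\ref{lemma 2.4}. The identification step, however, has a genuine gap.

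Proposition~\ref{prop5.6} does not factorize with respect to a deterministic kernel. It gives $c_N\approx\varepsilon_{A,\pi,k}\int(2G(u))^r\prod_{e}(2G(u_e))^h\,\Pi(\dd G)$ for the subsequential law $\Pi$ of Proposition~\ref{prop5.4}, and $\Pi$ is not known a priori to be a point mass. The weak limit of your annealed one-site measures in Step~1 is only an average over such a law, and an average does not factorize multi-site products. So the main line ``isolated site $\to\mathcal D(\beta^2G^h)$, bulk site $\to G$, hence $G$ is a fixed point'' is not available.

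In the mixture case you must also divide by the denominator $\mathcal U_{\emptyset,N}$, which your proposal never treats. This produces the tilted law $\widetilde\Pi_\beta\propto Z_{\rm Pf}(\Sigma_G)\,\Pi$. The comparison $f_{\{N\},\beta}=f_{\{i_\star\},\beta}$ then yields only $\widetilde\E_\beta\bigl(G^\sharp(u)-G(u)\bigr)=0$, with $G^\sharp=\mathcal D(\beta^2G^h)$. Feeding this into Lemma~\ref{lemma 2.4} gives
\[
\int_0^1(G-G_\beta)(G^h-G_\beta^h)\,\dd u\leq-\int_0^1(G^\sharp-G)\,G^h\,\dd u+\int_0^1(G^\sharp-G)\,G_\beta^h\,\dd u .
\]
After $\widetilde\E_\beta$, the one-site identity kills only the last term, because $G_\beta^h$ is deterministic. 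The cross term $\widetilde\E_\beta\int(G^\sharp-G)G^h$ is a mixed $(h+1)$-site moment. Your ``compare two sites'' fallback says nothing about it, and $\widetilde\E_\beta G^\sharp=\widetilde\E_\beta G$ alone does not exclude a nontrivial mixture.

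The missing idea is the paper's second permutation comparison. Run the same site-$N$ expansion for $A=S^\sharp=\{N\}\cup S_-$ and for $A=S$, i.e.\ adjoin the same $h$ bulk sites to both the isolated site and a bulk site. Then $f_{S^\sharp,\beta}=f_{S,\beta}$ gives $\widetilde\E_\beta\bigl((G^\sharp(u)-G(u))G(u)^h\bigr)=0$. Only with both identities does one get $\widetilde\E_\beta\int(G-G_\beta)(G^h-G_\beta^h)\leq0$. Since the integrand is nonnegative, this forces $G=G_\beta$ for $\widetilde\Pi_\beta$-almost every $G$, hence $\Pi$-almost surely, because the two laws share null sets by \eqref{3.35a}.

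Two smaller points. First, no peeling is needed in Step~3: once $\Pi=\delta_{G_\beta}$, the $A=S$ term of the same site-$N$ expansion gives $(2G_\beta(u))^q$ directly. Second, the interchange of $N\to\infty$ with the Duhamel sums is not controlled by $1\leq Z_{\rm Pf}\leq\e^{m/8}$, since the bath series carries signs. It is controlled by the uniform trace-H\"older/Wick bound \eqref{6.2} on the physical side and the coefficient bound \eqref{3.74} on the bath side.
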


\subsection{Proof of Theorem \ref{cor6.3}}
\label{sec3.2}

Fix $\beta>0$ and start from an arbitrary subsequence of even $N$.  The
construction of Section \ref{sec4} gives, along a further subsequence, a
probability law $\Pi$ on normalized positive-Lehmann kernels $G$.  We prove
the limit along this further subsequence, which, together with $\Pi$, is
chosen once and independently of $u$.  The final subsequence argument will
then give the full limit.  By the permutation-invariance
observation following \eqref{3.3},
$K_{q,\beta}=2^{-q}f_{S,\beta}$ for every $q$-set $S$.  For even
$N\geq q+2$,
fix
\begin{equation*}
	S=\{N-q,N-q+1,\ldots,N-1\}\,,\qquad
	i_\star=N-q\,,\qquad
	S_-=S\setminus\{i_\star\}\,,\qquad
	S^\sharp=\{N\}\cup S_-\,;
\end{equation*}
throughout this section, $S$ denotes this target $q$-set.

\subsubsection{The two Duhamel expansions}

We isolate the physical site $N$, without relabelling the sites, and write
\begin{equation*}
	\beta H=B+X_0\,,\qquad
	B=\beta\varkappa
	\sum_{E\in\binom{\qq{N-1}}q}J_E\Psi_E\,,\qquad
	X_0=\gamma\sum_{T\in\binom{\qq{N-1}}h}g_TQ_T\,,
\end{equation*}
where
\[
	Q_T=\Psi_{\{N\}\cup T}\,,\qquad
	g_T=J_{\{N\}\cup T}\,,\qquad
	\gamma=\beta\varkappa\,,\qquad
	\lambda^{\rm cav}=\gamma^2\binom{N-1}{h}
	=\frac{\beta^2a}{2^q}\,.
\]
The subscript $0$ in $X_0$ is a cavity-field index.
The variables $(g_T)$ are independent standard real Gaussians and are
independent of $B$.  Write $\E_g$ for expectation over $(g_T)$ and
$\E_B$ for expectation over the couplings in $B$.

For $d\geq0$, let $\mathcal P_2(2d)$ be the set of pair partitions of
$\{1,\ldots,2d\}$.  Once $2d$ points are placed in cyclic order on the
trace circle, let $\operatorname{cr}(\pi)$ denote the number of pairs of
members of $\pi$ whose endpoints alternate.

We apply the calculation to the denominator and the four overlaps entering
the two permutation comparisons.  Accordingly, set
\begin{equation*}
	\mathcal A_N
	=\big\{\emptyset,\{N\},\{i_\star\},S,S^\sharp\big\}.
\end{equation*}
For the rest of the proof, $A$ always belongs to $\mathcal A_N$.  Put
\[
	R_A=A\cap\qq{N-1}=A\setminus\{N\}\,,\qquad
	r_A=|R_A|\,,\qquad
	\eta_A=\mathbf1_{\{N\in A\}}\,.
\]
When $A$ is fixed, we abbreviate $R=R_A$ and $r=r_A$.  Thus the five
values of $(\eta_A,r)$, in the displayed order, are
$(0,0),(1,0),(0,1),(0,q),(1,h)$.
Put
\begin{equation}
\begin{aligned}
	&Z_B=\E_B\tr\e^B\,,\quad
	\mathcal U_{A,N}(u)=Z_B^{-1}\E\tr\pB{\e^{(1-u)(B+X_0)}
	\e^{u\alpha_A(B+X_0)}}\,,\\
	&
	f_{A,\beta}(u)=\mathcal U_{\emptyset,N}(u)^{-1}
	\mathcal U_{A,N}(u)\,.
\end{aligned}
\label{3.15}
\end{equation}
The last identity is exact because
$Z_B\mathcal U_{\emptyset,N}(u)=\E\tr\e^{B+X_0}$.

For $j\geq0$ and $v\geq0$, let
\[
	\Delta_j(v)=\hB{(s_0,\ldots,s_j):s_a\geq0,\ 
	\sum_{a=0}^js_a=v}\,,
\]
with simplex measure of mass $v^j/j!$.  Duhamel's formula is applied
separately to the two exponentials in \eqref{3.15}:
\begin{align*}
	\e^{(1-u)(B+X_0)}
	&=\sum_{k\geq0}\int_{\Delta_k(1-u)}
	\e^{s_0B}X_0\e^{s_1B}\cdots X_0\e^{s_kB}
	\,\dd\mathbf s\,,\\
	\e^{u\alpha_A(B+X_0)}
	&=\sum_{\ell\geq0}\int_{\Delta_\ell(u)}
	\e^{t_0\alpha_A(B)}\alpha_A(X_0)
	\e^{t_1\alpha_A(B)}\cdots
	\alpha_A(X_0)\e^{t_\ell\alpha_A(B)}
	\,\dd\mathbf t\,.
\end{align*}
For
$\mathbf T^{(0)}=(T_1^{(0)},\ldots,T_k^{(0)})
\in\left(\binom{\qq{N-1}}h\right)^k$ and
$\mathbf T^{(1)}=(T_1^{(1)},\ldots,T_\ell^{(1)})
\in\left(\binom{\qq{N-1}}h\right)^\ell$, set
\begin{align*}
	\mathcal W_{A;\mathbf T^{(0)},\mathbf T^{(1)}}
	(\mathbf s,\mathbf t)=
	\e^{s_0B}Q_{T_1^{(0)}}\e^{s_1B}\cdots
	Q_{T_k^{(0)}}\e^{s_kB}
	\e^{t_0\alpha_A(B)}\alpha_A(Q_{T_1^{(1)}})
	\e^{t_1\alpha_A(B)}\cdots
	\alpha_A(Q_{T_\ell^{(1)}})\e^{t_\ell\alpha_A(B)}\,.
\end{align*}
The resulting exact finite-$N$ expansion is
\begin{equation} 	\label{3.16}
\begin{aligned}
	\mathcal U_{A,N}(u)
	={}&\sum_{k,\ell\geq0}\gamma^{k+\ell}
	\sum_{\mathbf T^{(0)}\in(\binom{\qq{N-1}}h)^k}
	\sum_{\mathbf T^{(1)}\in(\binom{\qq{N-1}}h)^\ell}
	\E_g\left(
	\prod_{a=1}^kg_{T_a^{(0)}}
	\prod_{b=1}^\ell g_{T_b^{(1)}}\right)\\
	&\quad\times
	\int_{\Delta_k(1-u)\times\Delta_\ell(u)}
	Z_B^{-1}\E_B\tr\pB{
	\mathcal W_{A;\mathbf T^{(0)},\mathbf T^{(1)}}
	(\mathbf s,\mathbf t)}\,\dd\mathbf s\,\dd\mathbf t\,.
\end{aligned}
\end{equation}
In \eqref{3.16}, the two Duhamel products are
multiplied inside the trace before the trace or either expectation is
taken.  At total order $k+\ell=2d$,
Wick's formula pairs the combined list of Gaussian variables:
\[
	\E_g\prod_{j=1}^{2d}g_{T_j}
	=\sum_{\pi\in\mathcal P_2(2d)}
	\prod_{\{a,b\}\in\pi}\mathbf1_{\{T_a=T_b\}}\,,\qquad
	\E_g\prod_{j=1}^{2d+1}g_{T_j}=0\,.
\]
A pair may join two
variables in the first expansion, two in the second, or one from each.
If $\pi\in\mathcal P_2(2d)$ is the pairing, one $h$-set $T_e$ remains for
each $e\in\pi$.

Generalized trace H\"older bounds every normalized trace produced in this
way by one: all inserted matrices are unitary, and the heat lengths in the
two factors add to one.  Moreover,
\[
	\sum_{k+\ell=2d}
	\frac{(1-u)^k}{k!}\frac{u^\ell}{\ell!}
	=\frac1{(2d)!}\,.
\]
There are $(2d-1)!!=(2d)!/(2^dd!)$ pairings, while the factor
$\gamma^{2d}$ times the sum over the $d$ remaining $h$-sets is
$(\lambda^{\rm cav})^d$.  Hence the absolute contribution of combined
order $2d$, for either the numerator or the denominator, is at most
\begin{equation}
	\frac{(\lambda^{\rm cav})^d}{2^dd!}\,.
	\label{6.2}
\end{equation}
Since $\lambda^{\rm cav}\leq\beta^2/2^q$, truncation at combined order
$2P$ leaves the uniform remainder
\begin{equation}
	\operatorname{Tail}_\beta(P)
	=\sum_{d>P}\frac{(\beta^2/2^q)^d}{2^dd!}
	\longrightarrow0\,.
	\label{3.18}
\end{equation}

\subsubsection{Two errors and one coefficient limit}

Fix $d$, a split $k+\ell=2d$, and
$\pi\in\mathcal P_2(2d)$.  In terms of the combined insertion times, the
product $\Delta_k(1-u)\times\Delta_\ell(u)$ is, with the conventions
$x_0=0$ and $x_{2d+1}=1$,
\[
	\mathsf o_k(u)=\hB{0<x_1<\cdots<x_{2d}<1:
	x_k<1-u<x_{k+1}}
\]
up to its boundary.  When $d=0$, $\mathsf o_0(u)$ is the one-point
zero-dimensional simplex.  The corresponding heat lengths are recovered by
\[
	s_p=x_{p+1}-x_p\quad(0\leq p<k),\qquad
	s_k=1-u-x_k,
\]
and
\[
	t_0=x_{k+1}-(1-u),\qquad
	t_b=x_{k+b+1}-x_{k+b}\quad(1\leq b\leq\ell).
\]
Thus the endpoint cases give $s_0=1-u$ when $k=0$ and $t_0=u$ when
$\ell=0$, and the change of variables has unit Jacobian.  Let
$x_e^-<x_e^+$ be the two times paired by $e\in\pi$ and put
$u_e=x_e^+-x_e^-$.  A pair crosses the point $1-u$ if
one of its times lies on each side; denote the number of such pairs by
$E_k(\pi)$.  Since
\[
	\alpha_A(X_0)=(-1)^{\eta_A}\gamma
	\sum_Tg_T\alpha_R(Q_T)\,,
\]
the second Duhamel product contributes the factor
$(-1)^{\eta_A\ell}$.  More precisely,
\[
	\ell-E_k(\pi)
	=2\absb{\{e\in\pi:x_e^->1-u\}}\,,
\]
because the set on the right consists of the pairs contained entirely in
the second product.  Hence the cut sign is
$(-1)^{\eta_AE_k(\pi)}$.

Call $\mathbf T=(T_e)_{e\in\pi}$ disjoint if the $h$-sets $T_e$ are
pairwise disjoint and all avoid $R$, and write
$\operatorname{Disj}_{d,r,h}(N)$ for the set of these assignments.  For
$\mathbf T\in\operatorname{Disj}_{d,r,h}(N)$, we have
$\alpha_R(Q_{T_e})=Q_{T_e}$ for every $e\in\pi$.  For
$1\leq p\leq2d$, let $e(p)\in\pi$ be the pair containing $p$.  Define,
with both products ordered from left to right in
increasing index,
\begin{align*}
	\mathcal W_{A;k,\pi,\mathbf T}(\mathbf x)
	={}&\bigg(\e^{s_0B}
	\prod_{p=1}^kQ_{T_{e(p)}}\e^{s_pB}\bigg)
	\bigg(\e^{t_0\alpha_A(B)}
	\prod_{b=1}^\ell
	\alpha_A(Q_{T_{e(k+b)}})\e^{t_b\alpha_A(B)}\bigg).
\end{align*}
An empty product leaves precisely the heat factor $\e^{(1-u)B}$ or
$\e^{u\alpha_A(B)}$.  We keep the common scalar $\gamma^{2d}$ and the
Gaussian variables, already averaged by Wick's formula, outside
$\mathcal W_{A;k,\pi,\mathbf T}$.  Put
\[
	c_N(\mathbf T;\mathbf x)
	=Z_B^{-1}\E_B\tr
	\mathcal W_{A;k,\pi,\mathbf T}(\mathbf x)\,.
\]
For a disjoint assignment define
\[
	D(\mathbf T)=R\cup\bigcup_{e\in\pi}T_e\,.
\]
On the bath sites $\qq{N-1}$, define the
	set-valued path
	\[
	\begin{aligned}
		\sigma_{\pi,\mathbf T,R}(x)
		&=\displaystyle\bigcup_{\substack{e\in\pi\\x_e^-\leq x<x_e^+}}T_e
		&&\quad(0\leq x<1-u)\,,\\[1.2ex]
		\sigma_{\pi,\mathbf T,R}(x)
		&=\displaystyle R\cup
		\bigcup_{\substack{e\in\pi\\x_e^-\leq x<x_e^+}}T_e
		&&\quad(1-u\leq x<1)\,.
	\end{aligned}
	\]
Thus $T_e$ is present precisely between its two insertion times, while
$R$ is present precisely on the second Duhamel interval.  Put $b_0=0$,
$b_{2d+2}=1$, and
\[
	(b_1,\ldots,b_{2d+1})
	=(x_1,\ldots,x_k,1-u,x_{k+1},\ldots,x_{2d})\,,\qquad
	\delta_a=b_{a+1}-b_a.
\]
Let $\sigma_a$ be the value of $\sigma_{\pi,\mathbf T,R}$ on
$(b_a,b_{a+1})$.  Define
\[
	\mathcal F_N(\boldsymbol\delta;\boldsymbol\sigma)
	=Z_B^{-1}\E_B\tr\bigg(
	\prod_{a=0}^{2d+1}\e^{\delta_a\alpha_{\sigma_a}(B)}
	\bigg)\,.
\]
Here the product is ordered from $a=0$ to $a=2d+1$.  To derive it, use,
successively,
\[
	Q_T\e^{t\alpha_\tau(B)}
	=\e^{t\alpha_{\tau\mathbin\triangle T}(B)}Q_T.
\]
This follows from $\Ad(Q_T)=\alpha_T$ on the entire even bath algebra;
the physical site $N$ is absent from $B$.  Moving the two members of every pair
together toggles $T_e$ precisely on $[x_e^-,x_e^+)$ and then removes the
pair by $Q_{T_e}^2=\mathbbm1$.  Interchanging two alternating pairs gives
one minus sign because the two $Q$-monomials, whose supports meet only at
$N$, anticommute.  A
pair crossing $1-u$ gives the additional sign $(-1)^{\eta_A}$ from
conjugation by the distinguished physical Majorana $\psi_N$.  Thus the
crossing and cut signs combine to
\begin{equation*}
	\varepsilon_{A,\pi,k}
	=(-1)^{\operatorname{cr}(\pi)+\eta_AE_k(\pi)}\,.
\end{equation*}
Consequently
\begin{equation}
	c_N(\mathbf T;\mathbf x)
	=\varepsilon_{A,\pi,k}\,
	\mathcal F_N
	(\boldsymbol\delta;\boldsymbol\sigma)\,.
	\label{3.20}
\end{equation}
Finally set
\begin{equation*}
	\ell_{A,\pi,k}(\mathbf x)
	=\varepsilon_{A,\pi,k}
	\int\p{2G(u)}^r
	\prod_{e\in\pi}\p{2G(u_e)}^h\,\Pi(\dd G)\,.
\end{equation*}
Here $\Pi$ is the subsequential law of normalized positive-Lehmann kernels
furnished by Section \ref{sec4}, in particular Proposition \ref{prop5.4}.

In the sums below, every $T_e$ ranges over
$\binom{\qq{N-1}}h$.  The exact coefficient minus its leading
approximation is
\begin{equation}
\begin{aligned}
	&\gamma^{2d}\sum_{(T_e)_{e\in\pi}}
	c_N(\mathbf T;\mathbf x)
	-\gamma^{2d}|\operatorname{Disj}_{d,r,h}(N)|
	\ell_{A,\pi,k}(\mathbf x)\\
	=&\ \gamma^{2d}
	\sum_{\mathbf T\notin\operatorname{Disj}_{d,r,h}(N)}
	c_N(\mathbf T;\mathbf x)+\gamma^{2d}
	\sum_{\mathbf T\in\operatorname{Disj}_{d,r,h}(N)}
	\p{c_N(\mathbf T;\mathbf x)
	-\ell_{A,\pi,k}(\mathbf x)}
	\eqd E_{\rm coll}+E_{\rm fact}\,.
\end{aligned}
	\label{3.22}
\end{equation}

We now sum \eqref{3.22} over all
$k+\ell=2d$ and $\pi\in\mathcal P_2(2d)$ and integrate over the product
simplices.  We keep the notation $E_{\rm coll}$ and $E_{\rm fact}$ for the
resulting order-$2d$ errors.  Steps 1 and 3 below are elementary counts;
Step 2 uses Proposition \ref{prop5.6}.

\emph{1. Intersecting labels.}  Put $M=\binom{N-1}{h}$.  A uniformly
chosen $h$-set meets $R$ with probability at most $hr/(N-1)$.  Conditional
on one $h$-set, a second uniformly chosen $h$-set meets it with probability
at most $h^2/(N-1)$.  The union bound therefore gives
\[
	\frac{|\operatorname{Disj}_{d,r,h}(N)^c|}{M^d}
	\leq\frac{hdr+h^2\binom d2}{N-1}\,.
\]
Using generalized trace H\"older, the product-simplex identity preceding
\eqref{6.2}, $|\mathcal P_2(2d)|=(2d)!/(2^dd!)$, and
$\gamma^{2d}M^d=(\lambda^{\rm cav})^d$, we obtain
\begin{equation}
	\abs{E_{\rm coll}}
	\leq\frac{hdr+h^2\binom d2}{N-1}
	\frac{(\lambda^{\rm cav})^d}{2^dd!}
	\leq\frac{C_q(d^2+dr)}N
	\frac{(\lambda^{\rm cav})^d}{2^dd!}\,.
	\label{3.23a}
\end{equation}
\emph{2. Products over distinct sites.}  Using the continuous extensions in
Proposition \ref{prop5.6}, put
\[
	\epsilon_{N,d,r}
	=\max_{\substack{0\leq k\leq2d\\
	\pi\in\mathcal P_2(2d)}}
	\sup_{\substack{\mathbf T\in\operatorname{Disj}_{d,r,h}(N)\\
	\mathbf x\in\overline{\mathsf o}_k(u)}}
	\abs{c_N(\mathbf T;\mathbf x)
	-\ell_{A,\pi,k}(\mathbf x)}\,.
\]
For every fixed $d,r,A$ and $0<u<1$, Proposition \ref{prop5.6} gives
\begin{equation}
\begin{aligned}
	&\epsilon_{N,d,r}\longrightarrow0\,,\qquad \abs{E_{\rm fact}}
	\leq\frac{(\lambda^{\rm cav})^d}{2^dd!}
	\epsilon_{N,d,r}\,.
\end{aligned}
	\label{3.25a}
\end{equation}

\emph{3. Summing the disjoint labels.}  Order the $d$ pairs arbitrarily.
After $j$ disjoint $h$-sets have been chosen, exactly $N-1-r-hj$ sites
remain available.  For all sufficiently large $N$ at fixed $d,r,q$,
choosing the $h$-sets successively proves
\begin{equation*}
	|\operatorname{Disj}_{d,r,h}(N)|
	=\prod_{j=0}^{d-1}\binom{N-1-r-hj}{h}
	=\binom{N-1}{h}^d\p{1+O_{d,r,q}(N^{-1})}\,.
\end{equation*}
In particular,
\[
	\gamma^{2d}|\operatorname{Disj}_{d,r,h}(N)|
	=(\lambda^{\rm cav})^d\p{1+O_{d,r,q}(N^{-1})}
	\longrightarrow\left(\frac{\beta^2}{2^q}\right)^d\,.
\]
Since $(\beta^2/2^q)\p{2G(v)}^h
=(\beta^2/2)G(v)^h$, this gives
\begin{equation}
\begin{aligned}
	\gamma^{2d}|\operatorname{Disj}_{d,r,h}(N)|
	\ell_{A,\pi,k}(\mathbf x)\longrightarrow
	\varepsilon_{A,\pi,k}
	\int\p{2G(u)}^r
	\prod_{e\in\pi}\frac{\beta^2}{2}G(u_e)^h
	\,\Pi(\dd G)\,.
\end{aligned}
	\label{3.27}
\end{equation}
\subsubsection{The quadratic series and the conclusion}

It remains to sum these limiting coefficients.  For a positive Lehmann
source $\Sigma$, $P\in\mathbb N$, and $\eta\in\{0,1\}$, define
\begin{equation}
	\mathcal Q_P^\eta(\Sigma;u)
	=\sum_{d=0}^P\sum_{\pi\in\mathcal P_2(2d)}
	\sum_{k=0}^{2d}\int_{\mathsf o_k(u)}
	(-1)^{\operatorname{cr}(\pi)+\eta E_k(\pi)}
	\prod_{e\in\pi}\frac{\Sigma(u_e)}2\,\dd\mathbf x\,.
	\label{3.28}
\end{equation}
The term with $d=0$ is one.  When $\eta=0$, the sum over $k$ merely
partitions the ordered simplex, so the expression is independent of $u$.
The value $\eta=1$ records the two fixed $\psi_0$-insertions in the
quadratic model of Section \ref{sec3.3}.  We write
$\mathcal Q_\infty^\eta(\Sigma;u)$ for the limit as $P\to\infty$.

Put $\Sigma_G=\beta^2G^h$.  The combined-order-$2P$ truncation of
the two Duhamel expansions defining $\mathcal U_{A,N}(u)$ is
\begin{equation}
	\mathcal U_{A,N}^{(\leq2P)}(u)
	=\sum_{d=0}^P\sum_{k=0}^{2d}
	\sum_{\pi\in\mathcal P_2(2d)}
	\int_{\mathsf o_k(u)}\gamma^{2d}
	\sum_{(T_e)_{e\in\pi}}c_N(\mathbf T;\mathbf x)
	\,\dd\mathbf x\,.
	\label{3.29}
\end{equation}
The same formula with $A=\emptyset$ gives
$\mathcal U_{\emptyset,N}^{(\leq2P)}(u)$, the combined-order-$2P$
truncation of the normalized denominator in \eqref{3.15}.
For the
order-$2d$ summand, \eqref{3.23a}--
\eqref{3.25a} give the explicit bound
\begin{equation*}
\begin{aligned}
	\abs{E_{\rm coll}}+\abs{E_{\rm fact}}
	\leq\frac{(\lambda^{\rm cav})^d}{2^dd!}
	\left(\frac{C_q(d^2+dr)}N
	+\epsilon_{N,d,r}\right).
\end{aligned}
\end{equation*}
For fixed $P$, the sum of the right-hand side tends to zero as
$N\to\infty$ along the selected subsequence.  Replacing the leading term in each summand of
\eqref{3.29} by \eqref{3.27}
then gives
\begin{equation}
	\mathcal U_{A,N}^{(\leq2P)}(u)
	\longrightarrow
	\int\mathcal Q_P^{\eta_A}(\Sigma_G;u)
	\p{2G(u)}^r\,\Pi(\dd G)\,.
	\label{3.31}
\end{equation}
This includes $A=\emptyset$ and hence the normalized denominator in
\eqref{3.15}.  We now justify the passage from fixed
$P$ to the full series.

\begin{lemma}
\label{lemma 6.1}
For every positive Lehmann source $\Sigma$ of representing mass $m$, every
$0<u<1$, and $\eta\in\{0,1\}$, the series above converges absolutely and
\begin{equation} \label{Yossi}
	\mathcal Q_\infty^\eta(\Sigma;u)
	=Z_{\rm Pf}(\Sigma)\p{2\mathcal D(\Sigma)(u)}^\eta\,.
\end{equation}
In both cases, the absolute tail after order $2P$ is at most
$\sum_{d>P}(m/4)^d/(2^dd!)$.
\end{lemma}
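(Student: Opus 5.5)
We plan to realize the series \eqref{3.28} as the Duhamel--Wick expansion of an exactly solvable quadratic Majorana model, and then evaluate that model by Pfaffian/Gaussian calculus.

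\emph{Step 1: the bath.} First suppose $\Sigma=G_\rho$ with $\rho$ finitely atomic, $\rho=\sum_\alpha m_\alpha\delta_{x_\alpha}$. Take a Majorana $\psi_0$ together with, for each atom, a pair of bath Majoranas $\xi_\alpha,\xi'_\alpha$ carrying the quadratic Hamiltonian $\frac{\ii}{2}x_\alpha\xi_\alpha\xi'_\alpha$. In this free thermal state at inverse temperature one, the imaginary-time two-point function of $\xi_\alpha$ is $k_{x_\alpha}(\tau)$, up to normalization. Couple $\psi_0$ linearly through $X=\sum_\alpha\sqrt{m_\alpha}\,\psi_0\xi_\alpha$, which is even. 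Expanding $\e^{H_{\rm bath}+X}$ in Duhamel form and taking the bath expectation via Wick's theorem, each pair $e$ contributes the bath contraction $\Sigma(u_e)$ times a constant. The relative signs come from reordering the odd operators $\psi_0$ and $\xi$; because each pair is a product of two fermions, interchanging alternating pairs produces exactly $(-1)^{\operatorname{cr}(\pi)}$. For $\eta=1$ we insert $\psi_0$ at times $0$ and $1-u$. Each pair straddling the cut then anticommutes past one $\psi_0$, giving $(-1)^{E_k(\pi)}$, and the total is $\langle\psi_0(1-u)\psi_0(0)\rangle$. We fix the normalizations so that the weight is $\Sigma(u_e)/2$ and the $\psi_0$ correlator is $2\times$ (propagator). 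Odd orders vanish.

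\emph{Step 2: evaluation.} The model is quadratic, so its partition function relative to the decoupled model is a ratio of Pfaffians. Diagonalizing in Matsubara frequencies, only the $\psi_0$ self-energy is modified, and the ratio becomes $\prod_{n\ge0}(1+s_\rho(\omega_n)/\omega_n)=Z_{\rm Pf}(\Sigma)$ as in \eqref{2.17}. The normalized $\psi_0$ propagator has Fourier coefficients $(\omega_n+s_\rho(\omega_n))^{-1}$. By \eqref{dyson2} and \eqref{2.6} this is exactly the Lehmann kernel $\mathcal D(\Sigma)$, which gives \eqref{Yossi}.

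An alternative route avoids operators altogether. One can verify directly that the $\eta=0$ series, resummed over connected components, equals $\exp$ of the sum of cyclic terms $\sum_n\log(1+s_n/\omega_n)$. The $\eta=1$ series is then a Dyson resummation $G=G_0+G_0\Sigma G$, whose sign structure is exactly the crossing sign. I expect the operator realization to be cleaner.

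\emph{Step 3: tail bound and general measures.} Every normalized trace in the expansion is bounded by $1$ in absolute value, since the inserted operators have norm controlled by the bath coupling. Alternatively, bound the integrand using $|\Sigma(v)|\le m/2$ and $|(-1)^{\cdot}|=1$. The order-$2d$ term then has absolute value at most $\frac{(2d)!}{2^dd!}\cdot\frac{1}{(2d)!}(m/4)^d$; here the sum over $k$ of the $\mathsf o_k(u)$ volumes is $1/(2d)!$. This yields the stated tail $\sum_{d>P}(m/4)^d/(2^dd!)$ and the absolute convergence. For a general finite $\rho$, approximate it weakly by atomic $\rho_j$ of the same mass. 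Each series term converges by dominated convergence, since $G_{\rho_j}\to G_\rho$ pointwise and boundedly. The uniform tail bound then passes the limit through the sum, and Lemma~\ref{lemma 2.3} gives convergence of both $Z_{\rm Pf}$ and $\mathcal D(\Sigma)(u)$.

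\emph{Main obstacle.} The delicate step is the exact sign and normalization bookkeeping in Step 1. We must check that the fermionic reordering signs coincide exactly with $(-1)^{\operatorname{cr}(\pi)+\eta E_k(\pi)}$, with the weight $\Sigma(u_e)/2$ arising from $\chi$-normalizations ($\chi^2=1/2$). We must also handle $\sigma$-boundary values at $u_e\to0$, where $\Sigma$ takes its boundary value, and the placement of the cut.
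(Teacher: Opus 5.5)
Your proposal is essentially the paper's proof: an atomic bath with $\psi_0$ coupled to one Majorana of each pair, a Duhamel--Wick expansion producing $(-1)^{\operatorname{cr}(\pi)+\eta E_k(\pi)}$ and the pair weight $\Sigma(u_e)/2$, quadratic evaluation of the partition ratio and of the $\psi_0$ propagator, the bound $|\Sigma/2|\leq m/4$ for the tail, and equal-mass atomic approximation with Lemma \ref{lemma 2.3}; the paper makes your formal Pfaffian/Matsubara step rigorous via $\tr\e^{K_{\mathsf A}}=\prod_j2\cosh(\varepsilon_j/2)$, the product \eqref{3.68}, a Schur complement, and the equation-of-motion identity $\widehat{\mathsf G}_{\mathsf A}(n)=\ii(\omega_n\mathbbm1-\mathsf A)^{-1}$. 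One correction to your normalization: $\psi_0\xi_\alpha$ is anti-Hermitian, so the coupling needs a factor $\ii$ (the paper uses $-\ii\sqrt{s_\alpha}\chi\eta_\alpha$), because with a real coefficient every Wick pair carries a negative multiple of $\Sigma(u_e)$, and at matching magnitude one obtains $\prod_{n\geq0}(1-s_\rho(\omega_n)/\omega_n)$ instead of $Z_{\rm Pf}(\Sigma)$.
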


The lemma is proved in Section \ref{sec3.3}.  For $\Sigma=\Sigma_G$, Lemma
\ref{cube} gives its representing mass $m$:
\[
	m=\beta^2 2^{2-q}\,,\qquad
	\frac m4=\frac{\beta^2}{2^q}\,.
\]
Thus the tail in Lemma \ref{lemma 6.1} is bounded by
$\operatorname{Tail}_\beta(P)$ in \eqref{3.18}.  Set
\[
\begin{aligned}
	L_{A,P}(u)
	=\int\mathcal Q_P^{\eta_A}(\Sigma_G;u)
	\p{2G(u)}^r\,\Pi(\dd G)\,, \quad 
	L_A(u)
	=\int\mathcal Q_\infty^{\eta_A}(\Sigma_G;u)
	\p{2G(u)}^r\,\Pi(\dd G)\,.
\end{aligned}
\]
The physical tail in \eqref{3.18} and the quadratic tail just
stated, together with $0\leq2G\leq1$, give
\[
	\sup_N\abs{\mathcal U_{A,N}(u)
	-\mathcal U_{A,N}^{(\leq2P)}(u)}
	\leq\operatorname{Tail}_\beta(P)\,,\qquad
	\abs{L_{A,P}(u)-L_A(u)}
	\leq\operatorname{Tail}_\beta(P)\,.
\]
Therefore \eqref{3.31}, used at fixed $P$, yields
\[
\begin{aligned}
	\limsup_{N\to\infty}\abs{\mathcal U_{A,N}(u)-L_A(u)}
	&\leq
	\limsup_{N\to\infty}\abs{\mathcal U_{A,N}(u)
	-\mathcal U_{A,N}^{(\leq2P)}(u)}\\
	&\quad+
	\limsup_{N\to\infty}\abs{\mathcal U_{A,N}^{(\leq2P)}(u)
	-L_{A,P}(u)}
	+\abs{L_{A,P}(u)-L_A(u)}\\
	&\leq2\operatorname{Tail}_\beta(P)\longrightarrow0
	\qquad(P\to\infty)\,.
\end{aligned}
\]
The case $A=\emptyset$, already included in $\mathcal A_N$, proves the
full denominator limit.
The limits are taken in the order $N\to\infty$ at fixed $P$, followed by
$P\to\infty$ in the last bound.  Put
\begin{equation*}
	G^\sharp=\mathcal D(\Sigma_G)\,,\quad
	Z_\beta^{\rm mix}=\int Z_{\rm Pf}(\Sigma_G)\,\Pi(\dd G)\,.
\end{equation*}
Since $G^h$ has representing mass $2^{2-q}$, Lemma
\ref{lemma 2.3} gives
\begin{equation}
	1\leq Z_{\rm Pf}(\Sigma_G)
	\leq\exp\pB{\frac{\beta^2}{2^{q+1}}}\,.
\label{3.35a}
\end{equation}
Consequently, for every $A\in\mathcal A_N$,
\begin{equation*}
	\mathcal U_{A,N}(u)
	\longrightarrow
	\int Z_{\rm Pf}(\Sigma_G)
	\p{2G^\sharp(u)}^{\eta_A}
	\p{2G(u)}^r\,\Pi(\dd G)\,.
\end{equation*}
For $A=\emptyset$, the right-hand side is $Z_\beta^{\rm mix}$.
Using the exact quotient in \eqref{3.15} now gives, for every
$A\in\mathcal A_N$,
\begin{equation}
	f_{A,\beta}(u)\longrightarrow
	(Z_\beta^{\rm mix})^{-1}
	\int Z_{\rm Pf}(\Sigma_G)
	\p{2G^\sharp(u)}^{\eta_A}
	\p{2G(u)}^r\,\Pi(\dd G)\,.
	\label{3.37}
\end{equation}

It remains to identify the one-site kernel.  Define the probability
measure
\begin{equation*}
	\widetilde\Pi_\beta(\dd G)
	=(Z_\beta^{\rm mix})^{-1}
	Z_{\rm Pf}(\Sigma_G)\,\Pi(\dd G)
\end{equation*}
and write $\widetilde\E_\beta$ for expectation with respect to it.  The
bounds in \eqref{3.35a} show that
$\widetilde\Pi_\beta$ and $\Pi$ have the same null sets.

\begin{proposition}
\label{prop5 collapse}
For every subsequential kernel law $\Pi$ produced from an arbitrary even-$N$
subsequence as above,
\begin{equation}
	G=G_\beta
	\quad\text{for $\Pi$-almost every $G$}\,.
	\label{5.20}
\end{equation}
\end{proposition}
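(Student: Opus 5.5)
The plan is to exploit permutation invariance of the finite-$N$ system through the limits \eqref{3.37} for the five sets in $\mathcal A_N$, and then close with the anti-monotonicity of Lemma \ref{lemma 2.4}.

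\emph{Step 1: one site alone.} Since $f_{A,\beta}$ depends only on $|A|$, we have $f_{\{N\},\beta}(u)=f_{\{i_\star\},\beta}(u)$ exactly. Taking limits in \eqref{3.37} gives
\[
\widetilde\E_\beta\,G^\sharp(u)=\widetilde\E_\beta\,G(u)\qquad(0<u<1).
\]

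\emph{Step 2: adjoining the same $q-1$ bulk sites.} Likewise $f_{S^\sharp,\beta}=f_{S,\beta}$, since $|S^\sharp|=|S|=q$. With $(\eta,r)=(1,h)$ and $(0,q)$ this gives
\[
\widetilde\E_\beta\bigl[G^\sharp(u)G(u)^h\bigr]=\widetilde\E_\beta\bigl[G(u)^q\bigr].
\]
Subtracting the product of the Step 1 identity with a suitable constant is not enough, so I will instead combine the two identities as
\[
\widetilde\E_\beta\int_0^1 (G^\sharp-G)\,\beta^2G^h\,\dd u=0 .
\]
We also need a second relation involving $\Sigma_\beta$. Integrating the Step 1 identity against the deterministic weight $\beta^2G_\beta^h$ gives
\[
\widetilde\E_\beta\int_0^1(G^\sharp-G)\,\Sigma_\beta\,\dd u=0 .
\]

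\emph{Step 3: anti-monotonicity.} Lemma \ref{lemma 2.4} applies with $\Sigma_1=\Sigma_G$, $G_1=G^\sharp$ and $\Sigma_2=\Sigma_\beta$, $G_2=G_\beta$, and yields pointwise in $G$
\[
\int_0^1(G^\sharp-G_\beta)(\Sigma_G-\Sigma_\beta)\,\dd u\le0 .
\]
Write $G^\sharp-G_\beta=(G^\sharp-G)+(G-G_\beta)$ and take $\widetilde\E_\beta$. The cross terms are handled as follows. The term $\widetilde\E\int(G^\sharp-G)\Sigma_G$ vanishes by Step 2, and the term $\widetilde\E\int(G^\sharp-G)\Sigma_\beta$ vanishes by Step 1. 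This leaves
\[
\beta^2\,\widetilde\E_\beta\int_0^1(G-G_\beta)(G^h-G_\beta^h)\,\dd u\le0 .
\]
The integrand is pointwise nonnegative, so it vanishes $\widetilde\Pi_\beta$-a.s. Hence $G=G_\beta$ a.e. on $(0,1)$, and by continuity everywhere. Since $\widetilde\Pi_\beta$ and $\Pi$ have the same null sets, \eqref{5.20} follows.

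\emph{Main obstacle.} The technical points are measurability and integrability: Fubini for the exchange of $\widetilde\E_\beta$ and $\int\dd u$, and Borel dependence of $G^\sharp$ on $G$. All kernels are bounded by $1/2$, and Lemma \ref{lemma 2.3} gives the Borel property, so these are routine. The genuine content is making sure \eqref{3.37} holds for every $u\in(0,1)$ with the same $\Pi$, which the construction guarantees. Only then can Steps 1--2 be integrated in $u$.
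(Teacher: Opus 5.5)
Your proposal is correct and follows essentially the same route as the paper: you pass the two exact permutation identities $f_{\{N\},\beta}=f_{\{i_\star\},\beta}$ and $f_{S^\sharp,\beta}=f_{S,\beta}$ through \eqref{3.37}, and then apply Lemma \ref{lemma 2.4} to $(G^\sharp,\Sigma_G)$ versus $(G_\beta,\Sigma_\beta)$ after splitting $G^\sharp-G_\beta=(G^\sharp-G)+(G-G_\beta)$. You then conclude via strict monotonicity of $x\mapsto x^h$ and the equivalence of the $\widetilde\Pi_\beta$- and $\Pi$-null sets. The only cosmetic difference is that the paper first works at rational $u$ and extends by continuity, whereas you use directly that \eqref{3.37} holds for every $u\in(0,1)$ with the same $\Pi$; both are valid.
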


\begin{proof}
Fix a rational $0<u<1$.  Exact permutation invariance gives
\begin{equation}
\begin{aligned}
	f_{\{N\},\beta}(u)=f_{\{i_\star\},\beta}(u)\,,\qquad
	f_{S^\sharp,\beta}(u)=f_{S,\beta}(u)\,.
\end{aligned}
	\label{5.19b}
\end{equation}
For $A=\{N\},\{i_\star\},S^\sharp,S$ in \eqref{5.19b},
$(\eta_A,r_A)$ takes the respective values
$(1,0),(0,1),(1,h),(0,h+1)$.  Passing to the limit
in \eqref{5.19b} using \eqref{3.37} gives
\[
\begin{aligned}
	2\widetilde\E_\beta G^\sharp(u)
	=2\widetilde\E_\beta G(u)\,,\quad
	2^{h+1}\widetilde\E_\beta
	\p{G^\sharp(u)G(u)^h}
	=2^{h+1}\widetilde\E_\beta G(u)^{h+1}\,.
\end{aligned}
\]
Cancelling the common factors $2$ and $2^{h+1}$, respectively, gives
\[
	\widetilde\E_\beta\p{G^\sharp(u)-G(u)}=0\,,\qquad
	\widetilde\E_\beta
	\p{(G^\sharp(u)-G(u))G(u)^h}=0\,.
\]
Continuity and dominated convergence extend these identities to every
$0<u<1$, and hence
\begin{equation*}
	\widetilde\E_\beta\int_0^1
	(G^\sharp-G)(G^h-G_\beta^h)\,\dd u=0\,.
\end{equation*}
For every $G$, Lemma \ref{lemma 2.4} and the fixed-point identity for
$G_\beta$ give
{\samepage
\begin{align*}	
	\int_0^1(G-G_\beta)
	(G^h-G_\beta^h)\,\dd u
	&=\int_0^1(G^\sharp-G_\beta)
	(G^h-G_\beta^h)\,\dd u-\int_0^1(G^\sharp-G)(G^h-G_\beta^h)\,\dd u\\
	&\leq -\int_0^1(G^\sharp-G)(G^h-G_\beta^h)\,\dd u\,.
\end{align*}}
Here the inequality is Lemma \ref{lemma 2.4}, applied to
$(G^\sharp,G_\beta)$ and their sources.  Taking
$\widetilde\E_\beta$ and using the identities obtained by
combining \eqref{5.19b} and \eqref{3.37} gives
\[
\widetilde\E_\beta \int_0^1(G-G_\beta)
(G^h-G_\beta^h)\,\dd u\leq 0\,.
\]
The integrand is pointwise nonnegative because $x\mapsto x^h$ is strictly
increasing on $[0,\infty)$.  Hence its integral vanishes for
$\widetilde\Pi_\beta$-almost every $G$; continuity then gives
$G=G_\beta$ on $[0,1]$, and the equivalence of the
$\widetilde\Pi_\beta$- and $\Pi$-null sets proves \eqref{5.20}.
\end{proof}

For the fixed $q$-set $S$, Proposition
\ref{prop5 collapse} and \eqref{3.37} give
\[
	f_{S,\beta}(u)\longrightarrow(2G_\beta(u))^q\,.
\]
Thus $K_{q,\beta}(u)=2^{-q}f_{S,\beta}(u)\to G_\beta(u)^q$ along the
selected further subsequence.  Every subsequence has a further subsequence
with the same limit, so the pointwise convergence holds along the full
sequence.  Since $0\leq K_{q,\beta}\leq2^{-q}$, dominated convergence proves
the second assertion of Theorem \ref{cor6.3}.

\subsection{Proof of Lemma \ref{lemma 6.1}}
\label{sec3.3}

The assertion is immediate for the zero source.  For a nonzero atomic
source, write
\[
	\Sigma=G_\rho\,,\quad
	\rho=\sum_{\alpha=1}^Ms_\alpha\delta_{x_\alpha}\,,\quad
	s_\alpha>0\,,\quad x_\alpha\geq0\,.
\]
Take Majoranas $\chi,\eta_1,\zeta_1,\ldots,\eta_M,\zeta_M$, all of square
$1/2$, put $\psi_0=\sqrt2\chi$, and define
\begin{equation}
\begin{aligned}
	K_\Sigma^{\rm bath}
	=-\ii\sum_{\alpha=1}^M
	\left(x_\alpha\eta_\alpha\zeta_\alpha
	+\sqrt{s_\alpha}\chi\eta_\alpha\right)\,,\quad
	K_{\rm free}^{\rm bath}
	=-\ii\sum_{\alpha=1}^Mx_\alpha\eta_\alpha\zeta_\alpha\,.
\end{aligned}
	\label{5.1}
\end{equation}
Both traces below are taken on the same irreducible representation of the
odd Clifford algebra.  Put
$Z_\Sigma^{\rm bath}=\tr\e^{K_\Sigma^{\rm bath}}$ and
$Z_{\rm free}^{\rm bath}=\tr\e^{K_{\rm free}^{\rm bath}}$.

We first prove, for $0<u<1$,
\begin{align}
	(Z_{\rm free}^{\rm bath})^{-1}Z_\Sigma^{\rm bath}
=Z_{\rm Pf}(\Sigma)\,,\quad 
	(Z_{\rm free}^{\rm bath})^{-1}
	\tr\pB{\e^{(1-u)K_\Sigma^{\rm bath}}\psi_0
	\e^{uK_\Sigma^{\rm bath}}\psi_0}
	=2Z_{\rm Pf}(\Sigma)\mathcal D(\Sigma)(u)\,.
	\label{5.3}
\end{align}
We use two elementary finite-dimensional quadratic formulas. Let
$r_0,\ldots,r_{2M}$ be Majoranas with
$r_j^2=1/2$, let $\mathsf A$ be a real antisymmetric $(2M+1)$-by-$(2M+1)$
matrix, and put
\begin{equation*}
	K_{\mathsf A}=-\ii\sum_{0\leq j<\ell\leq2M}
	\mathsf A_{j\ell}r_jr_\ell\,.
\end{equation*}
Put $Z_{\mathsf A}=\tr\e^{K_{\mathsf A}}$.  For $0<u<1$, define the
matrix-valued thermal two-point function
\begin{equation*}
	\mathsf G_{\mathsf A}(u)_{j\ell}
	=Z_{\mathsf A}^{-1}\tr\pB{\e^{(1-u)K_{\mathsf A}}r_j
	\e^{uK_{\mathsf A}}r_\ell}\,.
\end{equation*}

Let $\varepsilon_1,\ldots,\varepsilon_M\geq0$ be the nonnegative
canonical frequencies of $\mathsf A$, one from each $2$-by-$2$ block.
On an irreducible representation of the odd Clifford algebra,
\begin{equation}
	Z_{\mathsf A}
	=\prod_{j=1}^M2\cosh\pB{\frac{\varepsilon_j}{2}}\,.
	\label{3.62}
\end{equation}
Moreover, for $\omega_n=(2n+1)\pi$,
\begin{equation}
	\int_0^1\e^{\ii\omega_nu}\mathsf G_{\mathsf A}(u)\,\dd u
	=\ii(\omega_n\mathbbm1-\mathsf A)^{-1}\,.
	\label{3.63}
\end{equation}
Indeed, an orientation-preserving orthogonal change of the Majoranas puts
$\mathsf A$ into the form
\[
	0\oplus\bigoplus_{j=1}^M
	\begin{pmatrix}0&\varepsilon_j\\-\varepsilon_j&0\end{pmatrix}\,.
\]
The sign of the kernel vector can be chosen so that the change of variables
has determinant one.  An irreducible representation has dimension $2^M$;
there is no additional tensor factor for the kernel mode.  The corresponding
summands in $K_{\mathsf A}$ commute and have eigenvalues
$\pm\varepsilon_j/2$, which proves \eqref{3.62}.

Write $\mathbf r=(r_0,\ldots,r_{2M})^{\mathsf T}$.  The Clifford relations
give
$
	[K_{\mathsf A},\mathbf r]=\ii\mathsf A\mathbf r.
$ Consequently,
\[
	\frac{\dd}{\dd u}\mathsf G_{\mathsf A}(u)
	=-\ii\mathsf A\mathsf G_{\mathsf A}(u)\,.
\]
Cyclicity of the trace and
$r_jr_\ell+r_\ell r_j=\delta_{j\ell}\mathbbm1$ give
$
	\mathsf G_{\mathsf A}(0+)+\mathsf G_{\mathsf A}(1-)=\mathbbm1.
$ Integration by parts, using $\e^{\ii\omega_n}=-1$, now gives
\[
	-\mathbbm1-\ii\omega_n\widehat{\mathsf G}_{\mathsf A}(n)
	=-\ii\mathsf A\widehat{\mathsf G}_{\mathsf A}(n)\,.
\]
Solving this equation proves \eqref{3.63}.

We apply these formulas to the atomic source in \eqref{5.1}.  Thus
\begin{equation*}
\begin{aligned}
	s_\rho(\omega)=\omega\sum_{\alpha=1}^M
	\frac{s_\alpha}{\omega^2+x_\alpha^2}\,,
	\quad \omega>0\,.
\end{aligned}
\end{equation*}
In the ordered coordinates $(\chi,\boldsymbol\eta,\boldsymbol\zeta)$, put
\begin{equation*}
\begin{aligned}
	&\mathsf X=\operatorname{diag}(x_1,\ldots,x_M)\,,
	\quad \mathbf s^{1/2}=(\sqrt{s_1},\ldots,\sqrt{s_M})^{\mathsf T}\,,\\
	&\mathsf A_\Sigma=
	\begin{pmatrix}
		0&(\mathbf s^{1/2})^{\mathsf T}&0\\
		-\mathbf s^{1/2}&0&\mathsf X\\
		0&-\mathsf X&0
	\end{pmatrix}\,,
	\quad
	\mathsf A_0=
	\begin{pmatrix}
		0&0&0\\
		0&0&\mathsf X\\
		0&-\mathsf X&0
	\end{pmatrix}\,.
\end{aligned}
\end{equation*}
Then $K_{\mathsf A_\Sigma}=K_\Sigma^{\rm bath}$ and
$K_{\mathsf A_0}=K_{\rm free}^{\rm bath}$.  For $\omega>0$, Schur
complement in the first coordinate gives
\begin{equation}
\begin{aligned}
	&\det(\omega\mathbbm1-\mathsf A_\Sigma)
	=\omega\prod_{\alpha=1}^M(\omega^2+x_\alpha^2)
	\bigg(1+\sum_{\alpha=1}^M
	\frac{s_\alpha}{\omega^2+x_\alpha^2}\bigg)\,,\\
	&\det(\omega\mathbbm1-\mathsf A_0)
	=\omega\prod_{\alpha=1}^M(\omega^2+x_\alpha^2)\,,\quad [(\omega\mathbbm1-\mathsf A_\Sigma)^{-1}]_{00}
	=\frac1{\omega+s_\rho(\omega)}\,.
\end{aligned}
	\label{3.67}
\end{equation}

Let $\varepsilon_1,\ldots,\varepsilon_M$ be the nonnegative canonical
frequencies of $\mathsf A_\Sigma$, one from each $2$-by-$2$ block.  Since
\[
	\det(\omega\mathbbm1-\mathsf A_\Sigma)
	=\omega\prod_{j=1}^M(\omega^2+\varepsilon_j^2)\,,
\]
and
\begin{equation}
	\cosh\pB{\frac z2}
	=\prod_{n\geq0}\left(1+\frac{z^2}{\omega_n^2}\right)\,,
	\quad \omega_n=(2n+1)\pi\,,
	\label{3.68}
\end{equation}
the logarithms of the factors in \eqref{3.68} are summable, since they are
$O(\omega_n^{-2})$.  Thus the products may be divided and rearranged.
Equations \eqref{3.62} and \eqref{3.67} yield
\begin{equation}
\begin{aligned}
	(Z_{\rm free}^{\rm bath})^{-1}Z_\Sigma^{\rm bath}
	=\bigg(\prod_{\alpha=1}^M\cosh(x_\alpha/2)\bigg)^{-1}
	\prod_{j=1}^M\cosh(\varepsilon_j/2)=\prod_{n\geq0}\Big(1+
	\sum_{\alpha=1}^M\frac{s_\alpha}{\omega_n^2+x_\alpha^2}
	\Big)
	=Z_{\rm Pf}(\Sigma)\,.
\end{aligned}
	\label{3.69}
\end{equation}
This proves the first relation of \eqref{5.3}. Set
\begin{equation*}
	g_\chi(u)=
	(Z_\Sigma^{\rm bath})^{-1}
	\tr\pB{\e^{(1-u)K_\Sigma^{\rm bath}}\chi
	\e^{uK_\Sigma^{\rm bath}}\chi}\,.
\end{equation*}
Equations \eqref{3.63} and \eqref{3.67} show, for $n\geq0$, that
\begin{equation*}
	\widehat g_\chi(n)=\frac{\ii}{\omega_n+s_\rho(\omega_n)}
	=\widehat{\mathcal D(\Sigma)}(n)\,.
\end{equation*}
The second equality follows from \eqref{2.6} and \eqref{dyson2}.  Moreover,
cyclicity of the trace and the symmetry of a positive Lehmann kernel give
\[
	g_\chi(1-u)=g_\chi(u)\,,
	\quad
	\mathcal D(\Sigma)(1-u)=\mathcal D(\Sigma)(u)\,.
\]
Hence the negative fermionic Fourier coefficients agree as well.  Both
functions are continuous on $[0,1]$, so Fourier uniqueness in $L^2(0,1)$
and continuity give $g_\chi=\mathcal D(\Sigma)$.  Since
$\psi_0=\sqrt2\chi$, multiplication by the partition ratio \eqref{3.69}
proves the second relation of \eqref{5.3}.

It remains to identify the series coefficients for atomic sources and then
pass to arbitrary positive Lehmann sources.  For the free bath in
\eqref{5.1}, define
\[
	R_\alpha=\sqrt2\eta_\alpha\,,
	\quad V_\alpha=-2\ii\chi\eta_\alpha=-\ii\psi_0R_\alpha\,,
\]
and, for $0\leq t\leq1$, put
\[
	R_\alpha(t)=\e^{-tK_{\rm free}^{\rm bath}}R_\alpha
	\e^{tK_{\rm free}^{\rm bath}}\,,
	\quad
	V_\alpha(t)=\e^{-tK_{\rm free}^{\rm bath}}V_\alpha
	\e^{tK_{\rm free}^{\rm bath}}\,.
\]
Then
\begin{equation*}
\begin{aligned}
	K_\Sigma^{\rm bath}-K_{\rm free}^{\rm bath}
	=\sum_{\alpha=1}^M\frac{\sqrt{s_\alpha}}{2}V_\alpha\,,\quad (Z_{\rm free}^{\rm bath})^{-1}
	\tr\pB{\e^{(1-u)K_{\rm free}^{\rm bath}}R_\alpha
	\e^{uK_{\rm free}^{\rm bath}}R_\delta}
	=2\delta_{\alpha\delta}k_{x_\alpha}(u)\,.
\end{aligned}
\end{equation*}
Thus, after the free Wick contraction, each Wick pair of insertions supplies
the exact kernel
\begin{equation*}
	\sum_{\alpha=1}^M\frac{s_\alpha}{4}\,2k_{x_\alpha}(u)
	=\frac{\Sigma(u)}2\,.
\end{equation*}
The restriction of the free bath state to the even-dimensional Clifford
subalgebra generated by
$\eta_1,\zeta_1,\ldots,\eta_M,\zeta_M$ is quasifree.  Since $\psi_0$ commutes with
$K_{\rm free}^{\rm bath}$, its free-bath translates satisfy
$V_\alpha(t)=-\ii\psi_0R_\alpha(t)$.  For an ordered list of $2d$ insertions,
anticommuting every later $\psi_0$ through the preceding $R$'s gives
\begin{equation}
\begin{aligned}
	&V_{\alpha_1}(t_1)\cdots V_{\alpha_{2d}}(t_{2d})
	=(-\ii)^{2d}\psi_0R_{\alpha_1}(t_1)\cdots
	\psi_0R_{\alpha_{2d}}(t_{2d})\\
	&\quad=(-\ii)^{2d}(-1)^{d(2d-1)}
	\psi_0^{2d}R_{\alpha_1}(t_1)\cdots R_{\alpha_{2d}}(t_{2d})
	=R_{\alpha_1}(t_1)\cdots R_{\alpha_{2d}}(t_{2d})\,.
\end{aligned}
	\label{3.73a}
\end{equation}
Conjugation by $\psi_0$ fixes $K_{\rm free}^{\rm bath}$ and the zero or two
fixed $\psi_0$-insertions, while it sends every $V_\alpha$ to $-V_\alpha$;
hence odd orders vanish.  Wick's rule in \eqref{3.73a} contributes
	$(-1)^{\operatorname{cr}(\pi)}$.  When two fixed $\psi_0$-insertions are
	present, every $V_\alpha$-insertion time lying between them contributes one
	additional minus sign.  With $\ell=2d-k$, the parity identity
	$\ell-E_k(\pi)\in2\mathbb Z$ therefore shows that, on the region
	$\mathsf o_k(u)$, this gives
	$(-1)^{E_k(\pi)}$.  Thus, with $\eta=0$ or $1$ according as the fixed
insertions are absent or present, the order-$2d$ Duhamel coefficient is
exactly the corresponding summand of \eqref{3.28}.

If $m_\rho=\rho([0,\infty))$, then
$|\Sigma(u)/2|\leq m_\rho/4$ on $0\leq u\leq1$.  The ordering regions
partition a simplex of mass $1/(2d)!$, while
$|\mathcal P_2(2d)|=(2d)!/(2^dd!)$.  The two fixed insertions only change the
sign in \eqref{3.73a}, because $\psi_0^2=1$ and
$[K_{\rm free}^{\rm bath},\psi_0]=0$.  Thus, with zero or two fixed
$\psi_0$-insertions,
the absolute value of the order-$2d$ coefficient is at most
\begin{equation}
	\frac1{2^dd!}\left(\frac {m_\rho}4\right)^d\,.
	\label{3.74}
\end{equation}

For an arbitrary nonzero finite measure $\rho$ of mass $m_\rho$, choose
finite atomic measures $\rho_j$ with the same mass such that
$\rho_j\overset{w}{\longrightarrow}\rho$.  For every fixed $d$, weak
convergence and dominated convergence in the finitely many simplex integrals
show that the coefficients in \eqref{3.28} converge.
The common summable majorant \eqref{3.74} permits passage through the full
series.  Lemma
\ref{lemma 2.3} gives convergence of $Z_{\rm Pf}(G_{\rho_j})$ and of the
Dyson kernels.  Passing to the limit in \eqref{5.3} proves
\eqref{Yossi} and completes the proof.

\subsection{Proof of Theorem \ref{mainthm2}}
\label{sec3.4}

Since $p(0)=p_{\rm SD}(0)=0$, integrate
\eqref{3.13}. For every fixed $s>0$, the
integrand tends to zero by Theorem \ref{cor6.3} and $a\to1$.  Moreover,
\[
	0\leq K_{q,s}(u)\leq2^{-q}\,,\qquad
	0\leq G_s(u)^q\leq2^{-q}\,,\qquad 0\leq a\leq1\,.
\]
Dominated convergence therefore gives
$p(\beta)\to p_{\rm SD}(\beta)$.

It remains to compare the two pressures.  Put
\[
	F(J)=\log\trn\e^{\beta H(J)}\,,\qquad
	\langle O\rangle_J
	=\frac{\tr(O\e^{\beta H(J)})}
	{\tr\e^{\beta H(J)}}\,.
\]
Then $\partial_{J_A}F=\beta\varkappa\langle\Psi_A\rangle_J$.  For a
fixed $h$-set $T$, the unitaries
$\{\Psi_{T\cup\{x\}}:x\notin T\}$ pairwise anticommute.  Lemma
\ref{lemma 2.7}, followed by summation over $T$, gives
\begin{equation*}
\begin{aligned}
	q\sum_{|A|=q}|\langle\Psi_A\rangle_J|^2
	\leq\binom Nh\,, \quad
	\norm{\nabla F}^2
	\leq\frac{\beta^2(N)_h}{q2^qN^h}\,.
\end{aligned}
\end{equation*}
The Gaussian exponential-moment inequality and Jensen's inequality yield
\begin{equation*}
	0\leq p(\beta)-p_{\rm que}(\beta)
	\leq\frac{\beta^2(N)_h}{q2^{q+1}N^q}
	\longrightarrow0\,.
\end{equation*}
Therefore both pressures converge to $p_{\rm SD}(\beta)$, proving
Theorem \ref{mainthm2}.

\section{Finite-size estimates}
\label{sec6}

This section proves the deletion, distinct-site cocycle, and time-modulus
estimates used in Section \ref{sec4}, which are Propositions \ref{prop3.5}, \ref{prop3.13} and \ref{prop3.15} respectively.

\subsection{Differentiating the trace products}

Let $\mathcal I$ be finite, let $A^{(0)}$ be a fixed Hermitian matrix,
and let $P_\xi=\Psi_{S_\xi}$, $\xi\in\mathcal I$, be Hermitian
Majorana monomials.  For a finite set of sites $\mathsf M$, put
\[
	\alpha_\sigma=\prod_{i\in\mathsf M}\Ad(\psi_i)^{\sigma_i}\,,
	\quad\sigma\in\{0,1\}^{\mathsf M}\,.
\]
For $\mathbf h=(h_\xi)_{\xi\in\mathcal I}\in\mathbb R^{\mathcal I}$,
define
\[
	A_{\mathbf0}(\mathbf h)=A^{(0)}+
	\sum_{\xi\in\mathcal I}h_\xi P_\xi\,,\qquad
A_\sigma(\mathbf h)=\alpha_\sigma\p{A_{\mathbf0}(\mathbf h)}\,.
\]
The fixed term $A^{(0)}$ is needed in the deletion interpolation below;
it is zero in the marked deleted model.  Fix $m\in\mathbb N$,
$\rho,\tau_1,\ldots,\tau_m\in\{0,1\}^{\mathsf M}$,
$t_1,\ldots,t_m\in\mathbb R$, and deterministic matrices
$D_0,\ldots,D_m$ with $\|D_j\|\leq1$.  Put
\begin{equation*}
	\mathcal N(\mathbf h)=\tr\pb{\e^{A_\rho(\mathbf h)}D_0
	\e^{\ii t_1A_{\tau_1}(\mathbf h)}D_1\cdots
	\e^{\ii t_mA_{\tau_m}(\mathbf h)}D_m}\,.
\end{equation*}

\begin{lemma}
\label{lemma 3.2}
For every $r\in\mathbb N$ and $\xi_1,\ldots,\xi_r\in\mathcal I$,
with repetitions allowed,
\begin{equation}
	\abs{\partial_{\xi_1}\cdots\partial_{\xi_r}
	\mathcal N(\mathbf h)}
	\leq\pB{1+\sum_{j=1}^m\abs{t_j}}^r
	\tr\e^{A_{\mathbf0}(\mathbf h)}\,.
	\label{3.8}
\end{equation}
Several fixed inserted products are covered by concatenating their
propagators; their total propagation time is the sum above.
\end{lemma}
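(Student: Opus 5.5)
The plan is to expand all $r$ derivatives by the Duhamel formula, distribute them over the $m+1$ exponential factors, and bound each resulting trace by the generalized H\"older inequality. The constant $1+\sum_j|t_j|$ comes out as a multinomial count: weight $1$ for the real exponential $\e^{A_\rho(\mathbf h)}$ and weight $|t_j|$ for the $j$-th unitary propagator.

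\emph{Step 1: single derivatives.} For a fixed $\xi$, $\partial_\xi A_\sigma(\mathbf h)=\alpha_\sigma(P_\xi)$, which is a Hermitian unitary. Duhamel's formula gives
\[
\partial_\xi\e^{A_\rho}=\int_0^1\e^{sA_\rho}\alpha_\rho(P_\xi)\e^{(1-s)A_\rho}\,\dd s\,,\qquad
\partial_\xi\e^{\ii tA_\tau}=\ii t\int_0^1\e^{\ii stA_\tau}\alpha_\tau(P_\xi)\e^{\ii(1-s)tA_\tau}\,\dd s\,.
\]

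\emph{Step 2: iterated derivatives.} Iterating, $\partial_{\xi_1}\cdots\partial_{\xi_r}\mathcal N$ is a sum over maps $\phi:\{1,\ldots,r\}\to\{0,1,\ldots,m\}$, which assign each derivative to a factor. Here $0$ labels $\e^{A_\rho}$ and $j\geq1$ labels $\e^{\ii t_jA_{\tau_j}}$.

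Suppose $k_0=|\phi^{-1}(0)|$ derivatives land on the real exponential. Expanding $\e^{A_\rho}$ to order $k_0$ gives a sum over the $k_0!$ orderings of the inserted monomials. Each ordering is integrated over the simplex $\Delta_{k_0}(1)$, which has mass $1/k_0!$, so the total simplex weight is $1$. The integrand is
\[
\e^{s_0A_\rho}Y_1\e^{s_1A_\rho}\cdots Y_{k_0}\e^{s_{k_0}A_\rho}\,,\qquad \textstyle\sum_a s_a=1\,,
\]
with each $Y_a$ unitary. Similarly, $k_j=|\phi^{-1}(j)|$ derivatives on the $j$-th propagator produce a sum of $k_j!$ terms. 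Each carries the prefactor $(\ii t_j)^{k_j}$ and a simplex of mass $1/k_j!$, so the total weight is at most $|t_j|^{k_j}$. Each integrand is a product of unitary propagators and unitary monomials, hence has operator norm at most $1$.

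\emph{Step 3: H\"older bound.} In every term, the trace has the form
\[
\tr\pb{\e^{s_0A_\rho}W_1\e^{s_1A_\rho}\cdots W_{k_0}\e^{s_{k_0}A_\rho}W_{k_0+1}}\,.
\]
Each $W$ is a product of the $D$'s, unitaries and propagators, so $\|W\|\leq1$. Apply the generalized H\"older inequality with exponents $1/s_a$, which satisfy $\sum_a s_a=1$; any factor with $s_a=0$ is simply the identity and can be dropped. Since $\|\e^{sA_\rho}\|_{1/s}=(\tr\e^{A_\rho})^{s}$, each trace is at most $\tr\e^{A_\rho(\mathbf h)}$. This equals $\tr\e^{A_{\mathbf 0}(\mathbf h)}$, because $\alpha_\rho$ is a unitary conjugation.

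\emph{Step 4: summing.} Summing over $\phi$ gives at most
\[
\sum_\phi\prod_{j\geq1}|t_j|^{k_j}=\pB{1+\sum_j|t_j|}^r\,,
\]
which proves \eqref{3.8}. For several fixed inserted products, concatenating their propagators reduces to the same computation, with total time equal to the sum of all $|t_j|$.

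\emph{Main obstacle.} The step needing care is Step 3. The H\"older argument works only because the real heat lengths of the positive exponential still add to exactly one after all insertions. The derivatives landing on the unitary factors never split the positive exponential, so they cost only operator norm. The rest is bookkeeping of the multinomial count.
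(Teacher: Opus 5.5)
Your proposal is correct and follows essentially the same argument as the paper: a Duhamel expansion of each derivative, an assignment of the $r$ labelled derivatives to the thermal factor and the $m$ propagators giving the multinomial sum $\bigl(1+\sum_j|t_j|\bigr)^r$, and generalized trace H\"older, with heat lengths summing to one, bounding each trace by $\tr\e^{A_{\mathbf 0}(\mathbf h)}$. Your observation that the $k!$ orderings are cancelled by the simplex volume $1/k!$ is what the paper compresses into ``the ordered Duhamel simplices and possible derivative orders give the coefficient $\prod_j|t_j|^{r_j}$''.
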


\begin{proof}
Duhamel's formula gives
\begin{equation*}
	\partial_\xi\e^{zA_\sigma}
	=z\int_0^1\e^{(1-v)zA_\sigma}
	\alpha_\sigma(P_\xi)\e^{vzA_\sigma}\,\dd v\,.
\end{equation*}
If $r_0$ derivatives hit the thermal exponential and $r_j$ hit its
$j$th real-time counterpart, the ordered Duhamel simplices and possible
derivative orders give the coefficient $\prod_{j=1}^m|t_j|^{r_j}$.
Summing the assignments of the $r$ labelled derivatives gives
\[
	\sum_{r_0+\cdots+r_m=r}\binom r{r_0,\ldots,r_m}
	\prod_{j=1}^m|t_j|^{r_j}
	=\pB{1+\sum_{j=1}^m|t_j|}^r\,.
\]
Every resulting trace is a product of contractions, unitary propagators,
and factors $\e^{s_bA_\rho}$, where $s_b\geq0$ and
$\sum_bs_b=1$.  Generalized trace H\"older gives
\[
	\abs{\tr W}\leq\prod_{b:s_b>0}
	\norm{\e^{s_bA_\rho}}_{1/s_b}
	=\tr\e^{A_{\mathbf0}}\,.
\]
Together with the preceding assignment sum, this proves \eqref{3.8}.
\end{proof}

\subsection{Deletion estimates}

Return to the cavity sites $\qq{N-1}$ and recall
\[
	B=\beta\varkappa\sum_{E\in\binom{\qq{N-1}}q}J_E\Psi_E\,.
\]
For a finite $S\subset\qq{N-1}$ and $\sigma\in\{0,1\}^S$, write
$
	\alpha_\sigma=\prod_{i\in S}\alpha_i^{\sigma_i}$, $\alpha_i=\Ad(\psi_i).
$ We write $\mathbf0$ for the zero sign pattern.
For $I\subset\qq{N-1}$, define
\begin{equation*}
\begin{aligned}
	\operatorname{Del}(I)=\bigg\{E\in\binom{\qq{N-1}}q:|E\cap I|\geq2\bigg\}\,,\quad B^{[I]}=B-\beta\varkappa
	\sum_{E\in\operatorname{Del}(I)}J_E\Psi_E\,.
\end{aligned}
\end{equation*}

\begin{proposition}[The deleted model]
\label{prop3.5}
Fix $\beta>0$ and $p\in\mathbb N$.  For $I\subset\qq{N-1}$,
$\rho,\tau_1,\ldots,\tau_p\in\{0,1\}^I$, $t_1,\ldots,t_p\in\mathbb R$,
put $T_{\mathbf t}=1+\sum_{a=1}^p|t_a|$.  For
$H\in\{B,B^{[I]}\}$, define
\begin{equation*}
	\mathcal W(H)=(\E\tr\e^H)^{-1}\E\tr\pB{\e^{\alpha_\rho(H)}
	\prod\nolimits_{a=1}^p\e^{\ii t_a\alpha_{\tau_a}(H)}}\,.
\end{equation*}
The product is ordered by increasing $a$ and is empty when $p=0$.
Uniformly in all the displayed choices,
\begin{equation}
	\abs{\mathcal W(B)-\mathcal W(B^{[I]})}
	\leq C_q\beta^2T_{\mathbf t}^2|I|^2N^{-1}\,.
	\label{3.10}
\end{equation}
\end{proposition}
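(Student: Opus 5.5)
We interpolate between the two Hamiltonians. Put $B_\theta=B^{[I]}+\theta\,\beta\varkappa\sum_{E\in\operatorname{Del}(I)}J_E\Psi_E$ for $\theta\in[0,1]$, so $B_1=B$ and $B_0=B^{[I]}$. Write
\[
\mathcal N(\theta)=\E\tr\pB{\e^{\alpha_\rho(B_\theta)}\prod_a\e^{\ii t_a\alpha_{\tau_a}(B_\theta)}}\,,\qquad
\mathcal Z(\theta)=\E\tr\e^{B_\theta}\,.
\]
Then $\mathcal W(B_\theta)=\mathcal N(\theta)/\mathcal Z(\theta)$, and it suffices to bound $\partial_\theta$ of this ratio uniformly in $\theta$.

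Differentiating in $\theta$ brings down $\beta\varkappa\sum_{E\in\operatorname{Del}(I)}J_E\,\partial_{J_E}$, in the sense of the chain rule for the $\theta$-dependence. Gaussian integration by parts in $J_E$ converts $\E[J_E\,\partial_E\mathcal N]$ into $\theta\beta\varkappa\,\E[\partial_E^2\mathcal N]$, where $\partial_E$ is the derivative in the coefficient $h_E$ of $P_E=\Psi_E$. This is exactly the setting of Lemma \ref{lemma 3.2}, with $\mathsf M=I$, $A^{(0)}=B^{[I]}$, the fields $h_E$ indexed by $\operatorname{Del}(I)$, and $D_j=\mathbbm 1$. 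Note that $\alpha_\sigma(B_\theta)$ is a linear combination of the same monomials with signs, so Lemma \ref{lemma 3.2} applies.

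Lemma \ref{lemma 3.2} with $r=2$ gives
\[
\abs{\E\,\partial_E^2\mathcal N}\leq T_{\mathbf t}^2\,\E\tr\e^{B_\theta}\,.
\]
The same bound holds for $\mathcal Z$ with $T=1$, since $\partial_E^2$ of $\tr\e^{B_\theta}$ is also controlled by $\tr\e^{B_\theta}$. We therefore get
\[
\abs{\partial_\theta\mathcal N}\leq\beta^2\varkappa^2\abs{\operatorname{Del}(I)}\,T_{\mathbf t}^2\,\mathcal Z(\theta)\,,\qquad
\abs{\partial_\theta\mathcal Z}\leq\beta^2\varkappa^2\abs{\operatorname{Del}(I)}\,\mathcal Z(\theta)\,.
\]
We also have $\abs{\mathcal N}\leq\mathcal Z$ by trace H\"older, since $\alpha_\rho$ is a unitary conjugation and the remaining factors are unitary. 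Consequently
\[
\absb{\partial_\theta(\mathcal N/\mathcal Z)}\leq\abs{\partial_\theta\mathcal N}/\mathcal Z+\abs{\mathcal N}\,\abs{\partial_\theta\mathcal Z}/\mathcal Z^2\leq 2\beta^2\varkappa^2\abs{\operatorname{Del}(I)}\,T_{\mathbf t}^2\,.
\]

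Finally we count. We have $\abs{\operatorname{Del}(I)}\leq\binom{|I|}2\binom{N-3}{q-2}\leq C_q|I|^2N^{q-2}$, while $\varkappa^2=(q-1)!/(2^qN^{q-1})$. Hence $\beta^2\varkappa^2\abs{\operatorname{Del}(I)}\leq C_q\beta^2|I|^2N^{-1}$, and integrating over $\theta\in[0,1]$ gives \eqref{3.10}.

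The main point requiring care is the integration-by-parts step. The $\theta$-derivative of $\mathcal Z(\theta)$ contains $\theta$-dependent Gaussian factors, and one must check that after integration by parts every term is of the form covered by Lemma \ref{lemma 3.2}. This holds because the Duhamel products there already allow repeated labels and arbitrary sign patterns $\alpha_\sigma$. Uniformity in $\rho,\tau_a,t_a$ then follows, since the constants in Lemma \ref{lemma 3.2} depend only on $\sum_a|t_a|$.
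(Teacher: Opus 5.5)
Your proposal is correct and follows essentially the paper's proof. Both arguments interpolate the deleted couplings, use Gaussian integration by parts to produce $\partial_E^2$, apply Lemma \ref{lemma 3.2} with $r=2$ (and $r=0$ for $\abs{\mathcal N}\leq\mathcal Z$), use the quotient rule, and count $\abs{\operatorname{Del}(I)}\leq C_q|I|^2N^{q-2}$. The only cosmetic difference is that the paper interpolates with $\sqrt{s}$, which yields the factor $\tfrac12\beta^2\varkappa^2$ directly; your linear $\theta$-interpolation yields $\theta\beta^2\varkappa^2\leq\beta^2\varkappa^2$, which works equally well and avoids the paper's separate extension to the endpoint $s=0$.
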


\begin{proof}
For $0\leq s\leq1$ and a sign pattern $\sigma$ supported on $I$, put
\begin{equation*}
	B_\sigma(s)=\alpha_\sigma\bigg(
	B^{[I]}+\sqrt{s}\,\beta\varkappa
	\sum_{E\in\operatorname{Del}(I)}J_E\Psi_E
	\bigg)\,.
\end{equation*}
Then $B_\sigma(1)=\alpha_\sigma(B)$ and
$B_\sigma(0)=\alpha_\sigma(B^{[I]})$.  Define
\begin{equation*}
	\mathcal N_s=\E\tr\pB{\e^{B_\rho(s)}
	\prod\nolimits_{a=1}^p\e^{\ii t_aB_{\tau_a}(s)}}\,,\qquad
	Z_s=\E\tr\e^{B_{\mathbf0}(s)}\,,\qquad
	\mathcal W_s=Z_s^{-1}\mathcal N_s\,.
\end{equation*}
Thus $\mathcal W_1=\mathcal W(B)$ and
$\mathcal W_0=\mathcal W(B^{[I]})$.  For $E\in\operatorname{Del}(I)$, put
\[
	h_E(s)=\sqrt{s}\,\beta\varkappa J_E\,,
	\quad 0\leq s\leq1\,.
\]
Write $\mathbf h(s)=(h_E(s))_{E\in\operatorname{Del}(I)}$.
With $\partial_E=\partial/\partial h_E$, Gaussian integration by parts gives,
for $s>0$,
\begin{equation}
\begin{aligned}
	&\mathcal N_s'
	=\frac{\beta^2\varkappa^2}{2}
	\sum_{E\in\operatorname{Del}(I)}
	\E\,\partial_E^2\mathcal N(\mathbf h(s))\quad \mbox{and}\quad Z_s'
	=\frac{\beta^2\varkappa^2}{2}
	\sum_{E\in\operatorname{Del}(I)}
	\E\,\partial_E^2\tr\e^{B_{\mathbf0}(s)}\,.
\end{aligned}
	\label{3.11}
\end{equation}
Here $\partial_{J_E}=\sqrt{s}\,\beta\varkappa\,\partial_E$; continuity and
finite-dimensional exponential domination extend \eqref{3.11} to $s=0$.
Condition on $(J_E)_{E\notin\operatorname{Del}(I)}$ and apply Lemma
\ref{lemma 3.2} with $A^{(0)}=B^{[I]}$, $h_E=h_E(s)$, and
$P_E=\Psi_E$, taking all $D_a=\mathbbm1$.  The case $r=2$,
followed by \eqref{3.11}, bounds
$\mathcal N_s'$ and $Z_s'$, while $r=0$ bounds $\mathcal N_s$; thus
\begin{equation*}
	\abs{\mathcal N_s'}+\abs{Z_s'}
	\leq C\beta^2\varkappa^2T_{\mathbf t}^2
	|\operatorname{Del}(I)|Z_s\,,
	\quad
	\abs{\mathcal N_s}\leq\E\abs{\mathcal N}\leq Z_s\,.
\end{equation*}
Consequently,
\[
	\abs{\mathcal W_s'}
	\leq Z_s^{-1}|\mathcal N_s'|
	+Z_s^{-2}|\mathcal N_s||Z_s'|
	\leq C\beta^2\varkappa^2T_{\mathbf t}^2|\operatorname{Del}(I)|\,.
\]
Since
\begin{equation*}
	|\operatorname{Del}(I)|
	=\sum_{j=2}^q\binom{|I|}{j}
	\binom{N-1-|I|}{q-j}
	\leq C_q|I|^2N^{q-2}\,,
\end{equation*}
and $\varkappa^2=(q-1)!/(2^qN^{q-1})$, integration over $s\in[0,1]$ proves
\eqref{3.10}.
\end{proof}

\subsection{The marked deleted model}

Let $\mathsf B$ be a set of $n$ bulk sites, let $\mathsf M$ be a
disjoint set of $k$ selected sites, and put $N=n+k+1$ and $h=q-1$.
Let
\[
	(J_E)_{E\in\binom{\mathsf B}q}\,,\qquad
	(g_{i,T})_{i\in\mathsf M,\ T\in\binom{\mathsf B}h}
\]
be independent standard real Gaussians.  With
$\gamma=\beta\varkappa$, define
\begin{equation*}
	K=\gamma\sum_{E\in\binom{\mathsf B}q}J_E\Psi_E\,,\qquad
	X_i=\gamma\sum_{T\in\binom{\mathsf B}h}
	g_{i,T}\Psi_{\{i\}\cup T}\quad(i\in\mathsf M)\,.
\end{equation*}
Put $\alpha_i=\Ad(\psi_i)$ and
$\alpha_\sigma=\prod_{i\in\mathsf M}\alpha_i^{\sigma_i}$.  Since
$q$ is even,
\begin{equation*}
	A_\sigma=K+\sum_{i\in\mathsf M}(-1)^{\sigma_i}X_i
	=\alpha_\sigma(A_{\mathbf0})\,,\qquad
	\sigma\in\{0,1\}^{\mathsf M}\,.
\end{equation*}
This is exactly the Hamiltonian obtained by deleting all interactions that
contain at least two selected sites.  Namely, for
$\mathsf M=I\subset\qq{N-1}$, $\mathsf B=\qq{N-1}\setminus I$, and
$g_{i,T}=J_{\{i\}\cup T}$, one has $A_{\mathbf0}=B^{[I]}$.

Let
\begin{equation*}
	Z=\E\tr\e^{A_{\mathbf0}}\,,\qquad
	\Phi_\sigma(D)=Z^{-1}\E\tr\pB{\e^{A_\sigma}D}\,.
\end{equation*}
Every $\Phi_\sigma$ is a state on the random matrix expressions for which
the expectation exists.  If $e_i$ is the indicator of $\{i\}$ and
addition is modulo two, exact sign covariance gives
\begin{equation*}
	\alpha_i(A_\sigma)=A_{\sigma+e_i}\,,\qquad
	\Phi_{\sigma+e_i}(D)=\Phi_\sigma\p{\alpha_i(D)}\,.
\end{equation*}
For $i\in\mathsf M$, define the relative cocycle
\begin{equation}
	C_{i,\sigma}(t)=\e^{-\ii tA_\sigma}
	\e^{\ii tA_{\sigma+e_i}}\,,\qquad t\in\mathbb R\,.
	\label{3.38}
\end{equation}

\subsection{Distinct selected sites}

For $i\in\mathsf M$ and $z\in\mathbb C$, put
\[
	Y_i(z)=\e^{-\ii zA_{\mathbf0}}X_i\e^{\ii zA_{\mathbf0}}\,.
\]
The cocycle estimates use the following field bound.  We display the unitary
$Q$ because it is obtained by expanding a fixed finite product of cocycles.

\begin{lemma}[Distinct-field estimate]
\label{theorem 3.11}
Fix $k\geq2$, $p\in\mathbb N$, and $R\geq0$.  Uniformly in
$\rho,\tau_1,\ldots,\tau_p\in\{0,1\}^{\mathsf M}$, in
$r_1,\ldots,r_p,s,t\in[-R,R]$, and in distinct $a,b\in\mathsf M$,
with
$
	Q=\e^{\ii r_1A_{\tau_1}}\cdots\e^{\ii r_pA_{\tau_p}},
$ one has
\begin{equation}
	\Phi_\rho\pB{Q[Y_a(s),Y_b(t)]
	[Y_a(s),Y_b(t)]^*Q^*}\longrightarrow0
	\quad\text{as }n\longrightarrow\infty\,.
	\label{3.35d}
\end{equation}
The empty product is understood when $p=0$, and the convergence is
uniform in the actual selected-site labels.
\end{lemma}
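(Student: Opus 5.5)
We plan to expand the commutator in the Gaussian labels and show that it is small on average. Write
\[
	Y_a(s)=\gamma\sum_{T}g_{a,T}\,\Psi_{\{a\}\cup T}(s)\,,\qquad
	\Psi_{\{a\}\cup T}(s)=\e^{-\ii sA_{\mathbf0}}\Psi_{\{a\}\cup T}\e^{\ii sA_{\mathbf0}}\,,
\]
and similarly for $Y_b(t)$. The quantity in \eqref{3.35d} is then a quadruple sum
\[
	\gamma^4\sum_{T_1,T_2,T_3,T_4}\E\,g_{a,T_1}g_{b,T_2}g_{b,T_3}g_{a,T_4}\,(\cdots)\,.
\]
We remove the explicit Gaussian factors by Gaussian integration by parts in the variables $g_{a,T},g_{b,T}$. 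The derivatives may hit the other $g$'s, which gives Wick pairings $T_1=T_4$, $T_2=T_3$ (and the cross pairings, which vanish because $a\ne b$ makes the families independent). They may also hit the exponentials $\e^{A_\rho}$, the unitaries $\e^{\ii r_jA_{\tau_j}}$ in $Q$, and the propagators defining $Y(s),Y(t)$. All derivatives that hit exponentials are controlled by Lemma \ref{lemma 3.2}. Each such derivative carries a factor $\gamma\leq C N^{-(q-1)/2}$, with total time bounded by $1+C(p+2)R$. The sum over a free label $T$ produces only $\binom{n}{h}\sim N^{h}/h!$ terms. Consequently, every term in which some label is not paired with a matching label is suppressed by a power of $N^{-1/2}$.

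The leading term has pairings $T_1=T_4$ and $T_2=T_3$, with all derivatives absorbed by the Wick contraction, and it has size
\[
	\gamma^4\binom{n}{h}^2=O_\beta(1)\,.
\]
We therefore need to show that it is small. Here we use the structure of the commutator. At $s=t=0$ the monomials $\Psi_{\{a\}\cup T}$ and $\Psi_{\{b\}\cup T'}$ commute when $|T\cap T'|$ is even, and in particular when the labels are disjoint, because $a\neq b$. Hence $[Y_a(0),Y_b(0)]$ only involves overlapping pairs $(T,T')$, and there are $O(N^{2h-1})$ of them against the normalization $\gamma^2$. This gives a factor $N^{-1}$ at $s=t=0$. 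For general $s,t$ we write
\[
	[Y_a(s),Y_b(t)]=\e^{-\ii sA_{\mathbf0}}\,[X_a,Y_b(t-s)]\,\e^{\ii sA_{\mathbf0}}
\]
and expand $Y_b(t-s)$ by Duhamel around $X_b$:
\[
	Y_b(u)-X_b=-\ii\int_0^u\e^{-\ii vA_{\mathbf0}}[A_{\mathbf0},X_b]\e^{\ii vA_{\mathbf0}}\,\dd v\,.
\]
Iterating this expansion, the commutator $[X_a,\cdot]$ acts on the time-evolved field. Its contribution is nonzero only through a chain of interaction labels connecting the support of $X_a$ to that of $X_b$. Each link in the chain that shares a site with the previous one costs a factor $N^{-1}$ relative to the free label count, by the same counting as in \eqref{3.23a}. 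The exception is a link that rejoins $a$ or $b$ themselves, but in the marked deleted model $A_{\mathbf0}$ contains no term with both $a$ and $b$, and $K$ contains neither. Therefore every surviving configuration requires at least one coincidence of bulk sites between a $T$ attached to $a$ and a $T'$ attached to $b$, possibly via intermediate $K$-labels.

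The main obstacle is the time expansion. The Duhamel series in $v$ is not obviously summable for fixed real times of size $R$, since the bound from Lemma \ref{lemma 3.2} grows like $(CR)^r$ with only the factorial decay from the simplex volume. We will first truncate the Gaussian/Duhamel expansion at order $r$. Every term is a normalized $\Phi_\rho$-trace of unitaries and monomials, bounded by one via the trace Hölder argument of Lemma \ref{lemma 3.2}, so the order-$r$ term is bounded by $(C_\beta R)^r/r!$ times a combinatorial factor. This makes the tail uniformly small in $n$, and each fixed-order term is then $O_{r,R,p}(N^{-1})$ by the overlap counting above. Letting $n\to\infty$ and then $r\to\infty$ gives \eqref{3.35d}. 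Uniformity in the selected labels follows because all bounds depend only on $k,p,R,\beta,q$. If the overlap counting for the $K$-mediated chains proves too crude, a fallback is to first pass to the model where the terms of $K$ meeting $\operatorname{supp}X_a\cup\operatorname{supp}X_b$ are deleted, using the interpolation of Proposition \ref{prop3.5}, and then argue by exact commutation.
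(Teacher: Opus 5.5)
There is a genuine gap, at exactly the step you flag as the main obstacle. Your truncation argument needs the order-$r$ term of the real-time expansion to be at most $(C_\beta R)^r/r!$ times a combinatorial factor that leaves the $1/r!$ intact, uniformly in $n$. That factor is not harmless.

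The order-$r$ term contains the nested commutator $(\ad A_{\mathbf0})^rX_b$, which is not a product of unitaries. You only reach traces that Lemma \ref{lemma 3.2} bounds after Gaussian integration by parts over its roughly $2r$ label occurrences. The number of partial matchings and of spanning trees of the resulting label blocks is of order $C^rr^r$ for the $L^2(\Phi_\rho)$-norm; compare the bound $C^{r+1}(r+1)^{2r+4}$ for the quadratic form in \eqref{3.25}. This cancels the $1/r!$ and leaves only a geometric series $(C\mathrm eR)^r$. So your tail bound is uniformly small only for $R$ below a fixed radius $\delta>0$, not for arbitrary fixed $R$. This factorial growth reflects genuine operator growth in SYK rather than lossy counting, so sharper bookkeeping will not rescue a real-time truncation.

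Three smaller points. First, unpaired occurrences are not suppressed by powers of $N^{-1/2}$. Each carries $\gamma^2$ against a free sum of size $\binom nh$, and $\gamma^2\binom nh=O(1)$; compare \eqref{3.56}. All the smallness must come from the support constraint, through a single block: in the paper, the first $b$-type block whose parent does not contain $b$. Second, your fallback fails. Since $\operatorname{supp}X_a\cup\operatorname{supp}X_b=\{a,b\}\cup\mathsf B$, it would delete all of $K$, which is far outside the range of Proposition \ref{prop3.5}. Third, even then $X_a$ and $X_b$ do not commute exactly.

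The missing idea is analytic continuation in the two times. The paper sets $W(z,w)=[Y_a(z),Y_b(w)]$ and $f_n(z,w)=-\Phi_\rho(QW(z,w)^2Q^*)$. Using $W^2$ rather than $WW^*$ makes $f_n$ entire, and $W^*=-W$ at real times recovers the quantity in \eqref{3.35d}. It then proves two Taylor bounds with the same $C^{r+1}(r+1)^{r+4}$ growth:
\begin{itemize}
\item at the origin, with an extra factor $n^{-1}$; this is your overlap counting, organized by a spanning tree;
\item at every real center $(x,y)\in[-R,R]^2$, without that factor; this is a rooted-forest count in which every component contains an initial field, so all label sums are $O(1)$.
\end{itemize}
These give $|f_n|\le Cn^{-1}$ on a fixed polydisc at the origin, and uniform boundedness on a connected neighbourhood $\Omega_R$ of the real square. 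Montel's theorem then forces every subsequential limit to vanish on $\Omega_R$, which yields uniform convergence to zero on $[-R,R]^2$. Your fixed-order $O(N^{-1})$ estimate supplies only the first ingredient. Without the real-center bounds and the continuation step, your argument reaches only $|s|,|t|<\delta$.
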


\begin{proof}
Put $A=A_{\mathbf0}$ and
\[
	W(z,w)=[Y_a(z),Y_b(w)]\,,\qquad
	f_n(z,w)=-\Phi_\rho\pB{QW(z,w)^2Q^*}\,.
\]
For $s,t\in\mathbb R$,
\[
	W(s,t)^*=-W(s,t)\,,\qquad
	f_n(s,t)=\Phi_\rho\pB{QW(s,t)W(s,t)^*Q^*}\geq0\,.
\]
Since $\norm{Y_i(z)}\leq\e^{2|\Im z|\norm A}\norm{X_i}$,
on every compact $K\subset\mathbb C^2$,
\[
	\abs{\tr\pB{\e^{A_\rho}QW(z,w)^2Q^*}}
	\leq4\tr\e^A\norm{X_a}^2\norm{X_b}^2
	\e^{4(|\Im z|+|\Im w|)\norm A}
	\leq P_n(|\mathbf g|)\e^{C_{K,n}|\mathbf g|}\,.
\]
Here $\mathbf g$ is the finite vector of all the Gaussians and $P_n$
is a polynomial.  The right-hand side is Gaussian-integrable, so $f_n$
is entire.  We prove, for $r=u+v$,
\begin{equation}
	\abs{\partial_z^u\partial_w^vf_n(0,0)}
	\leq n^{-1}C^{r+1}(r+1)^{r+4}\,,
	\label{3.30}
\end{equation}
and, uniformly in $x,y\in[-R,R]$,
\begin{equation}
	\absB{\partial_\zeta^u\partial_\omega^v
	f_n(x+\zeta,y+\omega)\big|_{\zeta=\omega=0}}
	\leq C^{r+1}(r+1)^{r+4}\,.
	\label{3.35}
\end{equation}
Here $C=C(q,\beta,k,p,R)$.

The two estimates have different roles.  At the origin, a tree joining
the distinguished type-$a$ and type-$b$ occurrences produces the
factor $n^{-1}$ in \eqref{3.30}.  At a real center, the analogous
forest estimate gives only \eqref{3.35}, which supplies the local
boundedness needed for analytic continuation.

We encode each Gaussian occurrence by its coefficient, Clifford monomial,
and support.  Use the disjoint index set
\[
	\mathcal I_0=\binom{\mathsf B}q\,,\qquad
	\mathcal I_i=\hB{(i,T):T\in\binom{\mathsf B}h}\,,\qquad
	\mathcal I=\mathcal I_0\sqcup
	\mathop{\bigsqcup}_{i\in\mathsf M}\mathcal I_i\,,
\]
with
\[
	(h_\xi,P_\xi,S_\xi)=
	\begin{cases}
		(\gamma J_E,\Psi_E,E)\,,&\xi=E\in\mathcal I_0\,,\\
		(\gamma g_{i,T},\Psi_{\{i\}\cup T},\{i\}\cup T)\,,
		&\xi=(i,T)\in\mathcal I_i\,.
	\end{cases}
\]
Thus $A=\sum_{\xi\in\mathcal I}h_\xi P_\xi$ and
$X_i=\sum_{\xi\in\mathcal I_i}h_\xi P_\xi$.
For $\boldsymbol\xi=(\xi_0,\ldots,\xi_u)$, with
$(\ad X)Y=[X,Y]$, set
\[
	h_{\boldsymbol\xi}=\prod_{j=0}^uh_{\xi_j}\,,\qquad
	P_{\boldsymbol\xi}=(\ad P_{\xi_u})\cdots
	(\ad P_{\xi_1})P_{\xi_0}\,.
\]
Then
\[
	(\ad A)^uX_a
	=\sum_{\boldsymbol\xi\in\mathcal I_a\times\mathcal I^u}
	h_{\boldsymbol\xi}P_{\boldsymbol\xi}\,.
\]
If $P_{\boldsymbol\xi}\ne0$, then
\[
	\norm{P_{\boldsymbol\xi}}=2^u\,,\qquad
	S_{\xi_j}\cap
	\p{S_{\xi_0}\triangle\cdots\triangle S_{\xi_{j-1}}}
	\ne\emptyset\quad(1\leq j\leq u)\,.
\]
Join two positions in a chain when their supports intersect.  The second
condition above says that every new position is joined to an earlier one,
so the resulting graph is connected.
Since $q$ is even,
\[
	[P_{\boldsymbol\xi},P_{\boldsymbol\eta}]\ne0
	\ \Longrightarrow\
	\p{S_{\xi_0}\triangle\cdots\triangle S_{\xi_u}}
	\cap\p{S_{\eta_0}\triangle\cdots\triangle S_{\eta_v}}
	\ne\emptyset
	\ \Longrightarrow\
	S_{\xi_c}\cap S_{\eta_e}\ne\emptyset
\]
for some $c,e$.  Thus the joint support-intersection graph is connected, and
the complete Gaussian labels can be summed along a spanning tree.
In particular, with $r=u+v$,
\[
\begin{aligned}
	D_{u,v}
	=[(\ad A)^uX_a,(\ad A)^vX_b]
	=\sum_{\substack{\boldsymbol\xi\in\mathcal I_a\times\mathcal I^u\\
	\boldsymbol\eta\in\mathcal I_b\times\mathcal I^v}}
	h_{\boldsymbol\xi}h_{\boldsymbol\eta}
	[P_{\boldsymbol\xi},P_{\boldsymbol\eta}]\,,\quad  [P_{\boldsymbol\xi},P_{\boldsymbol\eta}]\ne0
	\ \Longrightarrow\
	\norm{[P_{\boldsymbol\xi},P_{\boldsymbol\eta}]}=2^{r+1}\,.
\end{aligned}
\]

Fix two nonzero terms in $D_{u,v}D_{u,v}^*$.  There are $d=2r+4$
Gaussian occurrences.  Concatenate their labels as
$\boldsymbol\nu=(\boldsymbol\xi,\boldsymbol\eta,
\boldsymbol\xi',\boldsymbol\eta')=(\nu_1,\ldots,\nu_d)$, and normalize
their Clifford outputs by
\[
	\mathcal U=2^{-(r+1)}[P_{\boldsymbol\xi},P_{\boldsymbol\eta}]\,,\qquad
	\mathcal U'=2^{-(r+1)}[P_{\boldsymbol\xi'},P_{\boldsymbol\eta'}]\,.
\]
Thus $\|\mathcal U\|=\|\mathcal U'\|=1$, and the normalized trace is
\[
	F^0_{\boldsymbol\nu}(\mathbf h)
	=Z^{-1}\tr\pb{\e^{A_\rho}Q\mathcal U\mathcal U'^*Q^*}\,.
\]
For a partial matching $\pi$ of $\qq d$, set
\[
	\operatorname{sing}\pi=\hB{c:c\text{ is unpaired}}\,,\qquad
	b_\pi=d-|\pi|\,,\qquad
	\partial_{\pi,\boldsymbol\nu}
	=\prod_{c\in\operatorname{sing}\pi}\partial_{\nu_c}\,.
\]
With $\partial_\xi=\partial/\partial h_\xi$, repeated Gaussian
integration by parts gives, for every smooth $F$,
\begin{equation}
	\E\pB{\prod_{c=1}^dh_{\nu_c}F(\mathbf h)}
	=\sum_\pi\gamma^{2b_\pi}
	\prod_{\{c,e\}\in\pi}\mathbbm1_{\{\nu_c=\nu_e\}}
	\E\partial_{\pi,\boldsymbol\nu}F(\mathbf h)\,.
	\label{3.19}
\end{equation}
A pair of $\pi$ is a Gaussian contraction, while each unpaired
occurrence differentiates $F$.  Thus $b_\pi=d-|\pi|$ label blocks
remain, each carrying one factor $\gamma^2$.
Since $Q,Q^*$ have total propagation time at most $2pR$,
Lemma \ref{lemma 3.2} gives
\begin{equation}
	\abs{\E\partial_{\pi,\boldsymbol\nu}
	F^0_{\boldsymbol\nu}(\mathbf h)}
	\leq(1+2pR)^{d-2|\pi|}\,.
	\label{3.21}
\end{equation}

To sum these blocks we use three elementary label masses.  Since
$\gamma^2=\beta^2(q-1)!/(2^qN^{q-1})$, for
$S\subset\mathsf B\cup\mathsf M$, $|S|\leq q$,
\begin{equation}
\begin{aligned}
	&\gamma^2|\mathcal I_i|=\gamma^2\binom nh\leq C\,,\quad 
	\gamma^2\big|\{E\in\mathcal I_0:E\cap S\ne\emptyset\}\big|
	\leq C\,,\\
	&\gamma^2\big|\{\xi\in\mathcal I_i:
	S_\xi\cap S\ne\emptyset\}\big|
	\leq C\pB{\mathbbm1_{\{i\in S\}}
	+n^{-1}\mathbbm1_{\{i\notin S\}}}\,.
\end{aligned}
	\label{3.23}
\end{equation}
Indeed, the second count is at most $q\binom{n-1}{q-1}$.  In the last
line there are $\binom nh$ choices when $i\in S$; when $i\notin S$,
the bulk set must meet $S\cap\mathsf B$, which gives at most
$q\binom{n-1}{h-1}$ choices.  The factor $n^{-1}$ in this last case
is the only small label mass used below.

The identifications in $\pi$ leave $b_\pi$ blocks.  Fix their Gaussian
families: the four initial occurrences have families $a,b,a,b$, the
remaining $2r$ occurrences choose among $\mathcal I_0$ and
$(\mathcal I_i)_{i\in\mathsf M}$, and paired positions have the same
family.  Hence there are at most $(1+k)^{2r}$ family choices.
The two connected configuration graphs have at most two components; join
their initial type-$a$ blocks by one auxiliary edge if necessary.  Root
a spanning tree $\mathcal T$ at the first type-$a$ block.  Write
$\nu_{\mathcal C}$ and $S_{\mathcal C}$ for the complete label and
support at a block.  If $\mathcal C_*$ is the first block with
$\nu_{\mathcal C_*}\in\mathcal I_b$ on the path to an initial
type-$b$ block, then
its parent does not belong to $\mathcal I_b$, and hence
$b\notin S_{\operatorname{par}\mathcal C_*}$.
\[
\begin{aligned}
	&\chi_0(\boldsymbol\nu)
	:=\mathbbm1_{\{[P_{\boldsymbol\xi},P_{\boldsymbol\eta}]\ne0\}}
	\mathbbm1_{\{[P_{\boldsymbol\xi'},P_{\boldsymbol\eta'}]\ne0\}}\,,\quad 
	\chi_0(\boldsymbol\nu)\leq\sum_{\mathcal T}
	\prod_{\{\mathcal C,\mathcal D\}\in
	\mathcal T\setminus\{\text{auxiliary edge}\}}
	\mathbbm1_{\{S_\mathcal C\cap S_\mathcal D\ne\emptyset\}}\,,\\
	&\sum_{\text{block labels}}\gamma^{2b_\pi}
	\prod_{\{\mathcal C,\mathcal D\}\in
	\mathcal T\setminus\{\text{auxiliary edge}\}}
	\mathbbm1_{\{S_\mathcal C\cap S_\mathcal D\ne\emptyset\}}
	\leq n^{-1}C^{b_\pi}\,.
\end{aligned}
\]
The tree only orders the block-label sums.  Summing from the leaves to the
root, \eqref{3.23} costs $C$ at every block except $\mathcal C_*$,
where it costs $Cn^{-1}$; this proves the last line above.  It remains
to count the spanning trees and partial matchings:
\[
	N_{\rm tree}\leq b_\pi^{b_\pi-2}\,,\qquad
	\big|\{\pi:|\pi|=t\}\big|
	=\frac{d!}{(d-2t)!2^tt!}\leq\frac{d^{2t}}{t!}\,.
\]
With $b_\pi=d-t$, $t!\geq(t/\e)^t$, and
$\vartheta=t/d$,
\[
	\frac{d^{2t}\p{C(d-t)}^{d-t}}{t!d^d}
	\leq C^d\p{\frac{\e}{\vartheta}}^{\vartheta d}
	(1-\vartheta)^{(1-\vartheta)d}\leq C_1^d
	\quad(1\leq t\leq d/2).
\]
The term $t=0$ obeys the same bound; hence
\[
	\sum_{0\leq t\leq d/2}\frac{d^{2t}}{t!}
	\p{C(d-t)}^{d-t}\leq C_2^dd^d\,.
\]
Multiplying the Clifford normalization, the family choices, the derivative
bound \eqref{3.21}, and the tree and matching sums gives
\begin{equation}
\begin{aligned}
	\Phi_\rho\pB{QD_{u,v}D_{u,v}^*Q^*}
	&\leq4^{r+1}(1+k)^{2r}n^{-1}
	\sum_{0\leq t\leq d/2}\frac{d^{2t}}{t!}
	(1+2pR)^{d-2t}\p{C(d-t)}^{d-t}\\
	&\leq n^{-1}C^{r+1}(r+1)^{2r+4}\,.
\end{aligned}
	\label{3.25}
\end{equation}

Estimate \eqref{3.25} contains the required $n^{-1}$ gain.  At a real center,
the same Gaussian bookkeeping gives a uniform Taylor bound.  The four chains
produced by the squared commutator are now summed by a rooted forest.  Put
\[
	D_{u,v}(x,y)=\Big[
	\e^{-\ii xA}(\ad A)^uX_a\e^{\ii xA},
	\e^{-\ii yA}(\ad A)^vX_b\e^{\ii yA}\Big]\,.
\]
For the two conjugated chain outputs $U,V$,
\[
	[U,V][U,V]^*
	=UVV^*U^*-UVU^*V^*-VUV^*U^*+VUU^*V^*\,.
\]
Thus the squared expression is a sum of four trace words.  Each word has
four connected chains, beginning at two type-$a$ and two type-$b$
occurrences.  If $F^{\rm word}_{x,y,\boldsymbol\nu}$ denotes any
corresponding normalized trace after the four chain outputs have been
divided by $2^u,2^v,2^u,2^v$, then
\[
	\abs{\E\partial_{\pi,\boldsymbol\nu}
	F^{\rm word}_{x,y,\boldsymbol\nu}}
	\leq(1+2pR+8R)^{d-2|\pi|}\,.
\]
Indeed, the four conjugated chains contain eight propagators, of total time
$4|x|+4|y|\leq8R$.

After the pair identifications, every component contains an initial
type-$a$ or type-$b$ block.  For each fixed family choice, let
$\operatorname{For}(\pi)$ be the collection obtained by choosing in every
component one such root and a spanning tree oriented toward it.  Then
\[
\begin{aligned}
	&\chi_{\rm rc}(\boldsymbol\nu)
	:=\prod_{\boldsymbol\zeta\in
	\{\boldsymbol\xi,\boldsymbol\eta,
	\boldsymbol\xi',\boldsymbol\eta'\}}
	\mathbbm1_{\{P_{\boldsymbol\zeta}\ne0\}}\,,\quad 
	&&\chi_{\rm rc}(\boldsymbol\nu)
	\leq\sum_{\mathfrak f\in\operatorname{For}(\pi)}
	\prod_{(\mathcal C\to\mathcal D)\in\mathfrak f}
	\mathbbm1_{\{S_\mathcal C\cap S_\mathcal D\ne\emptyset\}}\,,\\
	&|\operatorname{For}(\pi)|\leq(4b_\pi)^{b_\pi}\,,\quad 
	&&\sum_{\text{block labels}}\gamma^{2b_\pi}
	\prod_{(\mathcal C\to\mathcal D)\in\mathfrak f}
	\mathbbm1_{\{S_\mathcal C\cap S_\mathcal D\ne\emptyset\}}
	\leq C^{b_\pi}\,.
\end{aligned}
\]
Every component is already rooted at a fixed type-$a$ or type-$b$
occurrence, so its block-label sum is $O(1)$ in $n$.  The same
block-sensitive estimate gives
\begin{align} 	\label{3.35b}
	\Phi_\rho\pB{QD_{u,v}(x,y)D_{u,v}(x,y)^*Q^*}
	&\leq4^{r+1}(1+k)^{2r}
	\sum_{0\leq t\leq d/2}\frac{d^{2t}}{t!}
	(1+2pR+8R)^{d-2t}\p{4C(d-t)}^{d-t}\nonumber\\
	&\leq C^{r+1}(r+1)^{2r+4}\,.
\end{align}

It remains to pass from the quadratic estimates for $D_{u,v}$ to the
Taylor coefficients of $f_n$.  The derivative and adjoint identities are
\[
\begin{aligned}
	&\partial_z^u\partial_w^vW(z,w)\big|_{z=w=0}
	=(-\ii)^{u+v}D_{u,v}\,,
	\quad &&D_{u,v}^*=(-1)^{u+v+1}D_{u,v}\,,\\
	&\partial_\zeta^u\partial_\omega^v
	W(x+\zeta,y+\omega)\big|_{\zeta=\omega=0}
	=(-\ii)^{u+v}D_{u,v}(x,y)\,,
	&&D_{u,v}(x,y)^*=(-1)^{u+v+1}D_{u,v}(x,y)\,.
\end{aligned}
\]
At the origin,
\[
	\partial_z^u\partial_w^vf_n(0,0)
	=-(-\ii)^{u+v}
	\sum_{\substack{u_1+u_2=u\\v_1+v_2=v}}
	\binom u{u_1}\binom v{v_1}
	\Phi_\rho\pB{QD_{u_1,v_1}D_{u_2,v_2}Q^*}\,.
\]
Leibniz's rule distributes the derivatives between the two copies of
$W$.  Cauchy--Schwarz turns every resulting mixed trace into a product
of two quadratic estimates: for every summand,
\[
\begin{aligned}
	\abs{\Phi_\rho\pB{QD_{u_1,v_1}D_{u_2,v_2}Q^*}}^2
	&\leq\Phi_\rho\pB{QD_{u_1,v_1}D_{u_1,v_1}^*Q^*}
	\Phi_\rho\pB{QD_{u_2,v_2}^*D_{u_2,v_2}Q^*}\\
	&=\Phi_\rho\pB{QD_{u_1,v_1}D_{u_1,v_1}^*Q^*}
	\Phi_\rho\pB{QD_{u_2,v_2}D_{u_2,v_2}^*Q^*}\,.
\end{aligned}
\]
The binomial coefficients in the preceding sum add to $2^{u+v}$.  If
$m_j=u_j+v_j$, then $m_1+m_2=r$ and
$(m_1+1)^{m_1+2}(m_2+1)^{m_2+2}\leq(r+1)^{r+4}$.
\eqref{3.25} proves \eqref{3.30}; the translated identities and
\eqref{3.35b} prove \eqref{3.35}.

For $|x|,|y|\leq R$, the order-$r$ Taylor weights add to
$(|\zeta|+|\omega|)^r/r!$, and
$(r+1)^r/r!\leq\e^{r+1}$.  Hence, for some $\delta>0$,
\[
	\sup_{|z|+|w|<\delta}|f_n(z,w)|\leq Cn^{-1}\,,\qquad
	\sup_{\substack{|x|,|y|\leq R\\|\zeta|+|\omega|<\delta}}
	|f_n(x+\zeta,y+\omega)|\leq C\,.
\]
The first bound gives a vanishing germ at the origin, while the second
makes $(f_n)$ locally bounded around the real square.
On the connected domain
\[
	\Omega_R=\bigcup_{(x,y)\in[-R,R]^2}
	\hB{(z,w)\in\mathbb C^2:|z-x|+|w-y|<\delta},
\]
every subsequential holomorphic limit vanishes near $(0,0)$, hence
vanishes identically.  If uniform convergence failed, compactness and
Montel's theorem would give $(s_n,t_n)\to(s,t)$, $f_n\to f$, and
$|f_n(s_n,t_n)|\geq\varepsilon>0$, contradicting $f\equiv0$.
This proves \eqref{3.35d}.
\end{proof}

\begin{proposition}
\label{prop3.13}
Fix $k\geq2$, $p\in\mathbb N$, and $R\geq0$.  There is
$\varepsilon_n=\varepsilon_n(q,\beta,k,p,R)\to0$ such that the following
holds.  Let
\[
	Q=\e^{\ii r_1A_{\tau_1}}\cdots\e^{\ii r_pA_{\tau_p}}\,,
\]
where $\rho,\tau_1,\ldots,\tau_p\in\{0,1\}^{\mathsf M}$ and
$r_1,\ldots,r_p\in[-R,R]$.  Uniformly in these choices, in distinct
$i,j\in\mathsf M$, in $s,t\in[-R,R]$, and in
$\epsilon,\delta\in\{1,*\}$, with the convention $U^1=U$,
\begin{equation}
	\Phi_\rho\pB{Q[C_{i,\mathbf0}(s)^\epsilon,
	C_{j,\mathbf0}(t)^\delta]
	[C_{i,\mathbf0}(s)^\epsilon,
	C_{j,\mathbf0}(t)^\delta]^*Q^*}^{1/2}
	\leq\varepsilon_n\,,
	\label{3.45}
\end{equation}
and
\begin{equation}
\begin{aligned}
	\Phi_\rho\pB{Q\p{\alpha_i(C_{j,\mathbf0}(t)^\epsilon)
	-C_{j,\mathbf0}(t)^\epsilon}
	\p{\alpha_i(C_{j,\mathbf0}(t)^\epsilon)
	-C_{j,\mathbf0}(t)^\epsilon}^*Q^*}^{1/2}
	\leq\varepsilon_n\,.
\end{aligned}
	\label{3.46}
\end{equation}
The empty product is allowed when $p=0$.
\end{proposition}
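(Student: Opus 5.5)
The plan is to reduce both estimates to Lemma \ref{theorem 3.11} through an interaction-picture representation of the cocycle. Since $A_{e_i}=A_{\mathbf0}-2X_i$, differentiating \eqref{3.38} gives
\[
	\frac{\dd}{\dd t}C_{i,\mathbf0}(t)
	=-2\ii\,Y_i(t)\,C_{i,\mathbf0}(t)\,,\qquad C_{i,\mathbf0}(0)=\mathbbm1\,.
\]
Thus $C_{i,\mathbf0}(t)$ is the time-ordered exponential of $-2\ii\int_0^tY_i(s')\,\dd s'$. Equivalently,
\[
	C_{i,\mathbf0}(t)=\mathbbm1-2\ii\int_0^tY_i(s')C_{i,\mathbf0}(s')\,\dd s'\,.
\]
The same holds for the adjoint with the order reversed.

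For \eqref{3.45}, I will write the commutator of two cocycles by Duhamel twice. First,
\[
	[C_i(s),C_j(t)]=-2\ii\int_0^s\pb{[Y_i(s'),C_j(t)]\,C_i(s')+Y_i(s')\,[C_i(s'),C_j(t)]}\,\dd s'\,.
\]
The term $[Y_i(s'),C_j(t)]$ is expanded again through the $C_j$ equation, which produces commutators $[Y_i(s'),Y_j(t')]$ sandwiched between unitaries (cocycles) and bounded fields. The term with $[C_i(s'),C_j(t)]$ is handled by a Grönwall argument. I will measure everything in the seminorm
\[
	\|Z\|_{Q}\deq\Phi_\rho(QZZ^*Q^*)^{1/2}\,.
\]
Right multiplication by a unitary preserves it. Left multiplication, however, is the issue: $Q$ sits on the left, and a left factor such as $C_i(s')$ becomes part of a new word. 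The remedy is to absorb such unitaries into an enlarged $Q$. Each $C_{i,\mathbf0}$ is a product $\e^{-\ii sA_{\mathbf0}}\e^{\ii sA_{e_i}}$ of exactly the admissible propagators, so $QC$ is again of the form allowed in Lemma \ref{theorem 3.11}, with $p$ increased by $2$ and $R$ unchanged. The only cost is then a constant depending on $(p,R)$, which is why the lemma was stated uniformly in $Q$.

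The difficulty is the unbounded field $Y_i$. Its norm is not uniformly bounded, only its Gibbs moments are, so a left factor $Y_i(s')$ cannot simply be pulled out. I plan to avoid this by applying Cauchy--Schwarz in $\Phi_\rho$. The first factor is $\Phi_\rho(QW W^*Q^*)$ with $W=[Y_i,Y_j]$, which tends to zero by Lemma \ref{theorem 3.11}. The remaining factor consists of fourth-order Gibbs moments of $Y$'s framed by admissible unitaries, and these are $O(1)$ uniformly by the same Gaussian tree counting used for \eqref{3.35}, via Lemma \ref{lemma 3.2} and the label masses \eqref{3.23}. I regard this $O(1)$ moment bound, with $Y$ factors standing between $Q$ and the commutator, as the main obstacle. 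I would try to obtain it by rerunning the forest estimate behind \eqref{3.35b} with the extra chain outputs, noting that no $n^{-1}$ gain is needed there. The Grönwall step for the self-referential term also needs such a bound, on $\|Y_i(s')Z\|_Q$ in terms of $\|Z\|_{Q'}$. If that fails, I would instead iterate the Dyson series to a fixed finite order: each order is handled by the lemma, and the tail is bounded by $(CR)^m/m!$ using the moment bounds alone. This sidesteps Grönwall.

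For \eqref{3.46}, exact sign covariance gives $\alpha_i(C_{j,\mathbf0}(t))=\e^{-\ii tA_{e_i}}\e^{\ii tA_{e_i+e_j}}$. Since $\alpha_i$ acts trivially on $X_j$, this equals
\[
	C_i(t)^*\,C_{j,\mathbf0}(t)\,C_{i,e_j}(t)\,.
\]
Writing $C_{i,e_j}(t)=C_j(t)^*C_i(t)C_j'$-type identities via the cocycle relation
\[
	\e^{\ii tA_{e_i+e_j}}=\e^{\ii tA_{\mathbf0}}C_j(t)\,\alpha_j\pb{C_i(t)}\,,
\]
the difference $\alpha_i(C_j)-C_j$ becomes a sum of commutators of the form $[C_i^\epsilon,C_j^\delta]$ and terms $\alpha_j(C_i)-C_i$, all multiplied by unitaries. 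The commutator terms are covered by \eqref{3.45}. For the other terms, I note that $\alpha_j(C_i)-C_i$ is itself the Duhamel integral of $Y_i$ under $A_{e_j}$ versus $A_{\mathbf0}$, and the difference of these two flows involves $[Y_i,X_j]$ once more. These are therefore handled by the same Cauchy--Schwarz reduction to Lemma \ref{theorem 3.11}. The case $\epsilon={*}$ follows by taking adjoints and enlarging $Q$ again.
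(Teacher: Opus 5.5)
There is a genuine gap, and it sits exactly where you locate the main obstacle. Your naive Duhamel step
\[
	[C_i(s),C_j(t)]=-2\ii\int_0^s\pb{[Y_i(s'),C_j(t)]C_i(s')+Y_i(s')[C_i(s'),C_j(t)]}\,\dd s'\,,
\]
and likewise your re-expansion of $[Y_i(s'),C_j(t)]$ through the $C_j$ equation, leaves an unbounded field to the \emph{left} of the quantity being estimated. None of your three remedies closes this with the available tools.

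First, a Gr\"onwall inequality of the form $\|Y_i(s')Z\|_Q\leq C\|Z\|_{Q'}$ has no reason to hold uniformly in $n$. $Y_i$ has no uniform operator-norm bound, and a left field factor cannot be absorbed into $Q$.

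Second, Cauchy--Schwarz in a state only separates the two sides of an off-diagonal pairing. In a diagonal term $\Phi_\rho(QYWW^*YQ^*)$ with $W=[Y_i,Y_j]$, the factors $Y$ and $W$ sit in the same vector. So $\Phi_\rho(QWW^*Q^*)\to0$ does not imply $\Phi_\rho(QYWW^*YQ^*)\to0$; a vector state with $W^*e=0\ne W^*Ye$ already breaks the implication.

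Third, in your finite-order Dyson fallback the terms $Y_i(s_1)\cdots Y_i(s_l)[Y_i(s_{l+1}),C_j(t)]C_i(s_{l+1})$ are not covered by Lemma \ref{theorem 3.11}. There the prefix $Q$ must be a product of propagators $\e^{\ii rA_\tau}$. You would need a new version of that lemma with field chains in the prefix, together with high-moment bounds for products of $Y$'s. Neither is in the paper; only the second-moment bound \eqref{3.56} is.

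The missing idea is that the self-referential equation $Z'=-2\ii Y_iZ-2\ii F$ is solved exactly by variation of constants with the unitary propagator $C_i(s,u)=C_i(s)C_i(u)^*$, namely $Z(s)=-2\ii\int_0^sC_i(s,u)F(u)\,\dd u$. Your Dyson series, summed to all orders, is nothing but the expansion of $C_i(s,u)$. Doing this for both cocycles gives the exact identity \eqref{3.43}, in which $[Y_i(u),Y_j(v)]$ is flanked only by unitaries:
the prefix $C_i(s,u)C_j(t,v)$ is a bounded number of factors $\e^{\pm\ii rA_\sigma}$ with $|r|=O(R)$, and it is appended to $Q$;
the suffix $C_j(v)C_i(u)$ cancels against its adjoint.
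Minkowski's inequality plus Lemma \ref{theorem 3.11} then give \eqref{3.45}, with no moment bounds and no Gr\"onwall.

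The same device is needed for \eqref{3.46}. Your identity $\alpha_i(C_j)=C_i^*C_j\,\alpha_j(C_i)$ is correct and equals $C_i^*C_{ij}$ with $C_{ij}(t)=\e^{-\ii tA_{\mathbf0}}\e^{\ii tA_{e_i+e_j}}$. But reducing to $\alpha_j(C_i)-C_i$ just gives the same quantity with $i$ and $j$ exchanged. Instead, compare $C_{ij}$ with $C_iC_j$ using $C_{ij}'=-2\ii(Y_i+Y_j)C_{ij}$ and variation of constants. This yields \eqref{3.44}, again with only unitaries around the field commutator.
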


\begin{proof}
Write $C_i=C_{i,\mathbf0}$, and put
\[
	C_{ij}(t)=\e^{-\ii tA_{\mathbf0}}\e^{\ii tA_{e_i+e_j}}\,,\qquad
	C_i(t,u)=C_i(t)C_i(u)^*\,,\qquad
	C_{ij}(t,u)=C_{ij}(t)C_{ij}(u)^*\,.
\]
Since
\[
	A_{e_i}-A_{\mathbf0}=-2X_i\,,\qquad
	A_{e_i+e_j}-A_{\mathbf0}=-2(X_i+X_j)\,,
\]
direct differentiation gives
\[
	C_i'(t)=-2\ii Y_i(t)C_i(t)\,,\qquad
	C_{ij}'(t)=-2\ii\p{Y_i(t)+Y_j(t)}C_{ij}(t)\,.
\]
Variation of constants applied twice yields
\begin{equation}
\begin{aligned}
	{}[C_i(s),C_j(t)]
	=-4\int_0^s\int_0^t
	C_i(s,u)C_j(t,v)[Y_i(u),Y_j(v)]
	C_j(v)C_i(u)\,\dd v\dd u\,,
\end{aligned}
	\label{3.43}
\end{equation}
and
\begin{equation}
\begin{aligned}
	C_{ij}(t)-C_i(t)C_j(t)
	=4\int_0^t\int_0^s
	C_{ij}(t,s)C_i(s,u)[Y_i(u),Y_j(s)]
	C_i(u)C_j(s)\,\dd u\dd s\,.
\end{aligned}
	\label{3.44}
\end{equation}
The integrals are oriented when an endpoint is negative.  The signs follow
from $[-2\ii Y_i,-2\ii Y_j]=-4[Y_i,Y_j]$ in \eqref{3.43} and
$[-2\ii Y_j,-2\ii Y_i]=4[Y_i,Y_j]$ in \eqref{3.44}.

Every unitary surrounding the field commutator in
\eqref{3.43}--\eqref{3.44} is a product of a fixed number of factors
$\e^{\ii uA_\sigma}$, with $|u|$ bounded by a fixed multiple of
$R$.  In the corresponding
$\Phi_\rho$-quadratic expression, the unitary suffix cancels with its
adjoint and the unitary prefix is appended to $Q$.  Lemma
\ref{theorem 3.11} and Minkowski's inequality therefore give, for a sequence
$\eta_n\to0$,
\[
	\Phi_\rho\pB{Q[C_i(s),C_j(t)]
	[C_i(s),C_j(t)]^*Q^*}^{1/2}
	\leq4|st|\eta_n
\]
and
\begin{equation}
	\Phi_\rho\pB{Q\p{C_{ij}(t)-C_i(t)C_j(t)}
	\p{C_{ij}(t)-C_i(t)C_j(t)}^*Q^*}^{1/2}
	\leq2t^2\eta_n\,.
	\label{3.48}
\end{equation}
The prefix length used in Lemma \ref{theorem 3.11} is larger than $p$
by a fixed number only, so $\eta_n$ depends on the parameters displayed
in the proposition and not on the labels or sign patterns.

The exact sign-change identities are
\begin{equation}
\begin{aligned}
	&\alpha_i\p{C_j(t)}
	=C_i(t)^*C_{ij}(t)\,,\qquad 
	\alpha_i\p{C_j(t)}-C_j(t)
	=C_i(t)^*\p{C_{ij}(t)-C_i(t)C_j(t)}\,,\\
	&\alpha_i\p{C_j(t)^*}-C_j(t)^*
	=\p{C_{ij}(t)-C_i(t)C_j(t)}^*C_i(t)\,.
\end{aligned}
	\label{3.49}
\end{equation}
Thus \eqref{3.48} proves \eqref{3.46} when $\epsilon=1$; adjointing
\eqref{3.44} proves it when $\epsilon=*$.  In the latter case the final
$C_i(t)$ cancels with its adjoint.

Finally,
\[
	[U^*,V]=-U^*[U,V]U^*\,,\qquad
	[U,V^*]=-V^*[U,V]V^*
\]
reduces the adjoint cases in \eqref{3.45} to the first case after changing
the finite unitary prefix.  Adjointing \eqref{3.44} gives the corresponding
bound for the adjoint in \eqref{3.49}.  Enlarging $\eta_n$ by the fixed
factor $4\max(1,R^2)$ proves both assertions.
\end{proof}

\subsection{Time regularity}

\begin{proposition}
\label{prop3.15}
Fix $p\in\mathbb N$ and $R\geq0$.  For $1\leq a\leq p$, choose
$i_a\in\mathsf M$, $\sigma_a\in\{0,1\}^{\mathsf M}$,
$r_a\in[-R,R]$, and $\delta_a\in\{1,*\}$, and put
\begin{equation*}
	Q=\prod_{a=1}^pC_{i_a,\sigma_a}(r_a)^{\delta_a}\,.
\end{equation*}
The empty product is allowed.  Uniformly in $n,k$, in all displayed
choices, in $\rho,\sigma\in\{0,1\}^{\mathsf M}$, in
$i\in\mathsf M$, $\epsilon\in\{1,*\}$, and $s,t\in[-R,R]$,
\[
	\Phi_\rho\pB{Q\p{C_{i,\sigma}(t)^\epsilon-C_{i,\sigma}(s)^\epsilon}
	\p{C_{i,\sigma}(t)^\epsilon-C_{i,\sigma}(s)^\epsilon}^*Q^*}^{1/2}
	\leq C_{q,\beta,p,R}|t-s|\,.
\]
Consequently, the expectation of every fixed product of cocycles and
adjoints is Lipschitz in their time variables, with a constant depending
only on its length and the bounded time interval.
\end{proposition}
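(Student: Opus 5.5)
We write $C=C_{i,\sigma}$ and use the fundamental theorem of calculus for the cocycle. Since $A_{\sigma+e_i}-A_\sigma=\mp2X_i$, with sign $(-1)^{\sigma_i}$, direct differentiation of \eqref{3.38} gives
\[
	C'(t)=\mp2\ii\,\e^{-\ii tA_\sigma}X_i\e^{\ii tA_\sigma}\,C(t)\,,
\]
and hence
\[
	C(t)-C(s)=\mp2\ii\int_s^t\e^{-\ii vA_\sigma}X_i\e^{\ii vA_\sigma}C(v)\,\dd v\,.
\]
The adjoint case follows by taking adjoints: the difference then equals $\pm2\ii\int_s^tC(v)^*\e^{-\ii vA_\sigma}X_i\e^{\ii vA_\sigma}\,\dd v$.

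The quantity to bound is the $\Phi_\rho$-seminorm $\|Y\|_{Q}:=\Phi_\rho(QYY^*Q^*)^{1/2}$. This is a seminorm, since $\Phi_\rho$ is a state, so Minkowski's inequality for the Bochner integral gives
\[
	\|C(t)^\epsilon-C(s)^\epsilon\|_Q\leq2\,|t-s|\sup_{|v|\leq R}\|Z_v\|_Q\,,
\]
where $Z_v$ is the integrand. For $\epsilon=1$, the unitary suffix $C(v)$ cancels with its adjoint, and we are left with $\Phi_\rho\pb{Q\e^{-\ii vA_\sigma}X_i^2\e^{\ii vA_\sigma}Q^*}$. For $\epsilon=*$, the left unitary factor $C(v)^*$ is absorbed into the prefix, giving the same type of expression with prefix $QC(v)^*\e^{-\ii vA_\sigma}$. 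In either case the prefix is a product of at most $2p+2$ real-time propagators $\e^{\ii rA_\tau}$ with $|r|\leq 2R$, because each cocycle is a product of two such propagators.

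The whole statement therefore reduces to the uniform bound
\[
	\Phi_\rho\pb{Q'X_i^2Q'^*}\leq C_{q,\beta,p,R}
\]
for any such prefix $Q'$. We expand $X_i^2=\gamma^2\sum_{T,T'}g_{i,T}g_{i,T'}\Psi_{\{i\}\cup T}\Psi_{\{i\}\cup T'}$ and apply Gaussian integration by parts \eqref{3.19} with $d=2$:
\begin{enumerate}
\item The contraction $T=T'$ gives $\gamma^2\binom nh\,\Phi_\rho(\mathbbm 1)\leq C$ by \eqref{3.23}.
\item The remaining term carries two derivatives $\partial_{(i,T)}\partial_{(i,T')}$ acting on the normalized trace, and is controlled by Lemma \ref{lemma 3.2} with $r=2$ and total propagation time at most $(4p+4)R$.
\end{enumerate}
Summing the labels in the second term naively costs $(\gamma^2\binom nh)^2$, which is still $O(1)$ but lacks the required structure. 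It is cleaner to use the same device as in the proof of Proposition \ref{prop3.5}: condition on everything except $(g_{i,T})_T$, treat these variables as the fields $h_\xi$ in Lemma \ref{lemma 3.2}, and note that $\E[g_Tg_{T'}F]=\delta_{TT'}\E F+\E\partial_T\partial_{T'}F$. The derivative term is then bounded by $\gamma^4\binom nh^2(1+4(p+1)R)^2\,\tr\e^{A_{\mathbf 0}}$ after normalization. Since $\gamma^2\binom nh\leq\beta^2/2^q$, this gives a constant depending only on $q,\beta,p,R$. One must also keep the Clifford norms under control; here $\norm{\Psi}=1$, and the trace H\"older argument in Lemma \ref{lemma 3.2} handles the inserted $\Psi$'s as the contractions $D_j$.

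The main point to check is that this second-derivative term is genuinely uniform, with no factor $n$ or $k$ entering. The label sum over pairs $(T,T')$ produces $(\gamma^2\binom nh)^2$ only, because each derivative of Lemma \ref{lemma 3.2} costs at most $1+\sum|t_j|$ regardless of the label. The family count is trivial, since both labels lie in $\mathcal I_i$. This gives the Lipschitz bound.

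For the consequence, we telescope a product of cocycles by changing one time variable at a time. For each difference we apply Cauchy--Schwarz for $\Phi_\rho$, placing the untouched prefix into $Q$; the suffix consists of unitaries whose quadratic form is at most $1$. The estimate just proved then gives a Lipschitz constant depending only on the length of the product and on $R$.
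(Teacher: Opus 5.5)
Your proposal is correct and follows the paper's proof essentially step for step. The steps match: the variation-of-constants formula $C_{i,\sigma}(t)-C_{i,\sigma}(s)=-2\ii\int_s^t Y_{i,\sigma}(u)C_{i,\sigma}(u)\,\dd u$, Minkowski with the unitary suffix cancelling, the one-field bound $\Phi_\rho(Q'X_i^2Q'^*)\le C_{q,\beta,p,R}$ via two Gaussian integrations by parts and Lemma \ref{lemma 3.2} (exactly the paper's \eqref{3.56}), and the final telescope with Cauchy--Schwarz. The only cosmetic difference is that you handle the adjoint case by taking adjoints of the integral formula, whereas the paper uses $C(t)^*-C(s)^*=C(t)^*(C(s)-C(t))C(s)^*$; both merely change the fixed unitary prefix.
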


\begin{proof}
We first record the one-field bound used below.  Fix $p_0\in\mathbb N$
and $R_0\geq0$, and let
\[
	Q_0=\e^{\ii r_1A_{\tau_1}}\cdots\e^{\ii r_{p_0}A_{\tau_{p_0}}}\,,
	\qquad r_1,\ldots,r_{p_0}\in[-R_0,R_0]\,.
\]
Then, uniformly in $n,k,\rho,\tau_1,\ldots,\tau_{p_0}$ and
$i\in\mathsf M$,
\begin{equation}
	\Phi_\rho\pB{Q_0X_i^2Q_0^*}\leq C_{q,\beta,p_0,R_0}\,.
	\label{3.56}
\end{equation}
Indeed, write $\xi=(i,T)$, $\eta=(i,U)$, with
$T,U\in\binom{\mathsf B}h$, and put
$h_{i,T}=\gamma g_{i,T}$, $P_{i,T}=\Psi_{\{i\}\cup T}$. With $\xi,\eta$ denoting these labels, put
\[
	F_{\xi,\eta}(\mathbf h)
	=Z^{-1}\tr\pB{\e^{A_\rho(\mathbf h)}
	Q_0(\mathbf h)P_\xi P_\eta Q_0(\mathbf h)^*}\,.
\]
Twice applying Gaussian integration by parts gives
\[
	\E\p{h_\xi h_\eta F_{\xi,\eta}}
	=\mathbbm1_{\{\xi=\eta\}}\gamma^2\E F_{\xi,\eta}
	+\gamma^4\E\p{\partial_\xi\partial_\eta F_{\xi,\eta}}\,.
\]
Since $P_\xi^2=1$, one has $\E F_{\xi,\xi}=1$.  Lemma
\ref{lemma 3.2} gives
\[
	\abs{\E\p{\partial_\xi\partial_\eta F_{\xi,\eta}}}
	\leq(1+2p_0R_0)^2\,.
\]
Consequently,
\[
	\Phi_\rho\pB{Q_0X_i^2Q_0^*}
	\leq\gamma^2\binom nh
	+(1+2p_0R_0)^2\gamma^4\binom nh^2
	\leq C_{q,\beta,p_0,R_0}\,,
\]
which proves \eqref{3.56}.  The same bound holds for
$X_{i,\sigma}=(-1)^{\sigma_i}X_i$.

For $i\in\mathsf M$ and $\sigma\in\{0,1\}^{\mathsf M}$, put
\[
	Y_{i,\sigma}(u)=\e^{-\ii uA_\sigma}X_{i,\sigma}
	\e^{\ii uA_\sigma}\,.
\]
Since $A_{\sigma+e_i}-A_\sigma=-2X_{i,\sigma}$, differentiation of
\eqref{3.38} gives
\[
	C_{i,\sigma}(t)-C_{i,\sigma}(s)
	=-2\ii\int_s^tY_{i,\sigma}(u)C_{i,\sigma}(u)\,\dd u\,.
\]
Let $Q$ now be the product in the proposition.  Expand its cocycles into signed
evolutions.  In the corresponding $\Phi_\rho$-quadratic expression, the last factor
$C_{i,\sigma}(u)$ cancels with its adjoint.  Minkowski's inequality and
\eqref{3.56}, applied after adjoining
$\e^{-\ii uA_\sigma}$ to $Q$, give
\begin{equation}
	\Phi_\rho\pB{Q\p{C_{i,\sigma}(t)-C_{i,\sigma}(s)}
	\p{C_{i,\sigma}(t)-C_{i,\sigma}(s)}^*Q^*}^{1/2}
	\leq C_{q,\beta,p,R}|t-s|\,.
	\label{3.57}
\end{equation}
For adjoints, use
\[
	C_{i,\sigma}(t)^*-C_{i,\sigma}(s)^*
	=C_{i,\sigma}(t)^*
	\p{C_{i,\sigma}(s)-C_{i,\sigma}(t)}
	C_{i,\sigma}(s)^*\,;
\]
this only changes the fixed unitary prefix in \eqref{3.57}.

For a fixed product
$\mathcal U(\mathbf t)=\prod_{a=1}^r
C_{i_a,\sigma_a}(t_a)^{\epsilon_a}$, set
$V_a(t)=C_{i_a,\sigma_a}(t)^{\epsilon_a}$.  The exact telescope
\[
	\mathcal U(\mathbf t)-\mathcal U(\mathbf s)
	=\sum_{a=1}^r\Big(\prod_{j<a}V_j(t_j)\Big)
	\p{V_a(t_a)-V_a(s_a)}
	\Big(\prod_{j>a}V_j(s_j)\Big)
\]
	has a unitary suffix, which cancels after Cauchy--Schwarz, and a prefix
of fixed length.  Applying \eqref{3.57} term by term proves the final
assertion.
\end{proof}

\section{Factorization for distinct sites}
\label{sec4}

This section proves Proposition \ref{prop5.6}, which gives \eqref{3.25a} in the proof of Theorem \ref{cor6.3}.

Fix $\beta>0$ and an arbitrary sequence of even integers tending to
infinity; we retain the symbol $N$ for this sequence.  For a finite set
of bath sites and a sign pattern $\sigma$, put
\[
	A_\sigma=\alpha_\sigma(B)\,,\qquad
	\Phi_\sigma(D)=Z_B^{-1}\E_B\tr\pb{\e^{A_\sigma}D}\,,\qquad
	C_{i,\sigma}(t)=\e^{-\ii tA_\sigma}
	\e^{\ii tA_{\sigma+e_i}}.
\]
Write $C_i=C_{i,\mathbf0}$.  Expanding the commutator, remote-action, and
time-modulus expressions, and then each cocycle and adjoint, gives a finite
sum of the thermal words in Proposition \ref{prop3.5}.  Taking $I$ to
contain all cocycle coordinates and sign-pattern supports, the comparison
is $O(N^{-1})$ for the corresponding quadratic forms and hence
$O(N^{-1/2})$ after taking square roots.  Propositions \ref{prop3.13}
and \ref{prop3.15}
therefore give, in the full bath, the same
distinct-coordinate commutation and remote-action limits, and
\[
	\Phi_\rho\pb{Q\p{C_{i,\sigma}(t)^\epsilon
	-C_{i,\sigma}(s)^\epsilon}
	\p{C_{i,\sigma}(t)^\epsilon
	-C_{i,\sigma}(s)^\epsilon}^*Q^*}^{1/2}
	\leq C|t-s|+o(1).
\]
Here the number of coordinates, the fixed cocycle or signed-propagator
prefix $Q$, and the time interval are fixed; all three conclusions are uniform in the actual
labels and sign patterns.

\subsection{Rational kernel extraction} \label{sec5.1}

For $s,t\in\mathbb Q$, define
\[
	V_i(s,t)=C_i(s)C_i(t)^*\,,\qquad
	M_N(s,t)=\frac1{N-1}\sum_{i=1}^{N-1}V_i(s,t).
\]
Let $\mathcal K_{\mathbb Q}$ be the compact metrizable space of kernels
$k:\mathbb Q^2\to\mathbb C$ such that
\[
	k(t,t)=1,\qquad
	[k(t_a,t_b)]_{a,b=1}^p\ \text{is positive semidefinite}
\]
for every $p\in\mathbb N_+$ and $t_1,\ldots,t_p\in\mathbb Q$.  It carries
the topology of pointwise convergence.  Its $2\times2$ minors give
$|k(s,t)|\leq1$, so it is a closed subset of a countable product of
closed unit disks.

After passage to a further subsequence, we shall obtain a probability measure
$\Lambda$ on $\mathcal K_{\mathbb Q}$ such that, for every
$r\in\mathbb N$, pairwise distinct bath sites
$i_1,\ldots,i_r$, and $s_a,t_a\in\mathbb Q$,
\begin{equation}
	\Phi_{\mathbf0}\pB{\prod_{a=1}^rV_{i_a}(s_a,t_a)}
	\longrightarrow
	\int_{\mathcal K_{\mathbb Q}}\prod_{a=1}^r
	k(s_a,t_a)\,\Lambda(\dd k).
	\label{4.19}
\end{equation}
Both products are ordered by increasing $a$, and the empty products equal
one.

Exactly,
\[
	M_N(s,t)^*=M_N(t,s)\,,\qquad M_N(t,t)=\mathbbm1,
\]
and $[M_N(t_a,t_b)]_{a,b=1}^p$ is positive: it is the average over $i$
of the positive matrices
$[C_i(t_a)C_i(t_b)^*]_{a,b=1}^p$.

There are countably many rational-time monomials in the $M_N(s,t)$.
Pass to a diagonal subsequence along which all their
$\Phi_{\mathbf0}$-moments converge.  Writing $z_{s,t}$ for the formal
variable represented by $M_N(s,t)$, with $z_{s,t}^*=z_{t,s}$, these limits
define a positive unital moment functional $L$ on the free $*$-algebra.
The transferred cocycle estimates
imply, after expanding the empirical sums, that
\[
	L\p{P^*[z_{s,t},z_{u,v}]^*
	[z_{s,t},z_{u,v}]P}=0
\]
for every noncommutative polynomial $P$: terms
with equal coordinate labels are $O(N^{-1})$, and the remaining terms
vanish uniformly by Proposition \ref{prop3.13}.  Thus the represented
variables commute in the GNS representation of $L$.  The finite-$N$
contraction and matrix-positivity inequalities also give
$L(P^*(1-z_{s,t}^*z_{s,t})P)\geq0$ and
$\sum_{a,b}L(P_a^*z_{t_a,t_b}P_b)\geq0$ for every polynomial $P$ and
every polynomial column $(P_a)$.  Thus the represented variables are
commuting normal contractions and $[z_{t_a,t_b}]_{a,b=1}^p$ is positive.
The spectral measure of their commutative $C^*$-algebra therefore pushes
forward to a probability measure $\Lambda$ on
$\mathcal K_{\mathbb Q}$ whose coordinate moments are the limits of the
moments of $M_N$.

Bath permutations are implemented by orthogonal changes of the
Majoranas, and Gaussian sign symmetry absorbs the deterministic
orientation signs.  Thus, putting $n_B=N-1$, exact exchangeability gives
\[
	\Phi_{\mathbf0}\pB{\prod_{a=1}^rM_N(s_a,t_a)}
	=\frac{(n_B)_r}{n_B^r}
	\Phi_{\mathbf0}\pB{\prod_{a=1}^rV_{i_a}(s_a,t_a)}
	+\mathcal R_N,
\]
where $|\mathcal R_N|\leq1-(n_B)_r/n_B^r$.  Since both expectations have
absolute value at most one, their difference is at most
$2(1-(n_B)_r/n_B^r)$.  This tends to zero and proves
\eqref{4.19}.

\subsection{An analytic continuation lemma}

\begin{lemma}
\label{lemma 5.5}
Let $\ell\in\mathbb N_+$ and put
\[
	\Delta_\ell=\hB{\mathbf x\in[0,1]^{\ell+1}:
	\sum_{a=0}^\ell x_a=1}\,,\qquad
	V_\ell=\hB{\mathbf y\in\mathbb R^{\ell+1}:
	\sum_{a=0}^\ell y_a=0}.
\]
Suppose that $f_N$ and $f$ are continuous and bounded by $M$ on
$\Delta_\ell+\ii V_\ell$, and holomorphic when
$\re\mathbf z$ belongs to the relative interior of $\Delta_\ell$.  If, for
every vertex $e_b$ and every $R<\infty$,
\[
	\sup_{\substack{\mathbf y\in V_\ell\\|\mathbf y|\leq R}}
	\abs{f_N(e_b+\ii\mathbf y)-f(e_b+\ii\mathbf y)}
	\longrightarrow0,
\]
then $f_N\to f$ uniformly on every compact subset of
$\Delta_\ell+\ii V_\ell$.  The same conclusion holds in Cauchy form:
locally uniform Cauchy convergence on the vertex sets implies locally
uniform convergence on the whole domain.  For $\ell=1$, this is the
corresponding closed-strip statement.
\end{lemma}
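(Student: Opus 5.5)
The argument uses the Phragmén--Lindelöf three-lines principle, applied on the simplex-tube, followed by a normal-family argument. Set $g_N=f_N-f$. Then $g_N$ is continuous on $\Delta_\ell+\ii V_\ell$, bounded by $2M$, and holomorphic over the relative interior. The hypothesis says that $g_N\to0$ uniformly on $\{e_b+\ii\mathbf y:|\mathbf y|\leq R\}$ for every vertex and every $R$.

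The first step controls $g_N$ on each real slice. Fix a point $\mathbf x_0$ of the simplex and $\mathbf y_0\in V_\ell$. Since $\ell\geq1$, choose a complex direction and restrict to one-dimensional strips. For $\ell=1$ the domain is exactly the closed strip $0\leq\re z\leq1$, parametrized by $\mathbf z=(1-z,z)$, and the vertices are its two boundary lines. For general $\ell$, argue by induction on faces. A point in the relative interior of a face of dimension $j$ lies on a segment joining a vertex $e_b$ to a point of the opposite face of dimension $j-1$. The complexification of that segment, $\{(1-z)e_b+z\mathbf w+\ii\mathbf y_0\}$, is a closed strip on which $g_N$ is bounded and holomorphic in the interior.

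The key estimate on such a strip is a localized three-lines bound. We cannot just apply the maximum principle, because convergence on the boundary lines is only locally uniform in the imaginary direction. So multiply $g_N$ by the Gaussian damping factor $\e^{\varepsilon(z-z_0)^2}$, where $z_0$ is the target point. This factor has modulus $\e^{\varepsilon((\re z-\re z_0)^2-(\im z-\im z_0)^2)}\leq\e^{\varepsilon}$ on the strip, and it is tiny when $|\im z-\im z_0|\geq R$. The maximum principle on the rectangle $|\im(z-z_0)|\leq R$ then gives
\[
|g_N(z_0)|\leq \e^{\varepsilon}\max\Bigl(\sup_{\text{boundary},\,|\im(z-z_0)|\leq R}|g_N|,\;2M\e^{-\varepsilon R^2}\Bigr).
\]
Choose $\varepsilon=1$ and $R$ large first, then let $N\to\infty$. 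The estimate is uniform for $z_0$ in compacts. Equivalently, one could use the harmonic measure of the strip, which decays exponentially in the imaginary direction, and split the Poisson integral into its near and far parts.

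Feeding this through the induction over faces of increasing dimension gives local uniform convergence on every face, and finally on the whole of $\Delta_\ell+\ii V_\ell$. At each stage, the boundary data on the lower-dimensional face are locally uniformly convergent by the previous stage. The main technical point is uniformity near the lower-dimensional boundary of each segment family. The segments degenerate as $\mathbf w$ varies, but the three-lines bound depends only on the strip width, which is normalized to $1$, and on compact sets of boundary data, so uniformity survives. The Cauchy form follows verbatim by applying the same argument to $f_N-f_{N'}$, which is bounded by $2M$.
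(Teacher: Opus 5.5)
Your argument is correct, but it is organized differently from the paper's proof.

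\textbf{The paper's route.} It damps once, globally, by $\exp(\epsilon\sum_a z_a^2)$, whose modulus is at most $\e^{\epsilon}\e^{-\epsilon|\mathbf y|^2}$ on the whole tube. It then applies the one-variable three-lines theorem along segments of the simplex to the fiberwise supremum $M_{N,\epsilon}(\mathbf x)=\sup_{\mathbf y\in V_\ell}|h_{N,\epsilon}(\mathbf x+\ii\mathbf y)|$. This gives log-convexity,
$M_{N,\epsilon}(\mathbf x)\leq\prod_b M_{N,\epsilon}(e_b)^{x_b}$, so smallness at the vertices reaches the whole simplex in one step, with no induction.

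\textbf{Your route.} You damp by a one-variable Gaussian centered at the target point and use the plain maximum principle on a truncated rectangle. This yields only an additive bound (a maximum of boundary data and a Gaussian tail), not a geometric mean. Since one side of each strip lies on a face rather than a vertex, you must climb the face lattice inductively. That is sound. Fix a vertex $e_b$ of the face $F$ and let $\mathbf w$ range over the opposite face $F'$. The strips then cover all of $F$, and your bound is uniform, as you claim.

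\textbf{Comparison.} Your argument is more elementary. The paper's is shorter and gives a quantitative interpolation between the vertex data.

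\textbf{One step to make explicit.} When $F$ is a proper face, or $\mathbf w$ lies on the relative boundary of $F'$, the open strip has real parts on the boundary of $\Delta_\ell$. Holomorphy is not assumed there, so your sentence ``holomorphic in the interior'' needs a justification. It follows from continuity of $g_N$ on the closed tube. The restriction of $g_N$ to such a strip is the locally uniform limit of its restrictions to nearby strips pushed into the relative interior, for instance by replacing $\mathbf w$ with nearby points $\mathbf w_\delta$ in the relative interior of the facet opposite $e_b$. A locally uniform limit of holomorphic functions is holomorphic. The paper handles the same point by its ``approximation from the interior.''
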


\begin{proof}
Fix $\epsilon>0$ and put
\[
	h_{N,\epsilon}(\mathbf z)
	=\p{f_N(\mathbf z)-f(\mathbf z)}
	\exp\Big(\epsilon\sum_{a=0}^\ell z_a^2\Big)\,,\qquad
	M_{N,\epsilon}(\mathbf x)
	=\sup_{\mathbf y\in V_\ell}
	\abs{h_{N,\epsilon}(\mathbf x+\ii\mathbf y)}.
\]
The Gaussian factor gives
\[
	\abs{h_{N,\epsilon}(\mathbf x+\ii\mathbf y)}
	\leq2M\e^\epsilon\e^{-\epsilon|\mathbf y|^2},
\]
uniformly for $\mathbf x\in\Delta_\ell$.  One-variable three-lines along
segments in the relative interior, followed by approximation from the
interior, gives
\[
	M_{N,\epsilon}(\mathbf x)
	\leq\prod_{b=0}^\ell M_{N,\epsilon}(e_b)^{x_b},
	\quad \mathbf x\in\Delta_\ell.
\]
Local convergence on the vertex sets controls bounded $\mathbf y$, while
the Gaussian factor controls their complement.  Hence
$\max_bM_{N,\epsilon}(e_b)\to0$.  The preceding inequality gives the same
convergence uniformly in $\mathbf x$.  Removing the Gaussian multiplier
on a fixed compact set proves the assertion.  Apply the same argument to
$f_N-f_M$ for the Cauchy form.
\end{proof}

\subsection{The one-site law}

\begin{proposition}[Positive-Lehmann two-cocycle law]
\label{prop5.4}
Along the subsequence in \eqref{4.19}, there is a
probability law $\Pi$ on normalized positive-Lehmann kernels such that,
for every $r\in\mathbb N$, pairwise distinct bath sites
$i_1,\ldots,i_r$, and $s_a,t_a\in\mathbb R$,
\begin{equation*}
	\Phi_{\mathbf0}\pB{\prod_{a=1}^rV_{i_a}(s_a,t_a)}
	\longrightarrow
	\int\prod_{a=1}^r2G\p{\ii(s_a-t_a)}\,\Pi(\dd G).
\end{equation*}
The convergence is locally uniform in the times and uniform in the actual
distinct labels.  Both products are ordered by increasing $a$, and the
empty products equal one.
\end{proposition}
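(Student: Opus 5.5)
The plan is to identify the random kernel $k$ under $\Lambda$ with a positive-Lehmann kernel evaluated at imaginary time, using the Lehmann structure that every one-site correlator already has at finite $N$.

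\emph{Finite-$N$ structure.} The starting point is the exact identity
\[
C_i(t)=\e^{-\ii tA_{\mathbf0}}\psi_i\e^{\ii tA_{\mathbf0}}\psi_i\,.
\]
Writing $\psi_i(t)=\e^{-\ii tA_{\mathbf0}}\psi_i\e^{\ii tA_{\mathbf0}}$, this gives $V_i(s,t)=\psi_i(s)\psi_i(t)$, since $\psi_i^2=\mathbbm 1$.
\begin{itemize}
\item For a single site, $\Phi_{\mathbf0}(V_i(s,t))=Z_B^{-1}\E_B\tr\pb{\e^{(1-z)A_{\mathbf0}}\psi_i\e^{zA_{\mathbf0}}\psi_i}$ with $z=\ii(s-t)$, using cyclicity of the trace.
\item Diagonalising $A_{\mathbf0}$ in each realisation and symmetrising the eigenvalue pairs $(m,n)$ exhibits this correlator as $2G_{\mu_N}(z)$. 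Here $\mu_N$ is the probability measure on the gaps $|E_m-E_n|$ with weights $(\e^{E_m}+\e^{E_n})|(\psi_i)_{mn}|^2$, suitably normalised and averaged over $B$.
\item Normalisation follows from $\psi_i^2=\mathbbm 1$. The correlator is bounded by one on the closed strip by trace H\"older, as in Lemma \ref{lemma 3.2}.
\item For $r$ distinct sites, the product $\prod_aV_{i_a}(s_a,t_a)$ is a thermal word with $2r$ insertions. Giving each gap its own complex time yields a function on $\Delta_\ell+\ii V_\ell$ that is holomorphic in the interior and bounded by one, again by generalized trace H\"older.
\item The vertex sets $e_b+\ii V_\ell$ are exactly words of the form $\Phi_{\mathbf0}$ of a product of signed real-time propagators, hence of products of $V$'s with real times.
\end{itemize}

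\emph{Limits at real times and continuation.} By \eqref{4.19}, the vertex values converge at rational times. The transferred time-modulus bound (Proposition \ref{prop3.15} combined with Proposition \ref{prop3.5}) makes them Lipschitz in the times up to $o(1)$, so they form a locally uniform Cauchy sequence at all real times. The same bound shows, via $\int|k(s,t)-k(s',t)|^2\,\dd\Lambda\le C|s-s'|^2$, that $\Lambda$-a.e.\ $k$ extends continuously to $\mathbb R^2$. Lemma \ref{lemma 5.5} in Cauchy form then gives locally uniform convergence of all the multi-time correlators on the whole domain, including the Euclidean points.

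\emph{Identifying the law.} It remains to show that $\Lambda$-a.e.\ $k$ satisfies $k(s,t)=2G(\ii(s-t))$ for a normalized positive-Lehmann $G$; the pushforward of $\Lambda$ is then $\Pi$. The steps are as follows.
\begin{itemize}
\item \emph{Stationarity.} Show $k(s+u,t+u)=k(s,t)$ a.s. The route is the exact identity $V_i(s+u,t+u)=\e^{-\ii uA}V_i(s,t)\e^{\ii uA}$. One then shows that conjugation by $\e^{\ii uA}$ leaves joint moments of distinct-site products asymptotically invariant, i.e.\ $\Phi_{\mathbf0}$ is invariant under $\e^{-\ii uA}\cdot\e^{\ii uA}$ up to terms controlled by the analytic continuation above.
\item \emph{Spectral measure.} Given stationarity, $k(s,t)=\varphi(s-t)$ with $\varphi$ continuous, positive definite and $\varphi(0)=1$. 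Bochner gives a spectral probability measure $\nu$.
\item \emph{Positive-Lehmann form.} The continuation to the strip shows that the joint moments of $\varphi$ are analytic-continuation limits of the finite-$N$ kernels $2G_{\mu_N}$. By the uniqueness in Lemma \ref{lemma 5.5}, $\varphi(s)=2G(\ii s)$, where $G$ is the weak limit of the random Lehmann measures. Tightness comes from the second-moment bound via the Lipschitz estimate. The KMS relation, i.e.\ analyticity and boundedness in the strip together with the symmetry $G(1-\tau)=G(\tau)$, forces $\nu$ to have the symmetric form encoded by $k_x$.
\end{itemize}

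I expect the main obstacle to be stationarity and the identification of $\nu$ with a positive-Lehmann measure. The finite-$N$ Lehmann representation is realisation-dependent, so passing it through the mixed limit needs care. My first attempt would run the analytic continuation jointly in all $2r$ times and compare, at Euclidean points, with $\int\prod_a2G(\tau_a)\,\Pi(\dd G)$. I would define $\Pi$ through the moments of the empirical Euclidean two-point average $N^{-1}\sum_i\psi_i(\tau)\psi_i(\tau')$, mirroring Section \ref{sec5.1}.
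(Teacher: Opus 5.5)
Your finite-$N$ Lehmann computation and the passage from rational to real times via the transferred time modulus are fine. However, the two assertions that carry the proposition are not supplied: that $\Lambda$-almost every $k$ is stationary, and that it is of positive-Lehmann form. The mechanisms you indicate do not give them.

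\emph{Stationarity.} Exact invariance of $\Phi_{\mathbf0}$ under $\e^{-\ii uB}(\cdot)\e^{\ii uB}$ shifts \emph{every} factor of the word. So it only shows that the law $\Lambda$ is invariant under a simultaneous time shift, not that $k(s+u,t+u)=k(s,t)$ for $\Lambda$-a.e.\ $k$. For instance, a kernel $\e^{\ii(\theta(s)-\theta(t))}$ with $\theta$ having stationary increments is shift-invariant in law but not pathwise. Shifting only the factor $V_i$ is not ``controlled by the analytic continuation'', since Lemma \ref{lemma 5.5} merely transports convergence from the vertex sets inward.

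The missing idea is a KMS strip whose two edges have the \emph{same} limit. Let $\zeta^{\rm ev}$ be a product of $V_{j_a}$ at pairwise distinct sites different from $i$. The function $F_N(z)=\Phi_{\mathbf0}\pb{\e^{-\ii zB}V_i(s,t)\e^{\ii zB}\zeta^{\rm ev}}$ on $-1\leq\im z\leq0$ is bounded by one. Its edges are $\Phi_{\mathbf0}\pb{V_i(s+r,t+r)\zeta^{\rm ev}}$ and $\Phi_{\mathbf0}\pb{\zeta^{\rm ev}V_i(s+r,t+r)}$. By \eqref{4.19} at rational times and the time modulus, both converge locally uniformly in $r$ to $\int\zeta(k)k(s+r,t+r)\,\Lambda(\dd k)$. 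The closed-strip case of Lemma \ref{lemma 5.5} then yields a bounded holomorphic limit with equal edges. It glues periodically and is constant by Liouville, and testing against all cylinder monomials $\zeta$ gives stationarity on one full-measure set.

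\emph{Positive-Lehmann form.} Your $\mu_N$ is a single deterministic measure, since disorder and thermal averages are already taken, so its limit identifies only the mean kernel $\int k\,\Lambda(\dd k)$. The randomness of $k$ comes from the spectral measure of the commutative limit algebra of Section \ref{sec5.1}, not from any realization-wise Lehmann data. Lemma \ref{lemma 5.5} cannot turn a statement about the mean into one about $\Lambda$-a.e.\ $k$. Likewise, the reflection symmetry of the random $G$ is an output, not an input.

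What is needed is detailed balance tested against functions of $k$: $\int\zeta(k)\varrho_k(-\dd x)\,\Lambda(\dd k)=\int\zeta(k)\e^x\varrho_k(\dd x)\,\Lambda(\dd k)$ for all cylinder $\zeta$, with $\varrho_k$ the Bochner measure of $\varphi_k$. It comes from $Z_B^{-1}\E_B\tr\pb{\zeta^{\rm ev}\e^{(1-z)B}\e^{z\alpha_i(B)}}$ on $0\leq\re z\leq1$, whose edges are $\Phi_{\mathbf0}\pb{C_i(t)\zeta^{\rm ev}}$ and $\Phi_{\mathbf0}\pb{\alpha_i(\zeta^{\rm ev})C_i(t)^*}$. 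The transferred remote-action estimate \eqref{3.46} is indispensable to replace $\alpha_i(\zeta^{\rm ev})$ by $\zeta^{\rm ev}$. Cauchy--Riemann and a Fourier transform in $t$ then give the tested identity. Hence, pathwise, $\varrho_k(-\dd x)=\e^x\varrho_k(\dd x)$ with $\int\e^x\varrho_k(\dd x)=1$, and the representing measure is the pushforward of $\varrho_k$ under $x\mapsto|x|$.

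Your fallback of building $\Pi$ from moments of $N^{-1}\sum_i\psi_i(\tau)\psi_i(\tau')$ at Euclidean times fails. Imaginary-time translates are not contractions uniformly in $N$, and their products make sense only as ordered thermal words of total heat at most one. So there is no positive $*$-moment functional to feed into the GNS/spectral construction of Section \ref{sec5.1}, which relies on unitarity of the real-time cocycles.

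Finally, the multi-time Euclidean continuation is not needed here; it is the content of Proposition \ref{prop5.6}. Also, its vertex values at $e_b$, $b\neq0$, are not directly products of $V$'s. Rotating the trace splits same-site pairs and produces twisted cocycles, which must be reduced with the remote-action and commutator estimates.
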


\begin{proof}
We first prove stationarity.  Let
\[
	\zeta(k)=\prod_{a=1}^mk(p_a,q_a)
\]
be a cylinder monomial on $\mathcal K_{\mathbb Q}$.  Choose pairwise
distinct bath sites $i,j_1,\ldots,j_m$ and put
\[
	\zeta^{\rm ev}=\prod_{a=1}^mV_{j_a}(p_a,q_a)
\]
in the displayed order.  For rational $s,t$, define, on
$-1\leq\im z\leq0$,
\[
	F_N(z)=\Phi_{\mathbf0}\pB{
	\e^{-\ii zB}V_i(s,t)\e^{\ii zB}\zeta^{\rm ev}}.
\]
Generalized trace H\"older bounds $F_N$ on the closed strip.  At rational
$r$, its two boundaries are
\[
	\Phi_{\mathbf0}\pB{V_i(s+r,t+r)\zeta^{\rm ev}}\,,\qquad
	\Phi_{\mathbf0}\pB{\zeta^{\rm ev}V_i(s+r,t+r)}.
\]
Equation \eqref{4.19} gives their common limit
\[
	\int\zeta(k)k(s+r,t+r)\,\Lambda(\dd k).
\]
The time modulus makes the boundary convergence locally uniform.  The
closed-strip case of Lemma \ref{lemma 5.5} gives a
bounded holomorphic limit with equal boundaries.  Periodic gluing and
Liouville's theorem show that this limit is constant.  Cylinder
polynomials are a unital self-adjoint algebra separating
$\mathcal K_{\mathbb Q}$.  Stone--Weierstrass and countability therefore
give, on one common full-measure set,
\[
	k(s+r,t+r)=k(s,t)\qquad(r,s,t\in\mathbb Q).
\]
Thus, with $\varphi_k(t)=k(t,0)$ on $\mathbb Q$, one has
$k(s,t)=\varphi_k(s-t)$.

Positive definiteness gives
\[
	\abs{\varphi_k(t)-\varphi_k(s)}^2
	\leq2-k(t,s)-k(s,t).
\]
After integration, \eqref{4.19} and the time modulus
yield
\[
	\int\abs{\varphi_k(t)-\varphi_k(s)}^2\,\Lambda(\dd k)
	\leq C_T|t-s|^2,\qquad s,t\in\mathbb Q\cap[-T,T].
\]
Kolmogorov's theorem gives a jointly measurable continuous version on
$\mathbb R$ for almost every $k$.  It remains positive definite by
rational approximation.  Bochner's theorem gives a probability measure
$\varrho_k$ on $\mathbb R$; measurability follows from Fourier inversion
on a countable determining class.

It remains to prove detailed balance.  For the same cylinder monomial,
choose the auxiliary sites distinct from $i$ and, on
$0\leq\re z\leq1$, define
\[
	F_{N,\zeta}(z)
	=Z_B^{-1}\E_B\tr\pB{\zeta^{\rm ev}
	\e^{(1-z)B}\e^{z\alpha_i(B)}}.
\]
Generalized trace H\"older bounds this function on the closed strip.  Its
boundaries are
\[
	F_{N,\zeta}(\ii t)
	=\Phi_{\mathbf0}\p{C_i(t)\zeta^{\rm ev}}\,,\qquad
	F_{N,\zeta}(1+\ii t)
	=\Phi_{\mathbf0}\p{\alpha_i(\zeta^{\rm ev})C_i(t)^*}.
\]
The remote-action estimate allows $\alpha_i(\zeta^{\rm ev})$ to be
replaced by $\zeta^{\rm ev}$.  Equation \eqref{4.19}
and the time modulus give the locally uniform boundary limits
\[
	F_{N,\zeta}(\ii t)\longrightarrow
	\int\zeta(k)\varphi_k(t)\,\Lambda(\dd k)\,,\qquad
	F_{N,\zeta}(1+\ii t)\longrightarrow
	\int\zeta(k)\varphi_k(-t)\,\Lambda(\dd k).
\]
Lemma \ref{lemma 5.5} supplies a bounded
holomorphic limit $F_\zeta$ on the strip.

Define finite complex measures
\[
	\mu_{0,\zeta}(\mathsf E)
	=\int\zeta(k)\varrho_k(\mathsf E)\,\Lambda(\dd k)\,,\qquad
	\mu_{1,\zeta}(\mathsf E)
	=\int\zeta(k)\varrho_k(-\mathsf E)\,\Lambda(\dd k).
\]
Their Fourier transforms are the two boundaries of $F_\zeta$.  If
$F_{\zeta,s}(t)=F_\zeta(s+\ii t)$, the Cauchy--Riemann equations give
$\partial_sF_{\zeta,s}=-\ii\partial_tF_{\zeta,s}$.  Convolving in $t$
with a Gaussian and taking Fourier transforms gives
$\partial_s\widehat F_{\zeta,s}(x)=x\widehat F_{\zeta,s}(x)$.
Solving from $s=0$ to $s=1$ and removing the regularization yields
\[
	\mu_{1,\zeta}(\dd x)=\e^x\mu_{0,\zeta}(\dd x).
\]
First test this identity on bounded Borel sets.  Stone--Weierstrass and a
countable monotone-class argument give
$\varrho_k(-\dd x)=\e^x\varrho_k(\dd x)$ on one common full-measure set,
as well as
\[
	\int_{\mathbb R}\e^x\varrho_k(\dd x)=1.
\]

Let $\rho_k$ be the pushforward of $\varrho_k$ under $x\mapsto|x|$.
For $x>0$, detailed balance gives
\[
	\varrho_k(\dd x)=\frac{\rho_k(\dd x)}{1+\e^x}\,,\qquad
	\varrho_k(-\dd x)=\frac{\e^x\rho_k(\dd x)}{1+\e^x}.
\]
At zero, $\rho_k(\{0\})=\varrho_k(\{0\})$.  Thus
$G_k=G_{\rho_k}$ is a normalized positive-Lehmann kernel and direct
substitution gives
\[
	2G_k(\ii t)=\varphi_k(t)\,,\qquad
	2G_k(1+\ii t)=\varphi_k(-t).
\]
The map $(k,z)\mapsto G_k(z)$ is jointly measurable and continuous in
$z$.  Set $G_k\equiv1/2$ on the exceptional null set and let
$\Pi=(k\mapsto G_k)_*\Lambda$.  The rational-time limit
\eqref{4.19} now has the asserted form.  The time
modulus and a finite rational net extend it locally uniformly to real
times.  Exact exchangeability makes the convergence uniform in the
actual distinct labels.
\end{proof}

\subsection{The coefficient heat product}

Recall that $h=q-1$.  Fix one term of the cavity expansion in Section
\ref{sec3.2}: choose $A\in\mathcal A_N$, $0<u<1$,
$d\in\mathbb N$, $0\leq k\leq2d$, and
$\pi\in\mathcal P_2(2d)$.  Put $R=A\setminus\{N\}$ and $r=|R|$.
For $\mathbf T\in\operatorname{Disj}_{d,r,h}(N)$ and
$\mathbf x\in\mathsf o_k(u)$, retain the breakpoints,
$\boldsymbol\delta(\mathbf x)$, and the sign sequence
$\boldsymbol\sigma=\boldsymbol\sigma_{\pi,\mathbf T,R}$ from Section
\ref{sec3.2}. We have the following result.

\begin{proposition}
\label{prop5.6}
Extend $c_N(\mathbf T;\mathbf x)$ and
$\ell_{A,\pi,k}(\mathbf x)$ continuously to the closed cell.  Then
\begin{equation*}
	\sup_{\substack{
	\mathbf T\in\operatorname{Disj}_{d,r,h}(N)\\
	\mathbf x\in\overline{\mathsf o}_k(u)}}
	\abs{c_N(\mathbf T;\mathbf x)-\ell_{A,\pi,k}(\mathbf x)}
	\longrightarrow0 \quad \mbox{as} \quad N\to \infty\,.
\end{equation*}
Here $\overline{\mathsf o}_k(u)$ denotes the closure of $\mathsf o_k(u)$.
\end{proposition}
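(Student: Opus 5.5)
By \eqref{3.20}, $c_N(\mathbf T;\mathbf x)=\varepsilon_{A,\pi,k}\mathcal F_N(\boldsymbol\delta;\boldsymbol\sigma)$. The plan is to show that, for disjoint labels, the heat product
\[
\mathcal F_N(\boldsymbol\delta;\boldsymbol\sigma)=\Phi_{\mathbf0}\Big(\prod_{a=0}^{2d+1}\e^{\delta_a\alpha_{\sigma_a}(B)}\Big)\,(\E_B\tr\e^B)^{-1}\E_B\tr\e^{B}
\]
converges to $\int(2G(u))^r\prod_e(2G(u_e))^h\,\Pi(\dd G)$, uniformly in $\mathbf T$ and in $\mathbf x$ on the closed cell. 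Here the set-valued path $\boldsymbol\sigma$ is a product over sites. Every site $i\in T_e$ is toggled exactly on $[x_e^-,x_e^+)$, and every site of $R$ is toggled on $[1-u,1)$. Disjointness makes the sites carried by different pairs, and by $R$, distinct. So the word involves $r+hd$ distinct selected sites, each toggled on a single interval.

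\textbf{Step 1: reduce to real times.} Regard $\mathcal F_N$ as a function of complex heat lengths $\mathbf z=\boldsymbol\delta+\ii\mathbf y$ with $\mathbf y\in V_{2d+1}$, i.e.\ $\sum_a y_a=0$. Generalized trace H\"older bounds it by one on $\Delta_{2d+1}+\ii V_{2d+1}$, and it is holomorphic on the interior. The limit $\int\prod(2G(\cdot))\,\Pi(\dd G)$ continues to the same domain. Each factor becomes $2G$ evaluated at a complex time difference with real part in $[0,1]$, and it is bounded by the Lehmann representation. By Lemma \ref{lemma 5.5}, it suffices to prove convergence locally uniformly on the vertex sets $e_b+\ii V$.

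\textbf{Step 2: the vertex sets.} At a vertex $e_b$ all heat sits in one factor $\e^{\alpha_{\sigma_b}(B)}$, and the others are unitary $\e^{\ii y_a\alpha_{\sigma_a}(B)}$. Moving the heat factor to the front by cyclicity and using the covariance $\Phi_{\sigma}(D)=\Phi_{\mathbf0}(\alpha_\sigma D)$, the word becomes $\Phi_{\mathbf0}$ of a telescoping product of unitaries $\e^{-\ii tA_\sigma}\e^{\ii tA_{\sigma'}}$ with $\sigma,\sigma'$ differing by the toggles.

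Each site's toggle pattern then factors into cocycles. The distinct-site commutation and remote-action estimates (Proposition \ref{prop3.13}, transferred to the full bath via Proposition \ref{prop3.5}) apply here. Adjacent sign changes are rewritten as products $\prod_i C_{i,\mathbf0}(\cdot)$ up to $o(1)$ in the $\Phi$-quadratic norm. The cocycles of different sites are then regrouped so that each site contributes a single $V_i(s,t)=C_i(s)C_i(t)^*$. Because each site is on for one interval, its cocycles combine to exactly one $V_i$, with time difference equal to minus the imaginary length of that interval.

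Proposition \ref{prop5.4} then gives the limit
\[
\int\prod_{\text{sites}}2G(\ii\cdot)\,\Pi(\dd G)\,,
\]
which equals $\int(2G)^r\prod_e(2G)^h$ at the vertex. The convergence is locally uniform in $\mathbf y$ and uniform in the distinct labels, by exchangeability. Errors are controlled by Cauchy--Schwarz in $\Phi$, with uniform constants for the fixed number $r+hd$ of sites and the bounded time window.

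\textbf{Step 3: uniformity and the main obstacle.} Uniformity on the closed cell follows from the uniform conclusion of Lemma \ref{lemma 5.5}, because the cell maps into a compact subset of $\Delta_{2d+1}$. Degenerate boundary points with $\delta_a=0$ are covered by continuity. Label uniformity follows because all estimates used are uniform in labels.

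The main obstacle is Step 2: the bookkeeping that converts the signed-evolution word into an ordered product of single-site $V_i$'s. This needs care with the nonzero base pattern in the middle of the word, where conjugating by $\alpha_\sigma$ toggles other sites' cocycles. The remote-action estimate \eqref{3.46} must be applied repeatedly, each time with a fixed-length unitary prefix $Q$. The plan is to proceed site by site in order of interval left endpoints, peeling one cocycle at a time and absorbing the rest into $Q$.
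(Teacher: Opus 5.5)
Your plan is the paper's own proof. You continue the heat product holomorphically in the heat lengths and reduce by Lemma \ref{lemma 5.5} to the vertex sets $e_b+\ii V$; at each vertex you rotate the trace, pass to $\Phi_{\mathbf 0}$ by covariance, factor the transitions into one-site cocycles, strip remote actions, regroup equal sites and apply Proposition \ref{prop5.4}, with label uniformity from exchangeability.

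In the deferred bookkeeping, fix three points. First, your displayed $\mathcal F_N$ should be $Z_B^{-1}\E_B\tr\prod_a\e^{\delta_a\alpha_{\sigma_a}(B)}$, without the extra $\e^{B}$ coming from $\Phi_{\mathbf0}$. Second, a site flipped in the omitted closing transition is paired with the trivial cocycle $C_{j,\tau}(0)=\mathbbm1$. Third, when $b$ lies in a site's original on-interval, the continued limit at the vertex is $2G(1+\ii\xi)$, while Proposition \ref{prop5.4} produces $2G(-\ii\xi)$, where $\xi$ is the imaginary length of that on-interval. The match therefore needs the reflection symmetry $G(1-z)=G(z)$ of positive-Lehmann kernels, which your proposal does not mention.
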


\begin{proof}
If $D(\mathbf T)=\emptyset$, both sides equal one.  Otherwise put
\[
	\Delta=\hB{\mathbf z\in[0,1]^{2d+2}:
	\sum_{a=0}^{2d+1}z_a=1}\,,\qquad
	V=\hB{\mathbf y\in\mathbb R^{2d+2}:
	\sum_{a=0}^{2d+1}y_a=0}.
\]
For $j\in D(\mathbf T)$, let
\[
	J_j=\hB{0\leq a\leq2d+1:j\in\sigma_a}\,,\qquad
	v_j(\mathbf z)=\sum_{a\in J_j}z_a.
\]
Every $J_j$ is nonempty and proper.  At the real heat vector,
\[
	v_j(\boldsymbol\delta)=u\quad(j\in R)\,,\qquad
	v_j(\boldsymbol\delta)=u_e(\mathbf x)\quad(j\in T_e).
\]
On $\Delta+\ii V$, define
\[
	F_{N,\mathbf T}(\mathbf z)
	=Z_B^{-1}\E_B\tr\bigg(\prod_{a=0}^{2d+1}
	\e^{z_a\alpha_{\sigma_a}(B)}\bigg)\,,\qquad
	F_{\mathbf T}(\mathbf z)
	=\int\prod_{j\in D(\mathbf T)}2G(v_j(\mathbf z))\,\Pi(\dd G).
\]
Both functions are continuous and bounded by one on the closed domain and
holomorphic over its relative interior.

Fix a vertex $e_b$ and write $\mathbf z=e_b+\ii\mathbf y$.  Rotate the
trace so that the $b$th heat factor comes first.  Reading indices modulo
$2d+2$, put
\[
	\widetilde\sigma_c=\sigma_{b+c}\,,\qquad
	\widetilde y_c=y_{b+c}\,,\qquad
	s_{2d+2}=0\,,\qquad
	s_c=\sum_{a=c}^{2d+1}\widetilde y_a
	\quad(1\leq c\leq2d+1).
\]
Successive cancellation of propagators writes
$F_{N,\mathbf T}(e_b+\ii\mathbf y)$ as the
$\widetilde\sigma_0$-state of
\[
	\prod_{c=0}^{2d}
	\e^{-\ii s_{c+1}A_{\widetilde\sigma_c}}
	\e^{\ii s_{c+1}A_{\widetilde\sigma_{c+1}}}.
\]
Factor every transition into one-coordinate cocycles in a fixed order.
For each site $j$ flipped in the omitted closing transition, insert the
identity cocycle
\[
	C_{j,\tau_j}(0)=\mathbbm1,
\]
where $\tau_j$ is the sign pattern immediately before that flip.  Every
site now occurs exactly twice.  For each occurrence,
\[
	C_{j,\tau}(t)=\alpha_\tau\p{C_j(t)}\,,\qquad
	\Phi_{\widetilde\sigma_0}(W)
	=\Phi_{\mathbf0}\p{\alpha_{\widetilde\sigma_0}(W)}.
\]
The remote-action estimate removes every action on a site different from
$j$.  Before the first occurrence of $j$, its sign agrees with the
initial pattern; before the second, it is opposite.  Since
\[
	\alpha_j\p{C_j(t)}=C_j(t)^*,
\]
the two occurrences retain the order
$C_j(t_{j,1})C_j(t_{j,2})^*$.  The distinct-coordinate commutator
estimate then groups equal sites without changing this within-site order.
For a closing cocycle, the corresponding time is zero.

Put $\xi_j=\sum_{a\in J_j}y_a$.  The cumulative times give
\[
	t_{j,1}-t_{j,2}
	=(-1)^{\mathbf1_{\{b\in J_j\}}}\xi_j\,,\qquad
	v_j(e_b+\ii\mathbf y)
	=\mathbf1_{\{b\in J_j\}}+\ii\xi_j.
\]
Proposition \ref{prop5.4} represents the grouped
pair by
\[
	2G\p{\ii(t_{j,1}-t_{j,2})}=
	\begin{cases}
		2G(\ii\xi_j),
			&b\notin J_j,\\
		2G(-\ii\xi_j)=2G(1+\ii\xi_j),
			&b\in J_j,
	\end{cases}
	=2G\p{v_j(e_b+\ii\mathbf y)}.
\]
Here the second equality uses the reflection symmetry of a
positive-Lehmann kernel.  The estimates used to remove the actions and
group the sites are uniform in $\mathbf T$; hence, locally uniformly in
$\mathbf y$,
\[
	\sup_{\mathbf T\in\operatorname{Disj}_{d,r,h}(N)}
	\abs{F_{N,\mathbf T}(e_b+\ii\mathbf y)
	-F_{\mathbf T}(e_b+\ii\mathbf y)}\longrightarrow0.
\]
The proof of Lemma \ref{lemma 5.5}, with the
supremum over $\mathbf T$ carried throughout, gives locally uniform
convergence on $\Delta+\ii V$.  The set
$\boldsymbol\delta(\overline{\mathsf o}_k(u))$ is compact in $\Delta$.
Restricting to it, using \eqref{3.20}, and using
the displayed values of $v_j(\boldsymbol\delta)$ prove the assertion.
\end{proof}

\section{Zero-temperature Schwinger--Dyson slope}
\label{sec7}

In this section, we prove uniqueness of the positive-Laplace zero-temperature
Schwinger--Dyson solution and convergence of the finite-temperature solutions
to it. Corollary \ref{cor7.8} then identifies
$\lim_{\beta\to\infty}\beta^{-1}p_{\rm SD}(\beta)$, which is used in Section
\ref{sec8} to determine the spectral edge.

\subsection{Finite-temperature data in dimensionful Euclidean time}

For $\beta>0$, let $\rho_\beta$ be the pushforward of the probability
measure $\mu_\beta$ from Theorem \ref{thm2.6} under $x\mapsto x/\beta$.
For $0\leq t\leq\beta$, define
\begin{align}
	K_{\beta,E}(t)
	=k_{\beta E}(t/\beta) =\frac{\e^{-Et}+\e^{-E(\beta-t)}}
	{2(1+\e^{-\beta E})}\,,                     \label{7.6}\quad
	g_\beta(t)=G_\beta(t/\beta) =\int K_{\beta,E}(t)\rho_\beta(\dd E)\,.        
\end{align}
The moment identity \eqref{2.20} becomes
\begin{equation}
	\int E^2\rho_\beta(\dd E)=2^{2-q}\,.             \label{tailzero}
\end{equation}
By Lemma \ref{cube}, there is a probability measure $\rho_\beta^{[h]}$
such that
$
	2^{q-2}g_\beta(t)^h
	=\int_0^\infty K_{\beta,E}(t)\rho_\beta^{[h]}(\dd E).
$ For $v>0$, define
\begin{equation*}
	R_\beta(v)=\int_0^\infty\frac{v}{v^2+E^2}
	\rho_\beta(\dd E)\,.
\end{equation*}
In dimensionful units, the Matsubara frequencies form the grid
\begin{equation*}
	\mathcal V_\beta =\hB{v_{\beta,n}=\frac{(2n+1)\pi}{\beta}:n\geq0}\,.
\end{equation*}

\begin{lemma}
	\label{lemma 7.1}
	For $v\in\mathcal V_\beta$, put
	\begin{equation*}
		S_\beta(v)=2\int_0^{\beta/2}\sin(vt)g_\beta(t)^h\dd t\,.
	\end{equation*}
	Then
	\begin{equation*}
		S_\beta(v)=2^{2-q}\int_0^\infty\frac{v}{v^2+E^2}
		\rho_\beta^{[h]}(\dd E)\,,
		\quad R_\beta(v)=\frac{1}{v+S_\beta(v)}\,.               
	\end{equation*}
\end{lemma}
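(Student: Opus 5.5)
The proof is a change of variables $\tau=t/\beta$ in the identities already established on the unit interval. The first identity comes from the closure Lemma \ref{cube} together with the Fourier formula \eqref{2.6}. The second is the Dyson relation \eqref{2.21} from Theorem \ref{thm2.6}, evaluated at nonnegative frequencies and rescaled.

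\emph{First identity.} Substitute $t=\beta\tau$, so that $v t=\omega_n\tau$ with $v=v_{\beta,n}$. This gives
\[
	S_\beta(v)=2\beta\int_0^{1/2}\sin(\omega_n\tau)G_\beta(\tau)^h\,\dd\tau .
\]
Next, $G_\beta^h=G_{\mathcal M_h(\mu_\beta)}$ is symmetric under $\tau\mapsto1-\tau$, and $\sin(\omega_n(1-\tau))=\sin(\omega_n\tau)$. Hence
\[
	2\int_0^{1/2}\sin(\omega_n\tau)G_\beta^h\,\dd\tau=\int_0^1\sin(\omega_n\tau)G_\beta^h\,\dd\tau=\im\widehat{G_\beta^h}(n).
\]
By \eqref{2.6}, the cosine part vanishes, and
\[
	\im\widehat{G_\beta^h}(n)=\omega_n\int(\omega_n^2+x^2)^{-1}\mathcal M_h(\mu_\beta)(\dd x).
\]
Now write $\mathcal M_h(\mu_\beta)=2^{2-q}\nu$ with $\nu$ a probability measure. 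By the defining relation of $\rho_\beta^{[h]}$ and Lehmann injectivity, $\nu$ is the pushforward of $\rho_\beta^{[h]}$ under $E\mapsto\beta E$; indeed $K_{\beta,E}(t)=k_{\beta E}(t/\beta)$. Substituting $x=\beta E$ and $\omega_n=\beta v$ gives
\[
	\beta\,\frac{\omega_n}{\omega_n^2+\beta^2E^2}=\frac{v}{v^2+E^2},
\]
which is the first identity.

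\emph{Second identity.} In the same way, $R_\beta(v)=\beta^{-1}\,\omega_n\int(\omega_n^2+x^2)^{-1}\mu_\beta(\dd x)$. By \eqref{dyson2} with $\rho=\beta^2\mathcal M_h(\mu_\beta)$, this equals $\beta^{-1}\big(\omega_n+s_\rho(\omega_n)\big)^{-1}$. From the first part,
\[
	s_\rho(\omega_n)=\beta^2\,\omega_n\int(\omega_n^2+x^2)^{-1}\mathcal M_h(\mu_\beta)(\dd x)=\beta^2\cdot\beta^{-1}S_\beta(v)=\beta S_\beta(v).
\]
Together with $\omega_n=\beta v$, this yields $R_\beta(v)=\big(v+S_\beta(v)\big)^{-1}$.

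The only point requiring care is the bookkeeping of the normalization $2^{2-q}$ and the factors of $\beta$. In particular, one must identify $\rho_\beta^{[h]}$ as the rescaled normalized $\mathcal M_h(\mu_\beta)$, which uses the Lehmann injectivity established earlier.
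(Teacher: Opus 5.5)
Your route is the paper's route. You rescale $t=\beta\tau$ and use $G_\beta(1-\tau)=G_\beta(\tau)$ together with $\sin(\omega_n(1-\tau))=\sin(\omega_n\tau)$ to turn the half-interval sine integral into a full Fourier coefficient. You then apply \eqref{2.6} and invoke the Dyson relation. Unpacking \eqref{2.21} into \eqref{dyson2} with $\rho=\beta^2\mathcal M_h(\mu_\beta)$, as you do, is equivalent to the paper's $\widehat\Sigma_\beta(n)=\ii\beta S_\beta(v_{\beta,n})$. Your first identity, including $S_\beta(v)=\beta\,\omega_n\int(\omega_n^2+x^2)^{-1}\mathcal M_h(\mu_\beta)(\dd x)$, is correct.

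The second part, however, has a scaling error, and the final step does not follow as written. Since $\rho_\beta$ is the pushforward of $\mu_\beta$ under $x\mapsto x/\beta$, the rescaling identity you already displayed in the first part, $v/(v^2+x^2/\beta^2)=\beta\,\omega_n/(\omega_n^2+x^2)$, gives
\[
R_\beta(v)=\beta\,\omega_n\int_0^\infty\frac{\mu_\beta(\dd x)}{\omega_n^2+x^2}\,.
\]
The prefactor is $\beta$, not $\beta^{-1}$; equivalently, $\widehat G_\beta(n)=\ii\beta^{-1}R_\beta(v_{\beta,n})$, as in the paper. With your $\beta^{-1}$, combining with $\omega_n=\beta v$ and $s_\rho(\omega_n)=\beta S_\beta(v)$ would yield $R_\beta(v)=\beta^{-2}\bigl(v+S_\beta(v)\bigr)^{-1}$, not the claim. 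With the correct factor you get $R_\beta(v)=\beta\bigl(\beta v+\beta S_\beta(v)\bigr)^{-1}=\bigl(v+S_\beta(v)\bigr)^{-1}$, and the proof is complete.

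A minor point: identifying $\rho_\beta^{[h]}$ through Lehmann injectivity is valid but unnecessary. You can apply \eqref{2.6} directly to the rescaled representation $2^{q-2}G_\beta(\tau)^h=\int k_{\beta E}(\tau)\,\rho_\beta^{[h]}(\dd E)$, which is what the paper does.
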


\begin{proof}
	The Fourier identity \eqref{2.6} immediately gives
	\[
	\widehat {G}_\beta(n)=\frac{\ii}{\beta}R_\beta(v_{\beta,n})\,.
	\]
	Applying the same Fourier identity and rescaling to the representation of
	$2^{q-2}g_\beta^h$ gives the first asserted identity. After the change
	of variables $t=\beta\tau$, symmetry about $\beta/2$ and
	$\e^{\ii v_{\beta,n}\beta}=-1$ give
	\[
		\widehat\Sigma_\beta(n)
		=\ii\beta S_\beta(v_{\beta,n})\,.
	\]
	The finite-temperature Dyson identity \eqref{2.21} now gives
	$R_\beta(v)=(v+S_\beta(v))^{-1}$.
\end{proof}

\subsection{A uniform dimensionful-time tail}

\begin{proposition}
	\label{tail}
	There is a constant $C_q<\infty$, independent of $\beta$, such that
	\begin{equation}
		g_\beta(t)\leq C_q t^{-2/q}\,,
		\quad 1\leq t\leq\beta/2\,.                     \label{713}
	\end{equation}
	As a result, for $T\geq1$,
	\begin{align}\label{slope}
		\sup_{\beta\geq2T}\int_T^{\beta/2}g_\beta(t)^h\dd t
		\leq C_qT^{-(q-2)/q}\,,          \quad                      \sup_{\beta\geq2T}\int_T^{\beta/2}g_\beta(t)^q\dd t
		\leq C_qT^{-1}\,.                                 
	\end{align}
\end{proposition}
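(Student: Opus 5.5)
The plan is to prove \eqref{713} by a self-consistent bootstrap that combines the Dyson identity of Lemma \ref{lemma 7.1} with the monotonicity of $g_\beta$ on $[0,\beta/2]$. The key inequality will be $g_\beta(t)^{q}\leq C_qt^{-2}$, which is \eqref{713}. Afterwards \eqref{slope} follows by integrating $g_\beta^h\leq C t^{-2h/q}$ and $g_\beta^q\leq Ct^{-2}$ over $[T,\beta/2]$, since $2h/q=2-2/q>1$.

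\emph{Step 1 (monotonicity and an upper bound by $R_\beta$).} Each $K_{\beta,E}(t)$ is nonincreasing on $[0,\beta/2]$, since $\cosh$ is even about $\beta/2$. Hence $g_\beta$ and $g_\beta^h$ are nonincreasing there. Moreover $K_{\beta,E}(t)\leq \e^{-Et}$ for $t\leq\beta/2$, and $\e^{-x}\leq C(1+x^2)^{-1}$. So for every $v>0$ with $vt$ in a fixed compact range $[c_1,c_2]$,
\[
g_\beta(t)\leq C\int\frac{v^2}{v^2+E^2}\rho_\beta(\dd E)=C\,vR_\beta(v)\,,
\]
with $C$ depending only on $c_1,c_2$. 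By Lemma \ref{lemma 7.1}, whenever $v\in\mathcal V_\beta$ this gives $g_\beta(t)\leq Cv/(v+S_\beta(v))\leq Cv/S_\beta(v)$.

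\emph{Step 2 (lower bound on $S_\beta$).} For $v\in\mathcal V_\beta$ and a small constant $c\in(0,\pi/2]$, the integrand of $S_\beta$ is nonnegative on $[0,\beta/2]$ only up to the first sign change of $\sin(vt)$. I will instead use the full positivity of $S_\beta(v)$ from its Lehmann form in Lemma \ref{lemma 7.1}. To bound it below, I split $S_\beta(v)=2\int_0^{\beta/2}\sin(vt)g_\beta^h\,\dd t$ into consecutive half-periods of $\sin(vt)$. Because $g_\beta^h$ is nonincreasing, the alternating half-period contributions after the first decrease in absolute value. Consequently the sum is at least the contribution of $[0,c/v]$, once one checks that $\beta/2$ falls at a half-period boundary; this is the purpose of the grid $v=(2n+1)\pi/\beta$. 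On $[0,c/v]$ one has $\sin(vt)\geq (2/\pi)vt$ and $g_\beta(t)\geq g_\beta(c/v)$, so
\[
S_\beta(v)\geq c'\,v^{-1}g_\beta(c/v)^h\,.
\]

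\emph{Step 3 (closing).} Given $t\in[1,\beta/2]$, pick $v\in\mathcal V_\beta$ with $c/v\leq t$ and $vt\in[c,c_2]$. Such a grid point exists because the spacing $2\pi/\beta$ is at most $\pi/t$; this is where the restriction $t\leq\beta/2$ enters, and adjusting $c,c_2$ by a factor two suffices. Combining Steps 1 and 2 with monotonicity, $g_\beta(t)\leq g_\beta(c/v)$, gives
\[
g_\beta(t)\leq\frac{Cv^2}{g_\beta(c/v)^h}\leq\frac{C t^{-2}}{g_\beta(t)^h}\,,
\]
which is exactly $g_\beta(t)^q\leq Ct^{-2}$. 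The constants depend only on $q$ and on the fixed choices of $c,c_2$. The second-moment identity \eqref{tailzero} is not even needed here, except perhaps to handle $t$ near $1$.

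The main obstacle is Step 2, namely a clean lower bound for the oscillatory integral $S_\beta(v)$ with the correct argument of $g_\beta$. The alternating-series argument requires care at the endpoint $\beta/2$ and for the lowest Matsubara frequency. If it fails, my fallback is to use the exact Lehmann representation of $S_\beta$ from Lemma \ref{lemma 7.1} together with $\rho^{[h]}_\beta$. I would compare $v/(v^2+E^2)$ with $v^{-1}\e^{-cE/v}$ from below, and then bound $\int \e^{-cE/v}\rho_\beta^{[h]}(\dd E)$ below by $2^{q-2}g_\beta(c/v)^h$.
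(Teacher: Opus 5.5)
Your Step 2 as primarily stated does not work, for two reasons. First, the grid does not put $\beta/2$ at a half-period boundary: for $v=(2n+1)\pi/\beta$ one has $v\beta/2=(n+\tfrac12)\pi$, which is a maximum of $\abs{\sin}$, not a zero. Second, and more seriously, monotonicity of $f=g_\beta^h$ alone only makes consecutive pairs of half-period contributions nonnegative, so it yields $S_\beta(v)\geq0$ and nothing quantitative. Integrating by parts and using $\cos(v\beta/2)=0$ gives the exact identity $S_\beta(v)=\frac2v\bigl[f(\beta/2)+\int_0^{\beta/2}(1-\cos vt)\,(-\dd f)(t)\bigr]$. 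For a general nonincreasing $f$, such as a smoothed version of $\mathbf1_{[0,2\pi/v)}$, this is (nearly) zero while $f(c/v)=1$. So the claim that ``the sum is at least the contribution of $[0,c/v]$'' is false for monotone integrands. What prevents $-\dd f$ from concentrating at the zeros of $1-\cos vt$ is the positive Lehmann structure of $g_\beta^h$, which your alternating-series argument never uses.

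Your fallback supplies exactly this structure, and it is correct. One has $\frac{v}{v^2+E^2}\geq c_0v^{-1}\e^{-cE/v}$ with $c_0=\inf_{x\geq0}\e^{cx}/(1+x^2)>0$. Also $\int\e^{-cE/v}\rho_\beta^{[h]}(\dd E)\geq2^{q-2}g_\beta(c/v)^h$, because $c/v\leq c\beta/\pi\leq\beta/2$ when $c\leq\pi/2$. Together these give $S_\beta(v)\geq c_0v^{-1}g_\beta(c/v)^h$. Steps 1 and 3 then close as you describe, giving $g_\beta(t)^q\leq Ct^{-2}$, and \eqref{slope} follows by integration. You should make the fallback the actual Step 2; with that change your proof is essentially the paper's.

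The paper's version does not even need the monotonicity step. It picks $v\in\mathcal V_\beta$ with $\theta=vt\in[1,1+\pi]$ and uses
$K_{\beta,E}(t)\leq\e^{-Et}\leq(1+E^2t^2)^{-1}\leq(1+\pi)\,\theta/(\theta^2+E^2t^2)$
for both $\rho_\beta$ and $\rho_\beta^{[h]}$. This gives $R_\beta(v)\geq ctg_\beta(t)$ and $S_\beta(v)\geq c_qtg_\beta(t)^h$ at the same point $t$, and the proof closes with $R_\beta(v)\leq1/S_\beta(v)$. Neither argument needs the second-moment identity \eqref{tailzero}, including for $t$ near $1$.
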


\begin{proof}
	For $0\leq t\leq\beta/2$, the image term in the first relation of \eqref{7.6} is bounded by the direct term. We get
	\begin{equation*}
		0\leq K_{\beta,E}(t)\leq \e^{-Et}
		\leq\frac{1}{1+E^2t^2}\,.                           
	\end{equation*}
	Fix $1\leq t\leq\beta/2$. After multiplication by $t$, the grid spacing is $2\pi t/\beta\leq\pi$. We can choose $v\in\mathcal V_\beta$ so that
	\begin{equation*}
		1\leq \theta\deq vt\leq1+\pi\,.                           
	\end{equation*}
Uniformly for $\theta\in[1,1+\pi]$ and $x\geq0$,
	\[
	\frac{1}{1+x^2}\leq(1+\pi)\frac{\theta}{\theta^2+x^2}\,.
	\]
	Apply these inequalities first to $\rho_\beta$, and then to the Lehmann
	measure $\rho_\beta^{[h]}$ representing $2^{q-2}g_\beta^h$. Lemma
	\ref{lemma 7.1} then gives
	\begin{equation*}
		R_\beta(v)\geq c t g_\beta(t)\,,
	\quad S_\beta(v)\geq c_q t g_\beta(t)^h\,.               
	\end{equation*}
	Using the reciprocal Dyson identity and the two lower bounds, we get
	\[
	c t g_\beta(t)
	\leq R_\beta(v) =\frac{1}{v+S_\beta(v)}
	\leq\frac{C_q}{t g_\beta(t)^h}\,.
\]
	Thus $g_\beta(t)^q\leq C_q t^{-2}$, which proves \eqref{713}.
	Integrating its $h$-th and $q$-th powers, we obtain
	\eqref{slope}.
\end{proof}

\subsection{The limiting Dyson equation and its positive Lehmann representation}

\begin{theorem}
	\label{thm7.3}
	Every sequence $\beta_j\to\infty$ has a subsequence and a probability measure $\rho$ on $[0,\infty)$ such that
	\begin{equation}
		\rho_{\beta_j}\overset{w}{\longrightarrow}\rho\,,
			\quad g_{\beta_j}(t)\longrightarrow g(t)\deq \frac{1}{2}\int_0^\infty \e^{-Et}\rho(\dd E)\,,
			\label{7.19}
	\end{equation}
	locally uniformly on $[0,\infty)$. Moreover,
	\begin{equation}
		\int E^2\rho(\dd E)=2^{2-q}\,,
		\quad g^h\in L^1(0,\infty)\,,                   \label{fourier2}
	\end{equation}
	and, for every $v>0$,
	\begin{align}
		R(v)\deq \int_0^\infty\frac{v}{v^2+E^2}\rho(\dd E)\,,
		\quad
		S(v)\deq 2\int_0^\infty\sin(vt)g(t)^h\dd t\,,\quad
		R(v)=\frac{1}{v+S(v)}\,.                           \label{7.21}
	\end{align}
\end{theorem}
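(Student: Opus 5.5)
The proof is a compactness argument in the dimensionful variables of \eqref{7.6}. We extract a weak limit of $\rho_{\beta_j}$ and pass to the limit in the kernels and in the grid Dyson identity of Lemma \ref{lemma 7.1}. The uniform tail bound of Proposition \ref{tail} controls large times. The exact second moment is recovered last, from the large-$v$ asymptotics of the limiting Dyson equation.

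\emph{Step 1: tightness and kernel convergence.} By \eqref{tailzero}, the family $\{\rho_\beta\}$ has uniformly bounded second moments, so it is tight. Prokhorov's theorem gives a subsequence with $\rho_{\beta_j}\overset{w}{\longrightarrow}\rho$, where $\rho$ is a probability measure with $\int E^2\rho(\dd E)\leq2^{2-q}$ by lower semicontinuity. For the kernels, a direct computation gives
\[
K_{\beta,E}(t)-\tfrac12\e^{-Et}=\frac{\e^{-E\beta}(\e^{Et}-\e^{-Et})}{2(1+\e^{-\beta E})}\,.
\]
For $t\leq T$ and $\beta\geq4T$, this is bounded by
\[
\sup_{E\geq0}\e^{-E\beta/2}\pb{1-\e^{-2ET}}\longrightarrow0\,,
\]
uniformly in $E$. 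Together with weak convergence (the limit kernel $\e^{-Et}/2$ is bounded and continuous in $E$), this gives $g_{\beta_j}(t)\to g(t)$ pointwise. The convergence is locally uniform because $|\partial_tK_{\beta,E}|\leq E$ and $\int E\,\rho_\beta(\dd E)\leq2^{1-q/2}$, which makes the family $\{g_\beta\}$ equi-Lipschitz.

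\emph{Step 2: integrability and the limiting Dyson identity.} Passing to the limit in \eqref{713} gives $g(t)\leq C_qt^{-2/q}$ for $t\geq1$. Since $h\cdot 2/q>1$, this yields $g^h\in L^1(0,\infty)$. Now fix $v>0$ and choose grid points $v_j\in\mathcal V_{\beta_j}$ with $v_j\to v$; this is possible because the grid spacing is $2\pi/\beta_j\to0$.
\begin{itemize}
\item[(i)] $R_{\beta_j}(v_j)\to R(v)$. The integrand $v/(v^2+E^2)$ is bounded and continuous in $(v,E)$ for $v$ bounded away from zero, so weak convergence applies.
\item[(ii)] $S_{\beta_j}(v_j)\to S(v)$. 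Split the integral at a large $T$. On $[0,T]$, use local uniform convergence of $g_{\beta_j}$ and $\sin(v_jt)\to\sin(vt)$ uniformly. On $[T,\beta_j/2]$, the first bound in \eqref{slope} gives a uniform contribution $O(T^{-(q-2)/q})$, and the same bound holds for $g^h$ on $[T,\infty)$.
\end{itemize}
Lemma \ref{lemma 7.1} then passes to the limit and gives $R(v)=(v+S(v))^{-1}$, which is \eqref{7.21}.

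\emph{Step 3: the exact second moment, which is the main obstacle.} Mass of $E^2\rho_\beta$ could escape to infinity, so weak convergence alone gives only the inequality. I would mimic the proof of Lemma \ref{lemma 2.2}. On the Green's-function side, monotone convergence gives
\[
v^2\pb{1-vR(v)}=\int\frac{v^2E^2}{v^2+E^2}\rho(\dd E)\longrightarrow\int E^2\rho(\dd E)\qquad(v\to\infty)\,.
\]
On the Dyson side, the limiting equation gives $v^2(1-vR(v))=v^2S(v)/(v+S(v))$, so it suffices to show $vS(v)\to2^{2-q}$.

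For this I need a Lehmann representation of $g^h$. By Lemma \ref{cube}, the measures $\rho_\beta^{[h]}$ are probability measures. They are tight, because $|s_{\boldsymbol\varepsilon}(\mathbf x)|\leq\sum_jx_j$ and the weights $w_{\boldsymbol\varepsilon}$ are bounded. The zero-temperature analogue of Lemma \ref{cube}, obtained from the same cosh product identity, shows that the limit $\rho^{[h]}$ is a probability measure representing $2^{q-2}g^h=\tfrac12\int\e^{-Et}\rho^{[h]}(\dd E)$. Integrating the sine transform then gives
\[
S(v)=2^{2-q}\int\frac{v}{v^2+E^2}\rho^{[h]}(\dd E)\,.
\]
Hence $vS(v)\to2^{2-q}\rho^{[h]}([0,\infty))=2^{2-q}$ by monotone convergence, and also $S(v)/v\to0$. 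Combining the two sides gives $\int E^2\rho(\dd E)=2^{2-q}$.

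The delicate point in this step is the mass normalization of $\rho^{[h]}$, which is where tightness is used. It also requires $\rho^{[h]}(\{0\})=0$ so that $g^h$ is integrable. This is consistent with Step 2, since an atom at $E=0$ would contribute a positive constant to $g^h$ and contradict $g^h\in L^1$.
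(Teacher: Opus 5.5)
Your proposal is correct and follows the paper's proof: tightness from \eqref{tailzero}, the kernel discrepancy \eqref{7.22} with the equi-Lipschitz bound, grid approximation in Lemma \ref{lemma 7.1} controlled by \eqref{slope}, and recovery of the exact second moment from $v^2(1-vR(v))\to\int E^2\rho(\dd E)$. The only difference is that you get $vS(v)\to2^{2-q}$ from a Laplace representation of $g^h$ and monotone convergence, where the paper uses integration by parts and Riemann--Lebesgue. That works, and it needs no tightness of $\rho_\beta^{[h]}$, since $g^h=2^{-h}\int\e^{-Et}\rho^{*h}(\dd E)$ directly, with $\rho^{*h}$ the $h$-fold convolution of the probability measure $\rho$.
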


\begin{proof}
	The fixed second moment \eqref{tailzero} makes $\{\rho_\beta\}$ tight,
	so we may choose a weakly convergent subsequence, not relabeled. For
	fixed $t$, the discrepancy between the thermal kernel and its direct Laplace part is
	\begin{equation}
		K_{\beta,E}(t)-\frac{1}{2}\e^{-Et}
		=\frac{\e^{-\beta E}\sinh(Et)}{1+\e^{-\beta E}}\,. \label{7.22}
	\end{equation}
	
	If $\theta=t/\beta\leq1/2$ and $y=\beta E$, then
	\[
	\e^{-y}\sinh(\theta y)
	\leq \theta y\e^{-(1-\theta)y}\leq C\theta\,.
	\]
	By the above bound, \eqref{7.22} tends to zero uniformly in $E$ on every compact $t$-interval. Together with weak convergence, we obtain the pointwise limit in \eqref{7.19}. We also have
	\[
	|\partial_tK_{\beta,E}(t)|\leq E/2\,,
	\quad \int E\rho_\beta(\dd E)\leq2^{1-q/2}\,,
	\]
	so the family is equi-Lipschitz, and the convergence is locally uniform.
	
	Local uniform convergence and \eqref{slope} imply
	$g^h\in L^1(0,\infty)$, and the same estimate controls the limiting
	tails uniformly. Fix $v>0$, and choose
	$v_j\in\mathcal V_{\beta_j}$ with $v_j\to v$. Weak convergence gives
	\[
	R_{\beta_j}(v_j)\longrightarrow R(v)\,.
	\]
	Combining local uniform convergence of $g_{\beta_j}$ with the uniform
	$L^h$-tail estimate, we get
	\[
	S_{\beta_j}(v_j)\longrightarrow S(v)\,.
	\]
	Passing to the limit in Lemma \ref{lemma 7.1} gives \eqref{7.21}.
	
	Weak convergence alone yields only the inequality
	$\int E^2\rho(\dd E)\leq2^{2-q}$. To recover the equality, let us use the
	large-frequency behavior of the limiting equation. The inequality gives
	$\int E\,\rho(\dd E)<\infty$, so $g$ is absolutely continuous, and
	consequently so is $g^h$. Moreover, $g^h$ is decreasing and integrable,
	with $g(0)^h=2^{1-q}$. Since $g(t)^h\downarrow0$, absolute continuity gives
	$\int_0^\infty|(g^h)'(t)|\dd t=g(0)^h$. Integration by parts gives
	\[
	vS(v)=2g(0)^h+2\int_0^\infty\cos(vt)(g^h)'(t)\,\dd t\,,
	\]
	and the Riemann--Lebesgue lemma gives
	\begin{equation*}
		\lim_{v\to\infty}vS(v)=2g(0)^h=2^{2-q}\,.           
	\end{equation*}
	The Dyson equation then gives
	\[
	R(v)=\frac{1}{v}-\frac{2^{2-q}}{v^3}+o(v^{-3})\,.
	\]
	On the other hand,
	\[
	v^3\pB{\frac{1}{v}-R(v)}
	=\int\frac{v^2E^2}{v^2+E^2}\rho(\dd E)
		\longrightarrow\int E^2\rho(\dd E)\,,
	\]
	by monotone convergence. Combining this with the above expansion, we obtain the first identity in \eqref{fourier2}.
\end{proof}

\subsection{Uniqueness of the limiting equation}

We compare the signs in time and frequency. By the $L^1$ assumption, the
transform of $g^h$ is absolutely convergent, whereas that of $g$ is improper.

\begin{theorem}
	\label{thm7.4}
	There is at most one probability measure $\rho$ satisfying
	\eqref{fourier2}--\eqref{7.21} with
	\[
		g(t)=\frac{1}{2}\int \e^{-Et}\rho(\dd E)\,.
	\]
\end{theorem}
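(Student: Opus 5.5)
The plan is to mimic the monotonicity argument of Theorem \ref{thm2.6}, with the Matsubara sum replaced by a frequency integral over $v>0$. Suppose $\rho_1,\rho_2$ both satisfy \eqref{fourier2}--\eqref{7.21}. Write $g_j,R_j,S_j$ for the associated objects and set $\Delta g=g_1-g_2$, $\Delta\Sigma=g_1^h-g_2^h$. The target is the zero-temperature analogue of Lemma \ref{lemma 2.4}:
\[
	\int_0^\infty\Delta g(t)\,\Delta\Sigma(t)\,\dd t
	=c\int_0^\infty\pb{R_1(v)-R_2(v)}\pb{S_1(v)-S_2(v)}\,\dd v
	=-c\int_0^\infty\frac{(S_1-S_2)^2}{(v+S_1)(v+S_2)}\,\dd v\leq0
\]
for some constant $c>0$. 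Since $x\mapsto x^h$ is strictly increasing, the left-hand side is nonnegative. Hence $\Delta g\,\Delta\Sigma=0$ almost everywhere, so $g_1=g_2$. Injectivity of the Laplace transform then gives $\rho_1=\rho_2$.

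To justify the Parseval step, extend $g$ and $g^h$ oddly to $\mathbb R$. This matches the statement's remark that sine transforms carry the sign comparison. The sine transform of $g^h$ is $S(v)/2$ up to normalisation, and it is absolutely convergent because $g^h\in L^1$.

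For $g$ itself, compute directly from $g=\tfrac12\int\e^{-Et}\rho(\dd E)$:
\[
	2\int_0^\infty\sin(vt)g(t)\,\dd t=\int\frac{v}{v^2+E^2}\rho(\dd E)=R(v).
\]
Here the transform is improper, since $g(0)=1/2$ and $g\notin L^1$ in general. So I would first insert a damping factor $\e^{-\epsilon t}$, apply Plancherel for the sine transform to $g_\epsilon=\e^{-\epsilon t}g\in L^1\cap L^2$ against $g^h$, and then let $\epsilon\downarrow0$. This gives
\[
	\int_0^\infty g\,g'^h\,\dd t=\frac{2}{\pi}\int_0^\infty\frac{R\,S'}{4}\,\dd v,
\]
up to constants, for any mixed pair. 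Passing to the limit uses dominated convergence on the time side, via $0\leq g\leq 1/2$ and $g^h\in L^1$. On the frequency side one needs $|R_\epsilon S|$ to be integrable uniformly in $\epsilon$.

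The integrability bounds come from the Dyson relation. We have $0\leq R(v)\leq 1/v$. For $S$, at small $v$ it is bounded by $2v\int t\,g^h$ if that moment is finite; otherwise I would use $|S(v)|\leq 2\|g^h\|_1$. At large $v$, the proof of Theorem \ref{thm7.3} gives $S(v)=O(1/v)$. So $R\,S$ is $O(1/v^2)$ at infinity.

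Near $v=0$ the product could behave like $1/v$, which is the main obstacle. I expect to handle it with positivity, $S\geq0$. This follows from \eqref{7.21}: $R>0$ and $R\leq 1/v$ force $v+S>0$. Better, the relation $S=1/R-v$ together with $R\leq1/v$ gives $S\geq0$ and hence $R\,S=1-vR\in[0,1]$. So the integrand is bounded near $0$ and decays like $v^{-2}$ at infinity, making it integrable.

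For the differences, the algebraic identity is immediate:
\[
	R_1-R_2=-\frac{S_1-S_2}{(v+S_1)(v+S_2)}.
\]
This yields the nonpositive right-hand side. The sign convention is fixed by comparing with \eqref{6.23}, where $\tfrac12\int G\Sigma$ equals $\sum_n s_n g_n$ with a positive constant.

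The delicate part is the $\epsilon\downarrow0$ limit inside the bilinear Parseval identity for the four mixed products. I would handle it by working directly with the differences: $\Delta g$ has the improper transform, but $\Delta\Sigma\in L^1$ and its transform is bounded. Plancherel is then applied to the regularised pair $(\e^{-\epsilon t}\Delta g,\Delta\Sigma)$, using the $L^2$ bound of $\Delta g$ on $[1,\infty)$ from the tail estimate $g\leq C t^{-2/q}$. That estimate passes to the limit from Proposition \ref{tail}.
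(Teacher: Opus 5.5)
Your overall strategy matches the paper's: compare $\int_0^\infty\Delta g\,\Delta\Sigma\,\dd t\geq0$ with its frequency-side form $\frac1{2\pi}\int_0^\infty\Delta R\,\Delta S\,\dd v\leq0$. The gap is in justifying this Parseval identity near $v=0$.

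Your domination there rests on $R_jS_j=1-vR_j\in[0,1]$, but that only controls the \emph{diagonal} products. The difference $\Delta R_\epsilon\,\Delta S$, or equivalently the four mixed identities you intend to subtract, also contains the cross terms $R_{1,\epsilon}S_2$ and $R_{2,\epsilon}S_1$. For these the Dyson relation gives nothing better than $R_{1,\epsilon}S_2\leq S_2(v)/v$. With only $S_2\leq2\|g_2^h\|_1$, this majorant is of order $1/v$ and is not integrable at $0$. Your alternative bound $S_j(v)\leq2v\int t\,g_j^h$ is not implied by the hypotheses, and it fails for the expected decay $g\asymp t^{-2/q}$.

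Your fallback does not work either. The bound $g\leq Ct^{-2/q}$ only gives $|\Delta g|^2\leq Ct^{-4/q}$, which is not integrable on $[1,\infty)$ for $q\geq4$, so no $L^2$ bound on $\Delta g$ follows. Moreover, Proposition \ref{tail} concerns the finite-$\beta$ kernels, hence at best the limits produced in Theorem \ref{thm7.3}, not an arbitrary solution of \eqref{fourier2}--\eqref{7.21}. Its proof could be rerun directly from \eqref{7.21}, but that still would not give $L^2$.

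The missing ingredient is integrability of $S_j(v)/v$ at the origin, which the paper extracts from the Laplace structure of $g_j^h$. Write $g_j^h=2^{1-q}\int\e^{-Et}\rho_j^{*h}(\dd E)$. The hypothesis $g_j^h\in L^1$ then gives $\int E^{-1}\rho_j^{*h}(\dd E)=2^{q-1}\int_0^\infty g_j^h\,\dd t<\infty$. Tonelli then gives $\int_0^1S_j(v)\,v^{-1}\,\dd v=2^{2-q}\int E^{-1}\arctan(1/E)\,\rho_j^{*h}(\dd E)<\infty$.

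You also have $0\leq R_{j,\epsilon}(v)=\int v\,(v^2+(E+\epsilon)^2)^{-1}\rho_j(\dd E)\leq1/v$ and $0\leq S_j\leq2^{2-q}/v$. Together these give the $\epsilon$-uniform majorant $|\Delta R_\epsilon\,\Delta S|\leq(S_1+S_2)/v$. It is integrable on $(0,1)$ by the estimate above and is $O(v^{-2})$ on $(1,\infty)$, so dominated convergence closes your argument.

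With this estimate added, your exponential damping is a legitimate substitute for the paper's route. The paper instead truncates at $T$, uses the bounded-variation bound $|\Delta_T(v)|\leq C/v$, and applies $L^1$ Fourier inversion to $\delta g^{[h]}$.
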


\begin{proof}
	Let $(\rho_j,g_j,R_j,S_j)$, $j=1,2$, be two solutions. Put
	\[
	\delta g=g_1-g_2\,,
	\quad \delta g^{[h]}=g_1^h-g_2^h\,,
	\quad \delta R=R_1-R_2\,,
	\quad \delta S=S_1-S_2\,.
	\]
	Thus $\delta R$ and $\delta S$ are twice the sine transforms of
	$\delta g$ and $\delta g^{[h]}$, respectively. The individual $g_j$'s
	need not be integrable, so the transform defining $\delta R$ is
	interpreted as an improper integral. For $E\geq0$, $v>0$, and $T>0$,
	integration by parts against the decreasing function $\e^{-Et}$ gives
	\[
	\absB{\int_0^T \e^{-Et}\sin(vt)\dd t}\leq\frac{3}{v}\,.
	\]
	The condition $g_j^h\in L^1$ rules out an atom at $E=0$. Dominated
	convergence with respect to the representing probability measure then gives
	\[
	R_j(v)=2\lim_{T\to\infty}\int_0^Tg_j(t)\sin(vt)\dd t\,.
	\]
	By contrast, the transform defining $\delta S$ is absolutely convergent
		because $\delta g^{[h]}\in L^1$. Let
		$\rho_j^{*h}$ be the $h$-fold additive
	convolution. Then
	\begin{equation*}
		g_j(t)^h=2^{1-q}\int \e^{-Et}\rho_j^{*h}(\dd E)\,,
		\quad S_j(v)=2^{2-q}\int\frac{v}{v^2+E^2}\rho_j^{*h}(\dd E)\,.
	\end{equation*}
	Since $g_j^h\in L^1$,
	\begin{equation}
		\int E^{-1}\rho_j^{*h}(\dd E)
		=2^{q-1}\int_0^\infty g_j(t)^h\dd t<\infty\,.
		\label{7.25}
	\end{equation}
	By \eqref{7.25}, we have
	\[
	\int_0^1\frac{S_j(v)}{v}\dd v
	=2^{2-q}\int\frac{\arctan(1/E)}{E}\rho_j^{*h}(\dd E)<\infty\,.
	\]

	The above shows that $|\delta S(v)|/v$ is integrable near zero. At
	infinity, the unit masses of $\rho_1^{*h}$ and $\rho_2^{*h}$ cancel, and their
	second moments are finite:
	\[
	\int E^2\,\rho_j^{*h}(\dd E)
	=h\int E^2\,\rho_j(\dd E)
	+h(h-1)\pB{\int E\,\rho_j(\dd E)}^2<\infty\,.
	\]
	Moreover,
	\[
	\frac{v}{v^2+E^2}-\frac{1}{v}
	=-\frac{E^2}{v(v^2+E^2)}\,.
	\]
	This identity, together with cancellation of the two unit masses, gives the exact formula
	\[
	\delta S(v)=-\frac{2^{2-q}}{v}\int
	\frac{E^2}{v^2+E^2}\,
	(\rho_1^{*h}-\rho_2^{*h})(\dd E)\,.
	\]
	It yields
	\begin{equation*}
		|\delta S(v)|\leq\frac{C}{v^3},\quad v\geq1\,.
	\end{equation*}
	Thus $\delta S\in L^1(0,\infty)$ and
	$\delta S(v)/v\in L^1(0,\infty)$. Each $g_j$ decreases from $1/2$ to
	zero, so $\delta g$ has bounded variation and
	$\delta g(0)=\delta g(\infty)=0$. Stieltjes integration by parts gives
	\[
	\int_0^T \delta g(t)\sin(vt)\dd t
	=-\frac{\delta g(T)\cos(vT)-\delta g(0)}{v}
	+\frac1v\int_{(0,T]}\cos(vt)\,\dd\delta g(t)\,,
	\]
	and hence
	\begin{equation}
		\absB{2\int_0^T \delta g(t)\sin(vt)\dd t}
		\leq\frac {C}{v}\,, \label{7.27}
	\end{equation}
	uniformly in $T$. Extend $\delta g^{[h]}$ oddly to $\bb R$, and extend
	$\delta S$ oddly in frequency. Both extensions are integrable. Since
	$\delta g^{[h]}(0)=0$, the first one is continuous, and its Fourier
	transform is $-\ii\delta S$. We use the convention
	\[
	(\mathcal F_{\R}r)(v)=\int_{\bb R}\e^{-\ii vt}r(t)\,\dd t\,,
	\quad r(t)=\frac{1}{2\pi}\int_{\bb R}\e^{\ii vt}
	(\mathcal F_{\R}r)(v)\,\dd v\,.
	\]
	By the $L^1$ Fourier inversion theorem \cite{Folland99}, for $t>0$ we have
	\[
	\delta g^{[h]}(t)=\frac{1}{\pi}\int_0^\infty
	\delta S(v)\sin(vt)\dd v\,.
	\]

	Put
	\[
	\Delta_T(v)=2\int_0^T \delta g(t)\sin(vt)\,\dd t\,.
	\]
	Fourier inversion and Fubini on the finite interval give
	\[
	2\int_0^T \delta g(t)\delta g^{[h]}(t)\,\dd t
	=\frac{1}{\pi}\int_0^\infty \Delta_T(v)\delta S(v)\,\dd v\,.
	\]
	Since $|\delta g|\leq1/2$ and $\delta g^{[h]}\in L^1(0,\infty)$,
	$|2\delta g(t)\delta g^{[h]}(t)|\leq|\delta g^{[h]}(t)|$. Moreover,
	\eqref{7.27} gives
	\[
		|\Delta_T(v)\delta S(v)|\leq C\frac{|\delta S(v)|}{v}\,,
	\]
	uniformly in $T$, and $\delta S(v)/v\in L^1(0,\infty)$. Dominated
	convergence on both sides, using $\Delta_T(v)\to\delta R(v)$, gives
	\begin{equation}
		2\int_0^\infty \delta g(t)\delta g^{[h]}(t)\dd t
		=\frac{1}{\pi}\int_0^\infty
		\delta R(v)\delta S(v)\dd v\,.            \label{sign}
	\end{equation}
	The left side is nonnegative because
	\[
		\delta g\,\delta g^{[h]}\geq0\,.
	\]
	By the Dyson equation, $S_j=R_j^{-1}-v$, and pointwise
	\begin{equation*}
		\delta R(v)\delta S(v)
		=(R_1-R_2)(R_1^{-1}-R_2^{-1})
		=-\frac{(R_1-R_2)^2}{R_1R_2}\leq0\,.
	\end{equation*}

	By \eqref{sign}, the nonnegative left side equals the nonpositive right
	side, so both vanish. Strict monotonicity of $x\mapsto x^h$ gives
	$g_1=g_2$ almost everywhere and then everywhere by continuity.
	Uniqueness of the Laplace transform then gives $\rho_1=\rho_2$. This
	finishes the proof.
\end{proof}

\begin{corollary}
	\label{cor7.5}
	Under the probability normalization, second-moment condition, and integrability assumption in
	\eqref{fourier2}, there is a unique pair $(\rho_0,g_0)$ satisfying the zero-temperature Dyson equation in the class admitting a positive Laplace representation, and
	\begin{equation}
		\rho_\beta\overset{w}{\longrightarrow}\rho_0\,,
		\quad g_\beta\longrightarrow g_0
		\quad\mbox{locally uniformly on }[0,\infty)\,.    \label{7.30}
	\end{equation}

	Moreover,
	\begin{equation}
		\lim_{\beta\to\infty}
		\int_0^{\beta/2}g_\beta(t)^q\dd t
		=\int_0^\infty g_0(t)^q\dd t\,.                  \label{7.31}
	\end{equation}
\end{corollary}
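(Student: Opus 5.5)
The plan is to combine Theorem \ref{thm7.3} (subsequential compactness) with Theorem \ref{thm7.4} (uniqueness) in a standard subsequence argument, and then obtain \eqref{7.31} from the uniform tail bound \eqref{slope}.

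\emph{Existence.} Take any sequence $\beta_j\to\infty$, for instance $\beta_j=j$. Theorem \ref{thm7.3} gives a subsequence and a probability measure $\rho$ with $\rho_{\beta_j}\overset{w}{\longrightarrow}\rho$ and $g_{\beta_j}\to g=\frac12\int\e^{-Et}\rho(\dd E)$ locally uniformly. The same theorem shows that the limit satisfies the normalization, the second-moment identity and the integrability in \eqref{fourier2}, together with the Dyson equation \eqref{7.21}. So $(\rho,g)$ lies in the positive-Laplace class, and a solution exists. Theorem \ref{thm7.4} says there is at most one such $\rho$, so the pair $(\rho_0,g_0)$ is unique.

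\emph{Full convergence \eqref{7.30}.} Suppose \eqref{7.30} fails along some sequence $\beta_j\to\infty$. Then there is a subsequence that stays a fixed distance from $\rho_0$ in a metric for weak convergence, or from $g_0$ in sup norm on some $[0,T]$. Applying Theorem \ref{thm7.3} to that subsequence yields a further subsequence converging to some solution $(\rho,g)$. By uniqueness $\rho=\rho_0$, and since $g$ is determined by $\rho$, also $g=g_0$. This contradicts the choice of the subsequence, so \eqref{7.30} holds.

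\emph{The integral limit \eqref{7.31}.} Fix $T\geq1$ and split
\[
\int_0^{\beta/2}g_\beta^q=\int_0^Tg_\beta^q+\int_T^{\beta/2}g_\beta^q .
\]
On $[0,T]$, local uniform convergence and the bound $0\le g_\beta\le\frac12$ give convergence of the first piece to $\int_0^Tg_0^q$. For the second piece, \eqref{slope} bounds it by $C_qT^{-1}$ uniformly in $\beta\geq2T$. The limit $g_0$ obeys the same tail bound: by \eqref{713} and pointwise convergence, $g_0(t)\le C_qt^{-2/q}$, hence $g_0^q\le C_q^qt^{-2}$ and $\int_T^\infty g_0^q\le C'_qT^{-1}$. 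Letting $\beta\to\infty$ and then $T\to\infty$ proves \eqref{7.31}.

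The main point to check is that the subsequential limit from Theorem \ref{thm7.3} really lies in the uniqueness class of Theorem \ref{thm7.4}. This holds because Theorem \ref{thm7.3} already recovers the exact second moment, the $L^1$ property of $g^h$, and the Dyson relation for all $v>0$, which are precisely the hypotheses of Theorem \ref{thm7.4}. Everything else is routine.
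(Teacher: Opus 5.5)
Your proposal is correct and follows the paper's proof: Theorem \ref{thm7.3} supplies subsequential limits, Theorem \ref{thm7.4} identifies them all, and \eqref{7.31} follows from local uniform convergence on $[0,T]$ together with the uniform tail bound \eqref{slope}. Your explicit check that $g_0$ inherits the tail bound from \eqref{713} is a detail the paper leaves implicit.
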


\begin{proof}
	Tightness and Theorem \ref{thm7.3} give a convergent further subsequence of every subsequence. Theorem \ref{thm7.4} identifies all of these limits, proving \eqref{7.30}. Local uniform convergence, supplemented by the uniform tail estimate \eqref{slope}, then gives \eqref{7.31}.
\end{proof}

\subsection{The pressure slope}

\begin{corollary}
	\label{cor7.8}
	We have
	\begin{equation}
		\lim_{\beta\to\infty}\frac{p_{\rm SD}(\beta)}{\beta}
	=\frac{2}{q}\int_0^\infty g_0(t)^q\dd t=e_{{\rm SD},q}\,.
		\label{7.44}
	\end{equation}
\end{corollary}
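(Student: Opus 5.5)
The plan is to use the integral representation \eqref{6.26} of Proposition \ref{prop6.6} and to pass to dimensionful time. Substituting $u=t/s$ and using $G_s(u)=g_s(su)$ gives
\[
	\int_0^1G_s(u)^q\,\dd u=\frac1s\int_0^s g_s(t)^q\,\dd t=\frac2s\int_0^{s/2}g_s(t)^q\,\dd t\,,
\]
where the last step is the reflection symmetry $g_s(t)=g_s(s-t)$. Hence
\[
	p_{\rm SD}'(s)=\frac{2}{q}\int_0^{s/2}g_s(t)^q\,\dd t\,,\qquad
	\frac{p_{\rm SD}(\beta)}{\beta}=\frac1\beta\int_0^\beta\phi(s)\,\dd s\,,\quad
	\phi(s)\deq\frac2q\int_0^{s/2}g_s(t)^q\,\dd t\,.
\]
The problem thus reduces to a Cesàro mean of $\phi$.

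By \eqref{7.31} in Corollary \ref{cor7.5}, $\phi(s)\to\frac2q\int_0^\infty g_0^q=e_{{\rm SD},q}$ as $s\to\infty$. For the Cesàro argument we also need $\phi$ to be bounded on $(0,\infty)$. For small $s$ this follows from $0\le g_s\le 1/2$, which gives $\phi(s)\le \frac{2}{q}2^{-q}\,s/2$. For $s\ge2$ we split the integral at $t=1$: the piece over $[0,1]$ is at most $2^{-q}$, and the piece over $[1,s/2]$ is bounded by \eqref{slope} with $T=1$. In fact continuity of $\phi$ on $(0,\infty)$ (Proposition \ref{prop6.6}) combined with the limit already implies boundedness on each $[\epsilon,\infty)$, and near zero $\phi$ is small.

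The Cesàro step is elementary. Given $\varepsilon>0$, choose $s_0$ such that $|\phi(s)-e_{{\rm SD},q}|<\varepsilon$ for $s\ge s_0$. Then
\[
	\absB{\frac1\beta\int_0^\beta\phi-e_{{\rm SD},q}}\le\frac{s_0(\sup\phi+e_{{\rm SD},q})}{\beta}+\varepsilon\,,
\]
which is at most $2\varepsilon$ for $\beta$ large. This gives \eqref{7.44}, and the second equality there is just the definition \eqref{1.10}.

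There is no serious obstacle, since the real analytic content, convergence of $\int_0^{\beta/2}g_\beta^q$ together with the uniform tail estimate, is already contained in \eqref{7.31}. The only point that needs care is the change of variables: one must use the symmetry $G_s(1-u)=G_s(u)$ to fold $[0,s]$ onto $[0,s/2]$, because \eqref{7.31} is stated on the half-interval, and this folding is exactly what produces the factor $2$ in $\frac2q$.
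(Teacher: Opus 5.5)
Your proposal is correct and follows the paper's proof. Both arguments use the reflection symmetry and the change of variables $t=su$ to get $p_{\rm SD}'(s)=\frac2q\int_0^{s/2}g_s(t)^q\,\dd t$, take its limit from \eqref{7.31} in Corollary \ref{cor7.5}, and conclude from \eqref{6.26} by a Ces\`aro argument. Your explicit boundedness check on the integrand only spells out a detail the paper leaves implicit.
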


\begin{proof}
	Proposition \ref{prop6.6}, the symmetry
	$G_\beta(u)=G_\beta(1-u)$, and the change of variables
	$t=\beta u$ give
	\[
		p_{\rm SD}'(\beta)
		=\frac{\beta}{q}\int_0^1G_\beta(u)^q\dd u
		=\frac2q\int_0^{\beta/2}g_\beta(t)^q\dd t\,.
	\]
	Corollary \ref{cor7.5} therefore yields
	\[
		\lim_{\beta\to\infty}p_{\rm SD}'(\beta)
		=\frac2q\int_0^\infty g_0(t)^q\dd t\,.
	\]
	Equation \eqref{6.26} and the Ces\`aro principle imply the first equality
	of \eqref{7.44}; the second equality is the definition of
	$e_{{\rm SD},q}$ in \eqref{1.10}.
\end{proof}
 
\section{The spectral edge and the low temperature pressure}
\label{sec8}

\subsection{Proof of Theorem \ref{mainthm1}}
Recall that
\[
H=\sqrt{\frac{aN}{q2^q}}\,\mathcal H\,,
\qquad
p_{\rm que}(\beta)=\frac1N\E\log\trn\e^{\beta H}\,.
\]
Also, $\mathsf L=2^{N/2}$ is the dimension of the irreducible physical
Clifford representation.  Write
\[
\lambda_1\deq\lambda_{\max}(\mathcal H)\,,
\quad
e_N\deq\sqrt{\frac{a}{q2^q}}\,
\frac{\E\lambda_1}{\sqrt N}\,.
\]
The proof of \eqref{jjj} follows from Theorem \ref{mainthm2}, Corollary \ref{cor7.8} and the following Gaussian concentration estimate.

\begin{proposition}
	\label{prop8.1}
	For the fixed even $q\geq4$ and every $t>0$, we have
	\begin{equation} \label{8.1}
		\Pp\!\left\{
		\left|\lambda_1-\E\lambda_1\right|
		\geq t\right\}
		\leq2\exp\pB{-
		\frac{(N-q+1)t^2}{2}}\,.
	\end{equation}
\end{proposition}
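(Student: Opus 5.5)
The plan is to obtain \eqref{8.1} from Gaussian concentration for Lipschitz functions of the standard Gaussian vector $J=(J_A)_{|A|=q}$. The concentration inequality states that if $F$ is $L$-Lipschitz in the Euclidean norm, then $\Pp\{|F-\E F|\geq t\}\leq2\e^{-t^2/(2L^2)}$. It therefore suffices to show that $J\mapsto\lambda_1(J)=\lambda_{\max}(\mathcal H(J))$ is Lipschitz with
\[
L^2=\frac{1}{N-q+1}\,.
\]

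\emph{Step 1: reduce to a bound on a state.} Take $J,J'$ and let $v$ be a unit top eigenvector of $\mathcal H(J)$. The variational characterization of $\lambda_{\max}$ gives
\[
\lambda_1(J)-\lambda_1(J')\leq\langle v,(\mathcal H(J)-\mathcal H(J'))v\rangle
=\binom Nq^{-1/2}\sum_{|A|=q}(J_A-J'_A)\varphi(\Psi_A)\,,
\]
where $\varphi=\langle v,\cdot\,v\rangle$ is a state on the Clifford algebra. By Cauchy--Schwarz this is at most
\[
\binom Nq^{-1/2}\norm{J-J'}\pB{\sum_A\varphi(\Psi_A)^2}^{1/2}.
\]
Exchanging the roles of $J$ and $J'$ gives the symmetric bound. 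This avoids any differentiability issue at degenerate top eigenvalues.

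\emph{Step 2: bound the sum over $A$.} I reuse the argument from the proof of Theorem \ref{mainthm2}. For a fixed $h$-set $T$, the Hermitian unitaries $\{\Psi_{T\cup\{x\}}:x\notin T\}$ pairwise anticommute, since two distinct $q$-sets sharing exactly $h=q-1$ sites have odd overlap. Lemma \ref{lemma 2.7} then gives
\[
\sum_{x\notin T}\varphi(\Psi_{T\cup\{x\}})^2\leq1\,.
\]
Summing over all $\binom Nh$ choices of $T$ counts each $q$-set $A$ exactly $q$ times, once for each $T\subset A$ with $|T|=h$. Hence
\[
q\sum_A\varphi(\Psi_A)^2\leq\binom Nh\,.
\]

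\emph{Step 3: conclude.} Combining Steps 1 and 2,
\[
L^2\leq\frac{\binom N{q-1}}{q\binom Nq}=\frac1{N-q+1}\,,
\]
using $\binom Nq=\binom N{q-1}(N-q+1)/q$. Inserting this Lipschitz constant into the Gaussian concentration inequality yields exactly \eqref{8.1}.

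The only nonroutine point is Step 2, the uniform control of $\sum_A\varphi(\Psi_A)^2$ by anticommuting families, and that is already supplied by Lemma \ref{lemma 2.7}.
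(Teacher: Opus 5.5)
Your proposal is correct and follows the paper's proof essentially line by line. Lemma~\ref{lemma 2.7} applied to the anticommuting families $\{\Psi_{T\cup\{x\}}\}_{x\notin T}$ gives $q\sum_A\langle v,\Psi_Av\rangle^2\leq\binom N{q-1}$; Cauchy--Schwarz and the variational characterization then give the Lipschitz constant $(N-q+1)^{-1/2}$, and Gaussian concentration yields \eqref{8.1}.
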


\begin{proof}
	Fix a unit vector $v$ and a $(q-1)$-set $T\subset\qq{N}$. The operators
	$\{\Psi_{T\cup\{x\}}:x\notin T\}$ pairwise anticommute. Lemma
	\ref{lemma 2.7}, applied to the vector state at $v$,
	gives
	\[
		\sum_{x\notin T}|\langle v,\Psi_{T\cup\{x\}}v\rangle|^2\le1\,.
	\]
	Summing over all $(q-1)$-sets $T$ gives
\begin{equation}
		q\sum_{|A|=q}|\langle v,\Psi_Av\rangle|^2
		\leq\binom N{q-1}\,.
		\label{8.2}
	\end{equation}
	Let $J,J'$ be two coupling vectors and put
	$\Delta J_A=J_A-J_A'$. By Cauchy--Schwarz and \eqref{8.2},
	\[
		\left|\langle v,
		\p{\mathcal H(J)-\mathcal H(J')}v\rangle\right|
		\leq\binom Nq^{-1/2}\norm{\Delta J}_2
		\left(\frac1q\binom N{q-1}\right)^{1/2}
		=\frac{\norm{\Delta J}_2}{\sqrt{N-q+1}}\,.
	\]
	The variational characterization of the largest eigenvalue, applied
	in both directions, therefore gives the global bound
	\begin{equation*}
		|\lambda_1(J)-\lambda_1(J')|
		\leq\frac{\norm{J-J'}_2}{\sqrt{N-q+1}}\,.
	\end{equation*}
	The standard Gaussian concentration inequality for this Lipschitz
	function yields \eqref{8.1}.

\end{proof}

For every realization, the largest eigenvalue and the dimension give
\begin{equation}
	\beta\sqrt{\frac{aN}{q2^q}}\,\lambda_1-\log\mathsf L
	\leq\log\trn\e^{\beta H}
	\leq\beta\sqrt{\frac{aN}{q2^q}}\,\lambda_1\,.
	\label{8.4}
\end{equation}
Taking expectation, dividing by $N\beta$, and using the quenched
pressure gives
\begin{equation}
	\frac{p_{\rm que}(\beta)}\beta
	\leq e_N
	\leq\frac{p_{\rm que}(\beta)}\beta
	+\frac{\log\mathsf L}{\beta N}\,.
	\label{8.5}
\end{equation}
Fix $\beta>0$ and let $N\to\infty$. Theorem \ref{mainthm2}
and \eqref{8.5} give
\[
\frac{p_{\rm SD}(\beta)}\beta
\leq\liminf_{N\to\infty}e_N
\leq\limsup_{N\to\infty}e_N
\leq\frac{p_{\rm SD}(\beta)}\beta+\frac{\log2}{2\beta}\,.
\]
Since $N^{-1}\log \mathsf L=\frac12\log2$, letting $\beta\to\infty$ after
the $N$-limit and using Corollary \ref{cor7.8} gives
\begin{equation}
	\lim_{N\to\infty}e_N
	=\lim_{\beta \to \infty} \frac{p_{\rm SD}(\beta)}{\beta}
	=\frac2q\int_0^\infty g_0(t)^q\dd t=e_{{\rm SD},q}\,.
	\label{8.6}
\end{equation}

Equation \eqref{8.6} gives
	\begin{equation*} 
		\frac{\E\lambda_1}{\sqrt N}
		=\sqrt{\frac{q2^q}{a}}\,e_N
		\longrightarrow\kappa_{{\rm SD},q}\,.
	\end{equation*}
	For all sufficiently large $N$, Proposition \ref{prop8.1}, applied with
	$t=\varepsilon\sqrt N/2$, gives
	\[
		\Pp\left\{
		\left|\frac{\lambda_1}{\sqrt N}-\kappa_{{\rm SD},q}\right|
		\geq\varepsilon\right\}
		\leq2\exp\pB{-
		\frac{(N-q+1)\varepsilon^2N}{8}}
		\leq\e^{-c_{q,\varepsilon}N^2}\,.
	\]
	The first Borel--Cantelli lemma applied along even $N$ proves \eqref{jjj}.  The estimate \eqref{kkk} is proved in Appendix
	\ref{sec11} below.

\subsection{Proof of Theorem \ref{mainthm3}}
\label{sec9}

Let $\beta=\beta_N\to\infty$. Dividing \eqref{8.4} by $N\beta$ gives, for every realization,
	\begin{equation}
		\sqrt{\frac{a}{q2^q}}\,
		\frac{\lambda_1}{\sqrt N}-\frac{\log2}{2\beta}
		\leq\frac1{N\beta}\log\trn\e^{\beta H}
		\leq\sqrt{\frac{a}{q2^q}}\,
		\frac{\lambda_1}{\sqrt N}\,.
		\label{9.1}
	\end{equation}
	Therefore \eqref{jjj} and $a\to1$ prove the second relation of \eqref{lll}. Taking expectations in \eqref{9.1} gives
	\begin{equation*}
		e_N-\frac{\log2}{2\beta}
		\leq\frac{p_{\rm que}(\beta)}\beta
		\leq e_N\,.
	\end{equation*}
	Together with \eqref{8.6}, this proves the first relation in \eqref{lll}.

\appendix

\section{Proof of \texorpdfstring{\eqref{kkk}}{the edge-constant bounds}}
\label{sec11}
In this section, we restore the $q$-subscripts and write
$g_{0,q}$ and $\rho_{0,q}$ for the objects denoted by $g_0$ and $\rho_0$
in Sections \ref{sec7} and \ref{sec8}. Put
	\begin{equation*}
		b_q=\sqrt{\frac{q}{2^{q-1}}}\,,\quad
		\phi_q(x)=2g_{0,q}(x/b_q)\,,\quad
		I_q=\int_0^\infty \phi_q(x)^q\,\dd x\,.
	\end{equation*}
	If $\mu_q$ is the pushforward of $\rho_{0,q}$ under $E\mapsto E/b_q$,
	then the definition of $g_{0,q}$ and \eqref{fourier2} give
	\begin{equation}
		\phi_q(x)=\int_0^\infty\e^{-xy}\mu_q(\dd y)\,,\quad
		\int_0^\infty y^2\mu_q(\dd y)=\frac2q\,.
		\label{9.39}
	\end{equation}
	A change of variables in \eqref{1.10} gives
	\begin{equation}
		q\kappa_{{\rm SD},q}=\sqrt2 I_q\,.
		\label{9.40}
	\end{equation}
	Thus it suffices to prove $q/(q+1)\leq I_q^2\leq1$; the asserted limit
	then follows by squeezing.

	For a function $f$ on $[0,\infty)$, let
		$(\mathcal F_{\sin}f)(v)=\int_0^\infty f(x)\sin(vx)\,\dd x$. Rescaling
	\eqref{7.21} gives
	\begin{equation}
			(\mathcal F_{\sin}\phi_q)(v)
		=\pB{v+\frac2q
			(\mathcal F_{\sin}\phi_q^{q-1})(v)}^{-1}\,.
		\label{9.41}
	\end{equation}
	We suppress the subscript $q$ until the end of the proof and put
	$u_\phi=\phi^{q-1}$. For $f\in\{\phi,u_\phi\}$, write
	$\widetilde f(x)=\sgn(x)f(|x|)$.

	Since $\phi$ is nonincreasing and $u_\phi\in L^1(0,\infty)$, we have
	$\phi(\infty)=0$. The function $\widetilde\phi$ has bounded variation
	and a jump of size $2$ at zero, so $\widetilde\phi'$ is a finite measure
	containing $2\delta_0$. We use the real-line Fourier convention
	\[
		(\mathcal F_{\R}r)(v)=\int_{\bb R}\e^{-\ii vt}r(t)\,\dd t\,,
		\quad r(t)=\frac{1}{2\pi}\int_{\bb R}\e^{\ii vt}
		(\mathcal F_{\R}r)(v)\,\dd v\,.
	\]
	Then \eqref{9.41} is equivalent in the sense of tempered distributions to
	\begin{equation}
		\widetilde\phi'-\frac1q
		\widetilde u_\phi*\widetilde\phi=2\delta_0\,.
		\label{9.42}
	\end{equation}
	To justify \eqref{9.42} at zero frequency, note that
	$\widetilde u_\phi\in L^1(\R)$, so the convolution
	$\widetilde u_\phi*\widetilde\phi$ is continuous and vanishes at infinity,
	while $(\widetilde u_\phi*\widetilde\phi)'
	=\widetilde u_\phi*\widetilde\phi'\in L^1(\R)$.
	The convolution theorem applied to this derivative gives the usual product
	rule at every nonzero frequency. The transforms of the two left-hand terms
	in \eqref{9.42} are then $2v\mathcal F_{\sin}\phi$ and
	$4q^{-1}(\mathcal F_{\sin}\phi)(\mathcal F_{\sin}u_\phi)$, whose sum is
	$2$ by \eqref{9.41}. Hence the Fourier transform of the residual in
	\eqref{9.42} is supported at the origin, so the residual is a polynomial
	distribution. But it is the sum of a finite measure and a continuous
	function vanishing at infinity; its pairing with every translate of a
	compactly supported test function therefore tends to zero. No nonzero
	polynomial distribution has this property, and \eqref{9.42} follows.

	For $x>0$, the regular part of \eqref{9.42} reads
	\begin{equation}
	\begin{split}
		q\phi'(x)=\int_0^x \phi(x-y)u_\phi(y)\,\dd y
		-\int_0^\infty \phi(x+y)u_\phi(y)\,\dd y-\int_0^\infty \phi(y)u_\phi(x+y)\,\dd y\,.
	\end{split}
		\label{9.43}
	\end{equation}
	The second moment in \eqref{9.39} makes $\phi'(0)$ finite. As
	$x\downarrow0$, dominated convergence makes the first integral in
	\eqref{9.43} vanish and each of the other two tend to $I_q$. Hence
	\begin{equation}
		-q\phi'(0)=2I_q\,.
		\label{9.44}
	\end{equation}

	Put $d_\phi=-\phi'\geq0$. The preceding facts give
	$d_\phi,u_\phi\in L^1\cap L^\infty$, so all integrals below are absolutely
	convergent. Differentiate \eqref{9.43}, integrate by parts in its last
	term, multiply by $\phi'=-d_\phi$, and integrate over $(0,\infty)$.
	The first convolution contributes
	$\int\!\!\int u_\phi(y)d_\phi(z)d_\phi(y+z)\,\dd y\dd z$, while the
	second contributes its negative.  Put
	\begin{equation*}
		\Delta_q\deq\int_0^\infty\int_0^\infty
		d_\phi(x)d_\phi(y)u_\phi(x+y)\,\dd x\dd y
		\geq0\,.
	\end{equation*}
	Using $\phi'(\infty)=0$ and
	$\int_0^\infty u_\phi\phi'=-1/q$, we obtain
	\begin{equation*}
		-\frac q2\phi'(0)^2=-\frac2q+\Delta_q\,.
	\end{equation*}
	Together with \eqref{9.44}, this gives
	\begin{equation}
		I_q^2=1-\frac q2\Delta_q\,.
		\label{9.48}
	\end{equation}

	Let $\mathsf Z_1,\ldots,\mathsf Z_{q+1}$ be independent with law $\mu_q$.
	Integrability of $u_\phi$ forces $\mu_q(\{0\})=0$, since
	$u_\phi(x)\geq\mu_q(\{0\})^{q-1}$. Put
	$\mathsf Z_{\rm tot}=\sum_{i=1}^{q+1}\mathsf Z_i$ and
	$r_i=\mathsf Z_i/\mathsf Z_{\rm tot}$. The Laplace representations are
	\[
		d_\phi(x)=\E\p{\mathsf Z_1\e^{-x\mathsf Z_1}}\,,\qquad
		u_\phi(x)=\E\e^{-x(\mathsf Z_3+\cdots+\mathsf Z_{q+1})}\,.
	\]
	Tonelli's theorem and exchangeability give
	\begin{equation*}
		\Delta_q=\E\frac{r_1r_2}{(1-r_1)(1-r_2)}\,,\quad \mbox{and} \quad 
		q(q+1)\Delta_q=\E\sum_{i\ne j}
		\frac{r_ir_j}{(1-r_i)(1-r_j)}\,.
	\end{equation*}
	For $i\ne j$, the inequality $r_i+r_j\leq1$ gives
	\[
		\frac{r_ir_j}{(1-r_i)(1-r_j)}
		\leq\frac{r_ir_j}{1-r_i}+\frac{r_ir_j}{1-r_j}\,.
	\]
	Each ordered-pair sum on the right equals $\sum_i r_i=1$. Consequently,
	\begin{equation*}
		0\leq\Delta_q\leq\frac{2}{q(q+1)}\,.
	\end{equation*}
	Together with \eqref{9.48}, this gives $\frac{q}{q+1}\leq I_q^2\leq1$. Combining this with \eqref{9.40} proves \eqref{kkk}.

	\end{document}